\documentclass[letterpaper, 10 pt, onecolumn]{ieeeconf}

\usepackage[utf8]{inputenc}
\usepackage[american]{babel}
\usepackage[babel]{csquotes}

\makeatletter \let\NAT@parse\undefined \makeatother 

 \usepackage{natbib}

\usepackage{amscd}
\usepackage{amsmath}
\usepackage{amssymb}

\newtheorem{theorem}{Theorem}
\newtheorem{definition}{Definition}
\newtheorem{example}{Example}
\newtheorem{assumption}{Assumption}
\newtheorem{proposition}{Proposition}
\newtheorem{lemma}{Lemma}
\newtheorem{remark}{Remark}
\newtheorem{algo}{Algorithm}

\usepackage[usenames,dvipsnames]{xcolor}
 \usepackage{array}
\usepackage{xspace}
\usepackage{paralist}
\usepackage{xifthen}
\usepackage{url}
\usepackage{pgf}
\usepackage{tikz}
\usetikzlibrary{trees,decorations,arrows,automata,shadows,positioning,plotmarks,backgrounds,shapes}
\usetikzlibrary{calc,matrix,fit,petri,decorations.pathmorphing,patterns}

\newcommand{\BR}[1]{\left( #1 \right)}
\newcommand\set[1]{\{ #1 \}}
\newcommand\tuple[1]{{\langle #1 \rangle}}

\newcommand{\REFlem}[1]{\text{Lem.~\ref{#1}}}
\newcommand{\REFthm}[1]{\text{Thm.~\ref{#1}}}
\newcommand{\REFdef}[1]{Def.~\ref{#1}}
\newcommand{\REFalg}[1]{Alg.~\ref{#1}}

\newcommand{\REFexp}[1]{Expl.~\ref{#1}}
\newcommand{\REFsec}[1]{Sec.~\ref{#1}}
\newcommand{\REFprop}[1]{Prop.~\ref{#1}}
\newcommand{\REFfig}[1]{Fig.~\ref{#1}}
\newcommand{\REFass}[1]{Ass.~\ref{#1}}
\newcommand{\REFapp}[1]{App.~\ref{#1}}

\newcommand{\eqrefR}[2]{\ensuremath{\langle\eqref{#1}.\text{right}\ifthenelse{\isempty{#2}}{}{.#2}\rangle}}
\newcommand{\eqrefL}[2]{\ensuremath{\langle\eqref{#1}.\text{left}\ifthenelse{\isempty{#2}}{}{.#2}\rangle}}

\let\exampleOrig\endexample
\def\endexample{\hspace*{0pt}\hfill$\triangleleft$\exampleOrig}

\let\definitionOrig\enddefinition
\def\enddefinition{\hspace*{0pt}\hfill$\triangleleft$\definitionOrig}
%
%



\newcommand{\ON}[1]{\operatorname{#1}}

\def\clap#1{\hbox to 0pt{\hss#1\hss}}

%


\newcommand{\DiCases}[4]{\ensuremath{\begin{cases}%
#1,&~#2\\%
#3,&~#4%
\end{cases}}}%
\newcommand{\TriCases}[6]{\ensuremath{\begin{cases}%
#1,&~#2\\%
#3,&~#4\\%
#5,&~#6%
\end{cases}}}%
%

\makeatletter 
\newif\ifFIRST
\newif\ifSECOND
\let\LISTOP\relax
\newcommand{\List}[4][\;]{#3#1%
	\FIRSTtrue
	\@for\i:=#2\do{%
	\ifFIRST\LISTOP{\i}\FIRSTfalse\else,\LISTOP{\i}\fi%
	}%
	#1#4%
	\let\LISTOP\relax
}
\makeatother
\newcounter{DINGLIST}
\stepcounter{DINGLIST}
\newcommand{\markD}[3][\;\;]{\text{\ding{\the\numexpr171+\theDINGLIST}\stepcounter{DINGLIST}}#1#3}

\newcommand{\SUCHTHAT}{s.t.\xspace}

\makeatletter

\newcommand{\propNeg}{\@ifstar\propNegStar\propNegNoStar}
\newcommand{\propNegStar}[1]{\ensuremath{\left(\propNegNoStar{#1}\right)}}
\newcommand{\propNegNoStar}[2][\cdot]{\ensuremath{\neg\ifthenelse{\isempty{#2}}{#1}{#2}}}

\newcommand{\propConj}{\@ifstar\propConjStar\propConjNoStar}
\newcommand{\propConjStar}[2]{\ensuremath{\left(\propConjNoStar{#1}{#2}\right)}}
\newcommand{\propConjNoStar}[3][\cdot]{\ensuremath{\ifthenelse{\isempty{#2}}{#1}{#2}\wedge\ifthenelse{\isempty{#3}}{#1}{#3}}}

\newcommand{\propDisj}{\@ifstar\propDisjStar\propDisjNoStar}
\newcommand{\propDisjStar}[2]{\ensuremath{\left(\propDisjNoStar{#1}{#2}\right)}}
\newcommand{\propDisjNoStar}[3][\cdot]{\ensuremath{\ifthenelse{\isempty{#2}}{#1}{#2}\vee\ifthenelse{\isempty{#3}}{#1}{#3}}}

\newcommand{\propImp}{\@ifstar\propImpStar\propImpNoStar}
\newcommand{\propImpStar}[2]{\ensuremath{\left(\propImpNoStar{#1}{#2}\right)}}
\newcommand{\propImpNoStar}[3][\cdot]{\ensuremath{\ifthenelse{\isempty{#2}}{#1}{#2}\Rightarrow\ifthenelse{\isempty{#3}}{#1}{#3}}}

\newcommand{\propAequ}{\@ifstar\propAequStar\propAequNoStar}
\newcommand{\propAequStar}[2]{\ensuremath{\left(\propAequNoStar{#1}{#2}\right)}}
\newcommand{\propAequNoStar}[3][\cdot]{\ensuremath{\ifthenelse{\isempty{#2}}{#1}{#2}\Leftrightarrow\ifthenelse{\isempty{#3}}{#1}{#3}}}



\newcommand{\AllQ}{\@ifstar\AllQStar\AllQNoStar}
\newcommand{\AllQStar}[3][\;]{\ensuremath{\left(\forall #2#1.#1#3\right)}}
\newcommand{\AllQNoStar}[3][\;]{\ensuremath{\forall #2#1.#1#3}}
\newcommand{\AllQu}{\@ifstar\AllQuStar\AllQuNoStar}
\newcommand{\AllQuStar}[3][\;]{\ensuremath{\left(\forall^{\infty} #2#1.#1#3\right)}}
\newcommand{\AllQuNoStar}[3][\;]{\ensuremath{\forall^{\infty} #2#1.#1#3}}

\newcommand{\ExQ}{\@ifstar\ExQStar\ExQNoStar}
\newcommand{\ExQStar}[3][\;]{\ensuremath{\left(\exists #2#1.#1#3\right)}}
\newcommand{\ExQNoStar}[3][\;]{\ensuremath{\exists #2#1.#1#3}}

\newcommand{\NExQ}{\@ifstar\NExQStar\NExQNoStar}
\newcommand{\NExQStar}[3][\;]{\ensuremath{\left(\nexists #2#1.#1#3\right)}}
\newcommand{\NExQNoStar}[3][\;]{\ensuremath{\nexists #2#1.#1#3}}

\newcommand{\UniqueQ}{\@ifstar\UniqueQStar\UniqueQNoStar}
\newcommand{\UniqueQStar}[3][\;]{\ensuremath{\left(\exists! #2#1.#1#3\right)}}
\newcommand{\UniqueQNoStar}[3][\;]{\ensuremath{\exists! #2#1.#1#3}}

\newenvironment{propConjA}{\left(\def\unionAtest{1}\begin{array}{@{\if\unionAtest1\gdef\unionAtest{0}\phantom{\wedge}\else\wedge\fi}l@{}}}{\end{array}\right)}

\newenvironment{propDisjA}{\left(\def\unionAtest{1}\begin{array}{@{\if\unionAtest1\gdef\unionAtest{0}\phantom{\vee}\else\vee\fi}l@{}}}{\end{array}\right)}

  \newlength{\SFS@HEIGHT}
  \newlength{\SFS@WIDTH}
  \newcommand{\SplitX}[2]{
	    \settoheight{\SFS@HEIGHT}{$#2$}
	    \settowidth{\SFS@WIDTH}{$#2$}
	    \mbox{\begin{tikzpicture}[baseline=(current bounding box.center)]
	    \node[] (E) at (0,0) {$#1$};
	    \node[inner sep=0pt] (F) at ($(E.south west)+(1ex,-1ex)+(3ex+.5\SFS@WIDTH,-\SFS@HEIGHT)$) {$#2$};
	    \node[] (E) at (0,0) {\phantom{$#1$}};
	    \draw[fill] ($(E.east)+(1ex,0ex)$) circle (.2ex);
	    \draw[-] ($(E.east)+(1ex,0ex)$) -- ($(E.south east)+(1ex,-0.5ex)$) -- ($(E.south west)+(1ex,-0.5ex)$) -- ($(E.south west)+(1ex,-1ex)-(0,\SFS@HEIGHT)$) -- ($(E.south west)+(2.5ex,-1ex)-(0,\SFS@HEIGHT)$);
	    \draw[fill] ($(E.south west)+(2.5ex,-1ex)-(0,\SFS@HEIGHT)$) circle (.2ex);
	    \end{tikzpicture}}}
  \newcommand{\SplitS}[2]{
	    \settoheight{\SFS@HEIGHT}{$#2$}
	    \settowidth{\SFS@WIDTH}{$#2$}
	    \mbox{\begin{tikzpicture}[baseline=(current bounding box.center)]
	    \node[] (E) at (0,0) {$#1$};
	    \node[inner sep=0pt] (F) at ($(E.south west)+(1ex,0.5ex)+(3ex+.5\SFS@WIDTH,-\SFS@HEIGHT)$) {$#2$};
	    \end{tikzpicture}}}	    
  \newcommand{\AllQSplit}[2]{\SplitX{\forall\;#1\;.}{#2}}

%
%



\newcommand{\Set}[2][]{\List[#1]{#2}{\{}{\}}}
\newcommand{\VSet}[2][]{\let\LISTOP\val\List[#1]{#2}{\{}{\}}}

\newcommand{\Tuple}[2][]{\List[#1]{#2}{(}{)}}
\newcommand{\VTuple}[2][]{\let\LISTOP\val\List[#1]{#2}{(}{)}}




\newcommand{\UNION}{\@ifstar\UNIONStar\UNIONNoStar}
\newcommand{\UNIONStar}[2]{\ensuremath{\left(\UNIONNoStar{#1}{#2}\right)}}
\newcommand{\UNIONNoStar}[2]{\ensuremath{\ifthenelse{\isempty{#1}}{\cdot}{#1}\cup\ifthenelse{\isempty{#2}}{\cdot}{#2}}}

\newcommand{\UNIOND}{\@ifstar\UNIONDStar\UNIONDNoStar}
\newcommand{\UNIONDStar}[2]{\ensuremath{\left(\UNIONDNoStar{#1}{#2}\right)}}
\newcommand{\UNIONDNoStar}[2]{\ensuremath{\ifthenelse{\isempty{#1}}{\cdot}{#1}\uplus\ifthenelse{\isempty{#2}}{\cdot}{#2}}}

\newcommand{\SETMINUS}{\@ifstar\SETMINUSStar\SETMINUSNoStar}
\newcommand{\SETMINUSStar}[2]{\ensuremath{\left(\SETMINUSNoStar{#1}{#2}\right)}}
\newcommand{\SETMINUSNoStar}[2]{\ensuremath{\ifthenelse{\isempty{#1}}{\cdot}{#1}\setminus\ifthenelse{\isempty{#2}}{\cdot}{#2}}}

\newcommand{\INTERSECT}{\@ifstar\INTERSECTStar\INTERSECTNoStar}
\newcommand{\INTERSECTStar}[2]{\ensuremath{\left(\INTERSECTNoStar{#1}{#2}\right)}}
\newcommand{\INTERSECTNoStar}[2]{\ensuremath{\ifthenelse{\isempty{#1}}{\cdot}{#1}\cap\ifthenelse{\isempty{#2}}{\cdot}{#2}}}

\newcommand{\CARTPROD}{\@ifstar\CARTPRODStar\CARTPRODNoStar}
\newcommand{\CARTPRODStar}[2]{\ensuremath{\left(\CARTPRODNoStar{#1}{#2}\right)}}
\newcommand{\CARTPRODNoStar}[2]{\ensuremath{\ifthenelse{\isempty{#1}}{\cdot}{#1}\times\ifthenelse{\isempty{#2}}{\cdot}{#2}}}

\newcommand{\FINCOUNT}{\@ifstar\FinCountStar\FinCountNoStar}
\newcommand{\FinCountStar}[1]{\ensuremath{\#(\ifthenelse{\isempty{#1}}{\cdot}{#1})}}
\newcommand{\FinCountNoStar}[1]{\ensuremath{\#\left(\ifthenelse{\isempty{#1}}{\cdot}{#1}\right)}}

\makeatother 

\newcommand{\sconc}{\cdot}




\newcommand{\inps}{\hspace{-0.1cm}\in\hspace{-0.1cm}}
\newcommand{\eqps}{\hspace{-0.1cm}=\hspace{-0.1cm}}
\newcommand{\plps}{\hspace{-0.05cm}+\hspace{-0.05cm}}
\newcommand{\mips}{\hspace{-0.05cm}-\hspace{-0.05cm}}

\newcommand{\parfun}{\ensuremath{\ON{\rightharpoonup}}}
\newcommand{\fun}{\ensuremath{\ON{\rightarrow}}}
\newcommand{\SetComp}[3][]{\{#1#2#1\mid#1#3#1\}}
\newcommand{\SetCompX}[3][]{\left\{#1#2#1\middle\vert#1#3#1\right\}}

\newcommand{\Nb}{\ensuremath{\mathbb{N}}} 

\newcommand{\twoup}[1]{\ensuremath{2^{#1}}}

\newcommand{\dom}[1]{\ensuremath{\mathrm{dom}(#1)}}
\newcommand{\domp}[1]{\ensuremath{\mathrm{dom}^+(#1)}}

\renewcommand{\ll}[1]{\ensuremath{|_{[#1]}}}

\newcommand{\maxk}[1]{\ensuremath{\mathsf{end}({#1})}}
\newcommand{\maxw}[1]{\ensuremath{\lceil#1\rceil}}


\newcommand{\Y}[1]{\ensuremath{Y^{#1}}}
\newcommand{\Yli}[2]{\ensuremath{Y^{#1}_{#2\rceil}}}
\newcommand{\Yla}[2]{\ensuremath{Y^{#1}_{#2\lfloor}}}

\newcommand{\X}[1]{\ensuremath{X^{#1}}}

\newcommand{\YI}[2]{\ensuremath{\ifthenelse{\isempty{#2}}{Y^{#1}_I}{Y^{#1}_{I,#2}}}}
\newcommand{\XI}[2]{\ensuremath{\ifthenelse{\isempty{#2}}{X^{#1}_I}{X^{#1}_{I,#2}}}}
\newcommand{\I}[1]{\ensuremath{\mathcal{I}^{#1}}}

\newcommand{\rx}[2]{\ensuremath{\mathfrak{r}^{#1}_{#2}}}

\newcommand{\G}{\ensuremath{G}}

\newcommand{\Glall}{\ensuremath{\mathbb{G}}}

\newcommand{\TrS}[2]{\ensuremath{\ifthenelse{\isempty{#2}}{\rho^{#1}}{\rho_{#2}^{#1}}}}
\newcommand{\TrE}[2]{\ensuremath{\ifthenelse{\isempty{#2}}{\delta^{#1}}{\delta_{#2}^{#1}}}}
\newcommand{\TrSt}[2]{\ensuremath{\ifthenelse{\isempty{#2}}{\tilde{\rho}^{#1}}{\tilde{\rho}_{#2}^{#1}}}}
\newcommand{\TrEt}[2]{\ensuremath{\ifthenelse{\isempty{#2}}{\tilde{\delta}^{#1}}{\tilde{\delta}_{#2}^{#1}}}}
\newcommand{\PhiallR}{\ensuremath{[\varphi]}}
\newcommand{\Alphaall}{\ensuremath{[\eass]}}
\newcommand{\rhoall}{\ensuremath{[\breve{p}]_{\play{}}}}
\newcommand{\phisconc}[3]{\ensuremath{\phi^{#1}_{#2}({#3})}}

\newcommand{\eass}{\ensuremath{\zeta}}

\newcommand{\Compl}{\ensuremath{\mathsf{CompliantPlays}}}

\newcommand{\WinPlays}{\ensuremath{\mathsf{WinningPlays}}}
\newcommand{\WinStrat}{\ensuremath{\mathsf{WinningStrategies}}}
\newcommand{\AdmStrat}{\ensuremath{\mathsf{AdmissibleStrategies}}}
\newcommand{\StrategyE}{\ensuremath{\mathcal{S}^e}}
\newcommand{\StrategyS}{\ensuremath{\mathcal{S}^s}}

\newcommand{\play}[1]{\ensuremath{\pi^{#1}}}
\newcommand{\plx}[1]{\ensuremath{x^{#1}}}
\newcommand{\ply}[1]{\ensuremath{y^{#1}}}
\newcommand{\pld}[1]{\ensuremath{\pi^{#1}_{\downarrow}}}
\newcommand{\plxd}[1]{\ensuremath{x^{#1}_{\downarrow}}}

\newcommand{\Play}[1]{\ensuremath{\mathcal{\G}^{#1}}}

\newcommand{\plP}[1]{\ensuremath{\breve{\pi}^{#1}}}

\newcommand{\plPd}[1]{\ensuremath{\breve{\pi}^{#1}_{\downarrow}}}
\newcommand{\plPdm}[1]{\ensuremath{\breve{p}^{#1}_{\downarrow}}}

\newcommand{\plPx}[1]{\ensuremath{\breve{x}^{#1}}}
\newcommand{\plPxd}[1]{\ensuremath{\breve{x}^{#1}_{\downarrow}}}

\newcommand{\plPy}[1]{\ensuremath{\breve{y}^{#1}}}
\newcommand{\PlP}{\ensuremath{\breve{\Pi}}}

\newcommand{\plg}[1]{\ensuremath{\breve{\gamma}^{#1}}}

\newcommand{\f}[1]{\ensuremath{f^{#1}}}
\newcommand{\g}[1]{\ensuremath{g^{#1}}}

\newcommand{\Sol}[2]{\ensuremath{\ON{Sol}^{#1}\ifthenelse{\isempty{#2}}{}{\BR{#2}}}}

\newcommand{\Rx}[1]{\ensuremath{\alpha_{e}^{#1}}}
\newcommand{\Ry}[1]{\ensuremath{\alpha_{s}^{#1}}}
\newcommand{\Rxup}[1]{\ensuremath{\alpha_{e}^{#1^{\uparrow}}}}

\newcommand{\Ryup}[1]{\ensuremath{\alpha_{s}^{#1^{\uparrow}}}}

\newcommand{\lmax}{\ensuremath{L}}

\newcommand{\GotStuck}[1]{\ensuremath{\textcolor{BrickRed}{\ON{GotStuck}^{#1}}}}

\newcommand{\UnReal}[1]{\ensuremath{\textcolor{BrickRed}{\ON{UnRealizable}^{#1}}}}
\newcommand{\Win}[1]{\ensuremath{\textcolor{BrickRed}{\ON{Win}^{#1}}}}
\newcommand{\Done}[1]{\ensuremath{\textcolor{BrickRed}{\ON{Done}^{#1}}}}

\IEEEoverridecommandlockouts

\title{\LARGE \bf Dynamic Hierarchical Reactive Controller Synthesis}

\author{Anne-Kathrin Schmuck, Rupak Majumdar
\thanks{A.-K. Schmuck and Rupak Majumdar are with the Max Planck Institute for Software Systems (MPI-SWS), Kaiserslautern, Germany. {\tt\small \{akschmuck,rupak\}@mpi-sws.org}}
}

\begin{document}

%

\maketitle

\begin{abstract}
In the formal approach to reactive controller synthesis, a symbolic controller for a possibly hybrid system 
is obtained by algorithmically computing a winning strategy in a two-player game. 
Such game-solving algorithms scale poorly as the size of the game graph increases. 
However, in many applications,
the game graph has a natural hierarchical structure. 
In this paper, we propose a modeling formalism and a synthesis algorithm that exploits this hierarchical structure for more scalable synthesis. 

We define local games on hierarchical graphs 
as a modeling formalism which decomposes a large-scale reactive synthesis problem in two dimensions.
First, the construction of a hierarchical game graph introduces abstraction layers, where 
each layer is again a two-player game graph.
Second, every such layer is decomposed into multiple local game graphs, 
each corresponding to a node in the higher level game graph. 
While local games have the potential to reduce the state space for controller synthesis,
they lead to more complex synthesis problems where strategies computed for one local game can impose
additional requirements on lower-level local games. 

Our second contribution is a procedure to construct a dynamic controller for local game graphs over hierarchies.
The controller computes assume-admissible winning strategies that satisfy local specifications
in the presence of environment assumptions, and dynamically updates specifications and strategies due
to interactions between games at different abstraction layers at each step of the play.
We show that our synthesis procedure is sound: the controller constructs a play which satisfies all local specifications.
We illustrate our results through an example controlling an autonomous robot in a 
known, multistory building.
\end{abstract}

\section{Introduction}

Algorithmic reactive synthesis has recently emerged as a robust methodology to design
correct-by-construction controller for specifications given in temporal logics 
\citep[see, e.g.,][]{GirardPappas2009,TabuadaBook,KloetzerBelta_2008,WolffTopcuMurray_2013b,WongFinucaneKressGazit_2013}.
In this technique, one solves a two-player discrete-time game on a graph between the \emph{system}
and the \emph{environment} players, where the winning condition is specified in linear-time temporal logic.
The game graph is usually obtained as a discrete abstraction of the underlying, possibly continuous or hybrid, dynamics. 
A winning strategy for the system player in such a game can be computed by algorithmic techniques from reactive synthesis \citep{Zielonka,EJ91}.
Such a system winning strategy gives a discrete controller, which can usually be refined
to a continuous controller using primitives from continuous control. 
This controller synthesis methodology has been implemented in symbolic tools \citep{TuLiP,Pessoa,LTLMoP} and was
successfully applied in a number of case studies, e.g., by \citet{WongFinucaneKressGazit_2013,WongpiromsarnTopcuMurray_2010}.

The two major concerns in the application of reactive synthesis to large problems is 
\begin{inparaenum}[(i)]
 \item the poor scalability of
the symbolic game solving algorithms with increasing size of the game graph, and
\item the limited existence of winning strategies against adversarial environment players in realistic settings.
\end{inparaenum}
In this paper, we address these challenges by extending the scope of reactive synthesis for control by
\begin{inparaenum}[(i)]
 \item introducing \emph{local game graphs over hierarchies} as a new decomposed model,
 \item formalizing \emph{hierarchical reactive games} over such models, and 
 \item proposing a sound \emph{reactive controller synthesis algorithm} for such games. 
\end{inparaenum}
This algorithm allows for \emph{dynamic specification changes} and uses the construction of \emph{assume-admissible winning strategies}  
\cite{BrenguierRaskinSankur_ArXiv_2015} to explicitly model and use environment assumptions.

\paragraph{Local Game Graphs over Hierarchies}\label{sec:intro:LGG}
The modeling formalism introduced in this paper allows to exploit the intrinsic
\emph{hierarchy} and \emph{locality} of a given large-scale system.
This decomposes the controller synthesis problem into multiple small ones.
Here, hierarchy means that the game graph allows for the introduction of abstract layers. 
Locality means that a state at a higher layer naturally corresponds to a sub-arena of the game graph at the next lower 
layer which is independent from all the other games at the same layer.

As an example, consider an autonomous robot traversing the floors of a building.
The lowest layer of the game graph, the game under consideration in existing reactive synthesis techniques,
would consist of states defined by grids giving the location and velocity of the robot in each room and each floor
of the building, together with additional predicates, such as
the location of obstacles, whether the robot is carrying something, or the open-closed status of each door.
However, there is a natural hierarchy of abstractions: at the highest layer, we care only about the floors and may
ask the robot to move from one floor to another;
in the next layer, we would like to know the specific room it is in and specify which room to go next,
and only within the context of a room, we may care about 
where exactly the robot is and where it has to go next.  
To model this hierarchy, we introduce a set of layers on top of a game graph, each being a game graph itself,
where a state at a higher layer (e.g. a room) corresponds to a sub-arena of the game graph at the next 
lower layer (i.e., all states located inside this room), modeling locality within the hierarchy. 
%

Such hierarchical and local decompositions are also heuristically applied in robotics. 
Examples are general modeling frameworks, such as hierarchical task-networks (HTN) \citep{HTN_1995} 
or Object-Action Complexes (OAC) \citep{OAC_2009}, or particular
software architectures for incorporating long term tasks and short time motion planning for robots 
\citep{KaelblingLozanoPerez_2011,SrivastavaFangRianoChitnisRusselAbbeel_2014,StockMansouriPecoraHertzberg_2015}. 
One could view our abstraction layers, their interaction, and the system dynamics as an equivalent formalism to model
task networks. 
Our controller synthesis algorithms should also apply to design controllers in these formalisms.
To the best of our knowledge, the problem of correct-by-construction synthesis for temporal logic specifications (beyond reachability)
in the presence of environment assumptions has not been considered by these other formalisms.

Hierarchical approaches for control exist for other correct-by-construction controller synthesis techniques in the control community, 
such as supervisory control \citep[e.g.,][]{schmidt2008nonblocking}, hybrid control \citep[e.g.,][]{RaischMoor2005}, 
or continuous control \citep[e.g.,][]{PappasLafferriereSasty_2000},  but these can usually not handle temporal logic specifications.

In many large-scale projects using reactive controller synthesis, such as autonomous vehicles \citep{HessAlthoffSattel_2014,WongpiromsarnTopcuMurray_2012}
and autonomous flight control \citep{KooSastry_2002}, similar hierarchical and local decompositions are implicitly and informally
performed. However, there is no clear theoretical model
connecting ``low-layer'' reactive control and ``higher layer'' task planning in their work, which is provided by our approach.

\paragraph{Hierarchical Reactive Games}\label{sec:intro:HRG}

To effectively use the constructed hierarchies of local game graphs for reactive controller synthesis, 
we assume that the specification is also decomposed into a set of local requirements, each restricted to one sub-arena of a particular layer, 
together with one \enquote{global} game at the highest layer. 
While such a decomposition is not guaranteed to exist for a given specification, it 
is usually quite natural to exist for specifications over large scale systems with intrinsic hierarchy and locality. 
For example, for the robot, one may consider the specifications:
\begin{inparaenum}[(i)]
 \item a floor-layer task \enquote{visit all floors},
 \item a room-layer task \enquote{visit all rooms} for each floor, and 
 \item a low layer task \enquote{if there is an empty bottle [in the current room], reach it and pick it up} for every room.
\end{inparaenum}

Synthesizing winning strategies for local games over hierarchies w.r.t.\ such sets of local specifications becomes 
challenging due to the interplay between layers both in a bottom-up and a top-down manner. 
The top-down interplay results because applying a strategy in a higher layer introduces additional specifications for 
the lower layer. 
For example, a requested move from one room to an adjacent one requires the local game in this room 
to fulfill a reachability specification in addition to its local specification.
The bottom-up interplay results from the fact that moves in the lowest layer game correspond to moves in all higher layers which might change the strategy.
For example, consider a room with two doors to two different adjacent rooms. 
The higher layer strategy may initially pick one door to continue. 
However, if this door gets closed before it was reached in the lower layer game, the higher layer strategy might ask to reach the second door instead.
Thus, in each local game, winning objectives are generated \emph{dynamically}, based on the strategy at a higher layer, the 
local specification for the local game and the current system and environment state in the lowest layer.

Intuitively, such interactive hierarchical games are similar to pushdown and modular games \citep{Walukiewicz96,AlurLaTorreMadhusudan_2003b,DeCrescenzoLaTorre_2013}, where the local state and the stack determine which (single) local game is played at a particular time point. In contrast, we always play one local game in every layer simultaneously, where visited states in different layers are projections of one another. Therefore, a move in one layer has to be correlated with the games at all other layers at all time steps, giving the dynamic interaction described above. 

Our work also relates naturally to abstraction and refinement techniques in game solving, \citep[e.g.,][]{CC77,HenzingerMajumdarMangRaskin_2000,AbadiLamport_1991}, which map \enquote{concrete} game structures with \enquote{abstract} ones with more abstract timing, to solve a single game for a global specification using different abstraction layers.
In comparison, we propose a hierarchical structure where every system state is refined to a whole new local sub-game, having its own specification. Therefore, the game in the higher layer does only proceed for one step once the lower layer local sub-game is completed. In this sense we are "stitching" together solutions of local games in the lowest layer in a particular way which is determined by higher level games, to obtain a solution to the global game.

%

\paragraph{Dynamical Controller Synthesis}
Given the hierarchical reactive games described above, we propose a reactive controller synthesis algorithm 
to solve such games, which allows for \emph{dynamic specification changes} at each step of the play. 
Intuitively, the controller solves the dynamically constructed local games online and \enquote{stitches} 
their solutions together following the rules of the hierarchical game.
Notice that a strategy computed at a level imposes additional conditions on games at lower levels; thus,
we use a dynamic controller synthesis algorithm that updates the strategies as the game progresses.

In principle, any algorithm which calculates a winning strategy for a two-player game can be used as a building 
block to solve local games \citep[e.g.,][]{Zielonka,EJ91,KV01,Finkbeiner,KW12}.
However, these algorithms calculate winning strategies against \emph{any} environment behavior. 
In most applications, such as our robot example, the requirement that the system wins against any environment strategy is 
too strong. 
For instance, in the robot example it is possible, but very unlikely, that an employee keeps an office door closed forever to prevent 
the robot to fulfill its task. 
Therefore, recently, assumptions on the environment behavior, which model \enquote{likely}
behaviors of the latter, were considered to constrain the synthesis problem 
(see \cite{BloemEhlersJacobsKoenighofer_2014} and \cite{BrenguierRaskinSankur_ArXiv_2015} for a detailed overview of recent results). 
Intuitively, the constrained synthesis problem then asks if the system can win provided that the environment only 
behaves according to its assumptions. 
One type of strategies solving this problem are assume-admissible winning strategies by \cite{BrenguierRaskinSankur_ArXiv_2015}. 
As this is the most expressive available technique to deal with environment assumptions known by the 
authors, we use their synthesis algorithm as a building block in our algorithm.

We prove that, whenever the environment meets its assumptions and all dynamically generated local games have a 
solution, our dynamical synthesis algorithm generates a winning hierarchical play for a given specification, 
i.e., the algorithm is sound. 
If these assumptions do not hold, we show that the play gets stuck but does not violate the specification up to this point. 

The dynamic nature of our controller is also similar to the receding horizon strategies proposed 
by \citet{WongpiromsarnTopcuMurray_2012,VasileBelta_2014}, which translate long term goals into current 
local reachability specifications. 
This approach allows for a particular two-layer hierarchy and uses time horizons to decompose the synthesis problem locally. 
However, the general intrinsic hierarchical and local decomposability of a synthesis problem and the 
interaction of multiple abstract games is not formally exploited.
In our presentation, our control synthesis algorithm solves local games completely; 
however, we can also use a receding horizon controller for each local game.

This paper was motivated by a systems project to build 
an end-to-end autonomous robotic telepresence system. For the scale of this model, existing reactive synthesis techniques would not work.
However, the overall problem has a natural decomposition captured by our proposed model.
While this paper focuses on the theoretical foundations of such a formal model and its reactive controller synthesis, we will discuss the implementation and systems aspects of our technique in a different paper.

\section{Preliminaries}\label{sec:Prelim}

In this section we first introduce notation and recall existing results from reactive synthesis. 
Then we discuss a detailed example to motivate our work.
%

%
%

\subsection{Reactive Synthesis Revisited}



%

\paragraph{Notation}
For a set $W$, we denote by $W^*$, $W^+$, and $W^\omega$ the set of finite sequences, non-empty finite sequences,
and infinite sequences, respectively, over $W$.
We write  $W^\infty = W^* \cup W^\omega$. 
For $w\in W^*$, we write $|w|$ for the length of $w$; the length of $w\in W^\omega$ is $\infty$. 
We define $\dom{w} = \Set{0,\ldots, |w|-1}$ if $w\in W^*$, and $\dom{w} = \Nb$ if $w\in W^\omega$. We denote by $\domp{w}=\SETMINUS{\dom{w}}{\Set{0}}$ the positive domain of $w$.
For $k\in \dom{w}$ we write $w(k)$ for the $k$th symbol of $w$, $\maxw{w}=w(|w|-1)$ for the last symbol of $w$, 
and $w|_{[0,k]}$ for the restriction of $w$ to the domain $[0,k]$. Furthermore, $w\sconc w'$ for $w\in W^*$ and $w'\in W^{\infty}$ denotes the concatenation of two strings. 
The \textit{prefix relation} on strings is defined by $w\sqsubseteq w'$ if
${\ExQ{w''\in W^*}{w\sconc w''=w'}}$. 
Given a set of strings $\varphi\subseteq W^\infty$, we denote by $\overline{\varphi}=\varphi\cup\SetComp{w\in W^*}{\ExQ{w'\in \varphi}{w\sqsubseteq w'}}$ the set of strings in $\varphi$ and all their \emph{finite} prefixes. Slightly abusing notation, we denote by $\overline{w}$ the set $\overline{\Set{w}}$ of all prefixes of the string $w\in W^\infty$.
\paragraph{Two-Player Games}
 
A two-player \emph{game graph} 
$\G=\Tuple{\X{},\Y{},\TrE{}{},\TrS{}{}}$
between environment and system consists of  
a set of environment states $\X{}$,
a set of system states $\Y{}$, 
an environment transition map $\TrE{}{}:~\X{}\times\Y{}\rightarrow \twoup{\X{}}$, and 
a system transition map $\TrS{}{}:~\X{}\times\Y{}\rightarrow \twoup{\Y{}}$.
We assume $\G$ is serial, i.e., $\TrE{}{}$ and $\TrS{}{}$ map each input
to non-empty sets. 
%
A sequence $\play{}\in \BR{\X{}\times\Y{}}^\infty$ 
with $\play{}(k)=\Tuple{x(k),y(k)}$ for all $k\in\dom{\play{}}$ is 
called a \emph{play} in $\G$ 
if 
 \begin{align}
 &\AllQ{k\in \domp{\pi}}{
 \begin{propConjA}
  x(k)\in\TrE{}{}\BR{x(k-1),y(k-1)}\\
  y(k)\in\TrS{}{}\BR{x(k),y(k-1)}
  \end{propConjA}}.\label{equ:playp_def:b}
  \end{align}
A play $\play{}$ is \emph{finite} if $|\play{}|<\infty$ and \emph{infinite} otherwise.
The set of all plays is denoted by $\Play{}$.

We model a \emph{winning condition} in a two-player game as a set of plays $\varphi \subseteq \Play{}$. 
This set can be represented in different ways, e.g., by an LTL formula or by an $\omega$-automaton. 
While our results do not assume a particular representation, 
the latter will determine the algorithm needed to 
solve the two-player game.
 
Given a game graph $\G$, a set of initial strings $\I{}=\Tuple{\X{}\times\Y{}}^+\subseteq\Play{}$
and a winning condition $\varphi \subseteq \Play{}$, 
the tuple $\Tuple{\G,\I{},\varphi}$ is called a \emph{game} 
on $\G$ w.r.t. $\I{}$ and $\varphi$. 
A play $\play{}\in\Play{}$ is \emph{winning} (resp. \emph{possibly winning}) 
for $\Tuple{\G,\I{},\varphi}$
if there exists an $n\in\dom{\play{}}$ s.t. $\play{}|_{[0,n]} \in \I{}$ and 
 $\play{}\in \varphi$ (resp. $\play{}\in \overline{\varphi}$). 
 We denote the set of all winning and possibly winning plays for $\Tuple{\G,\I{},\varphi}$ by $\WinPlays\Tuple{\G,\I{},\varphi}$ and $\WinPlays\Tuple{\G,\I{},\overline{\varphi}}$, respectively.

\paragraph{Strategies}
A \emph{system strategy} is a \emph{partial} function $\f{}: (\X{}\times\Y{})^+ \times \X{} \parfun \Y{}$ such that\footnote{Here, we write $\maxw{w}_2$ for the second component $y$ of the pair $(x,y)\equiv \maxw{w}$.} 
$\f{}(w,x) \in \TrS{}{}(x,\maxw{w}_2)$ for all $\Tuple{w,x}\in\dom{f}$. 
%
An \emph{environment strategy} is a \emph{left total}\footnote{Due to the serial assumption on $\G$ it is possible to assume left total environment strategies.} function $\g{} : (\X{}\times \Y{})^+ \rightarrow \X{}$ such that
$\g{}(w) \in \TrE{}{}(\maxw{w})$ for all $w\in (\X{}\times\Y{})^+$. 
We denote the sets of system and environment strategies over $\G$ by $\StrategyS(\G)$ and $\StrategyE(\G)$, respectively.
%
A play $\play{}\in\Play{}$ with $\play{}(k)=\Tuple{x(k),y(k)}$ for all $k\in\Nb$ 
is \emph{compliant} with $\f{}\in\StrategyS(\G)$, $\g{}\in\StrategyE(\G)$ and $\I{}=\Tuple{\X{}\times\Y{}}^+\subseteq\Play{}$
if
there is an $n\in\dom{\play{}}$ such that $\play{}|_{[0,n]} \in \I{}$ and
for all 
 $k\in\dom{\play{}},~k>n$, we have
  \begin{equation}\label{equ:newcompliant}
  x(k)=\g{}\Tuple{\play{}|_{[0,k-1]}}\quad\text{and}\quad
  y(k)=\f{}\Tuple{\play{}|_{[0,k-1]},x(k)}.
\end{equation}
  The set of plays compliant with $\f{}$, $\g{}$ and $\I{}$ is denoted by $\Compl(\f{},\g{},\I{})$ and we define $\Compl(\f{},\I{}):=\bigcup_{\g{}\in\StrategyE(\G)}\Compl(\f{},\g{},\I{})$.

A system strategy $\f{}\in\StrategyS(\G)$ is \emph{winning} for $\Tuple{\G,\I{},\varphi}$ against $\g{}\in\StrategyE(\G)$, if 
\begin{equation}\label{equ:WinStrat}
\AllQSplit{\play{}\in\Compl(\f{},\g{},\I{})}{\ExQ{\xi\in\Play}{\play{}\sconc\xi\in\INTERSECT{\Compl(\f{},\g{},\I{})}{\WinPlays\Tuple{\G,\I{},\varphi}}}.}
\end{equation}
The set of winning strategies for $\Tuple{\G,\I{},\varphi}$ against $\g{}\in\StrategyE(\G)$ is denoted by $\WinStrat(\G,\I{},\varphi,\g{})$ and we define $\WinStrat(\G,\I{},\varphi)=\allowbreak\bigcup_{\g{}\in\StrategyE(\G)}\allowbreak\WinStrat(\G,\I{},\varphi,\g{})$. 

A system strategy $\f{}$ is \emph{dominated} by a system strategy $\f{\prime}$ in the game $\Tuple{\G,\I{},\varphi}$ (see \citet[Def.3]{BrenguierRaskinSassolas_2014}), if for all $\g{}\in\StrategyE(\G)$ holds
\begin{equation*}
\propImp{\f{}\in\WinStrat(\G,\I{},\varphi,\g{})}
 {\f{\prime}\in\WinStrat(\G,\I{},\varphi,\g{})}.
\end{equation*}

A system strategy which is not dominated is called \emph{admissible}. The set of admissible strategies in the play $\Tuple{\G,\I{},\varphi}$ is denoted by $\AdmStrat\Tuple{\G,\I{},\varphi}$.

\paragraph{The Synthesis Problem}
The (unconstrained) synthesis problem takes as input a game $(G, \I{}, \varphi)$ and asks
if there is a winning system strategy for the game.
In most applications, the requirement that the system wins against any adversarial environment strategy is
too stringent.
The constrained synthesis problem additionally takes as input an assumption that models \enquote{likely}
behaviors of the environment as a set of plays $\eass\subseteq \Play{}$.
Intuitively, the constrained synthesis problem asks if the system can win provided that the environment player
is restricted to play strategies that ensure $\eass$.
In the presence of environment assumptions, the synthesis problem looks for \emph{assume-admissible winning strategies} for 
the system (see \cite{BrenguierRaskinSankur_ArXiv_2015} for a discussion why this is an appropriate notion).

By swapping the roles of system and environment we can equivalently define winning and admissible strategies for the environment 
in the game $\Tuple{\G,\I{},\eass}$ as before. 
%
Then a system strategy $\f{}$ is \emph{assume-admissibly winning} for $\Tuple{\G,\I{},\varphi}$ w.r.t.\ $\eass$ (\citet{BrenguierRaskinSankur_ArXiv_2015}, Rule \textsf{AA}) if 
\begin{align}
 &\f{}\in\AdmStrat\Tuple{\G,\I{},\varphi}\quad\text{and}\notag\\
 &\AllQ{\g{}\in\AdmStrat\Tuple{\G,\I{},\eass}}{f\in\WinStrat(\G,\I{},\varphi,\g{})}.\label{equ:AdmStrat}
\end{align}
It should be noted that every winning strategy is assume-admissibly winning w.r.t.\ any assumption, but not vice-versa.

\subsection{Example}\label{sec:Example}

\begin{figure}
\begin{center}
  \input{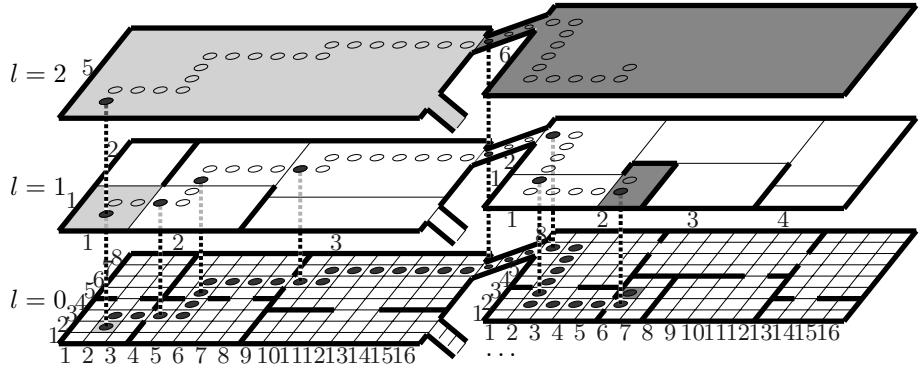}
\end{center}
\caption{Floor plan of the $5$th and $6$th floor of a six-story building. Using the depicted coordinates, we denote by $q_{ij}^k$ and $r_{ij}^k$, respectively, the cell and the room in the $i$th column and $j$th row of floor $k$. Furthermore, $s_{ij},~i<j$ denotes the stair case from floor $f^i$ to floor $f^j$. The workspace of this building is partitioned into grid cells (bottom), rooms (middle) and floors (top) which serve as abstraction layers $l=0$ to $l=2$ as discussed in \REFsec{sec:Example}. The line of dots depicts a path of the robot from the initial state (light gray) to the final state (dark gray) in every layer. Filled circles denote projected states while non-filled circles denote abstract (but not projected) states, as discussed in \REFexp{exp:projection}-\ref{exp:projection2}. }
\label{fig:layers}
\end{figure}

To illustrate the theoretical results and their accompanying assumptions in this paper, we consider a robot that moves in a six story building with known floor plan, depicted in \REFfig{fig:layers} (bottom) for floors $5$ and $6$.

To model this problem as a two-player game graph $\G$, we 
partition the workspace into small cells which form a uniform grid. The resulting grid cells are enumerated by an index set $Q$. 
By assuming that the robot can only be in one grid cell at a time, the system state set is given by $\Y{}=Q$. We furthermore define the set of environment states by $\X{}=\twoup{Q}$, where a state $x\in\X{}$ is a \emph{set} containing all grid cells which are currently occupied by an obstacle. 

This modeling formalism implies that each grid cell in \REFfig{fig:layers} (bottom) represents a system state. 
We model additional properties by adding other binary variables.
For example, by adding a predicate \texttt{Bottle} to the system state, we model whether the robot is carrying a bottle or not. 
As this additional variable might be true in any grid cell, the resulting system state set would consist 
of two copies of the grid world in \REFfig{fig:layers} (bottom), 
where one is annotated with \texttt{Bottle} and the other one is not. 
To keep notation simple, such additional predicates are mostly neglected in this example.

The system transition map $\TrS{}{}$ in $\G$ results from applying an appropriate abstraction method for 
continuous dynamics, e.g., \cite{TabuadaBook}, while adding the obvious restrictions that
\begin{inparaenum}[(i)]
 \item the robot cannot move into an obstacle-occupied cell, and
 \item the robot can only move to adjacent cells that are not separated by a wall.
\end{inparaenum}
For the environment transition map $\TrE{}{}$ several levels of detail can be used to model the movement and 
(dis)appearance of obstacles, see e.g., \cite{WongFinucaneKressGazit_2013,VasileBelta_2014} for examples.

Now consider a task for the robot which asks it to reach a specific room on a specific floor.
This corresponds to a \emph{reachability} winning condition.
In our setting, the winning condition is captured by the language of all plays $\play{}$
such that there exists $k \geq 0$ with $\play{}(k) = (x(k), y(k))$ and $y(k)$ is a cell in the specified room. 
(It can easily be described in linear temporal logic as well.)
The synthesis problem for this specification over the game graph $\G$ finds a strategy (a controller for the robot) that ensures that the robot
eventually reaches the room.  

There are two challenges in applying reactive synthesis in this scenario.
First, the requirement that the robot must reach the room against all possible environments is too stringent.
In such a robot motion example the environment player naturally has a very rich set of possible moves. 
For the specification considered above, the environment can simply keep a couple of doors closed forever to prevent 
the robot to reach its goal. 
However, this adversarial behavior is very unlikely in a real world application as, e.g., employees in 
an office building will always eventually visit/exit their office. 
This is the reason why we introduce environment assumptions that constrain the problem.
A natural environment assumption allowing to realize the above specification models that all staircases are 
always eventually unblocked, all doors get always eventually re-opened, and moving obstacles always eventually allow a passage to exit a room. 

As discussed in \cite{ BrenguierRaskinSassolas_2014}, one cannot simply perform reactive synthesis w.r.t.\ environment assumptions
by considering the implication $\eass \Rightarrow \varphi$ that requires the controller to ensure
$\varphi$ holds only on plays satisfying $\eass$.
This is because the robot may win the game by simply violating the environment assumption (for example,
by blocking a door and preventing the environment from opening it).
Thus, we consider assume-admissible strategies in this paper.

The second challenge is that of scalability. In any realistic model of our problem, the number of states
is so large that existing reactive synthesis tools do not scale.
Our main contribution in this paper is to scale up reactive synthesis techniques by considering \emph{local} structure.
We now consider this in more detail.

As depicted in \REFfig{fig:layers}, there is a natural hierarchy on the states of the workspace imposed by rooms and floors.
That is, the workspace can also be partitioned using the set of rooms $R$ or the set of floors $F$ as 
index sets.\footnote{For simplicity we model the stairs as a separate room and always \enquote{attach} the downward stairs to the respective floor.} 
This partition introduces two abstraction layers with decreasing precision with system state sets $\Y{1}=R$ and $\Y{2}=F$. 
The set of environment states in layers $1$ and $2$ are defined as the set of closed doors 
$\X{1}=\twoup{D}$ and the set of blocked staircases $\X{2}=\twoup{S}$, respectively.
Even though the three layers in \REFfig{fig:layers} are constructed separately, there is a natural abstraction
relation between system states $f\in F$, $r\in R$, and $q\in Q$. 
A system state $q$ is obviously related to the system state $r$ if the grid cell $q$ is \enquote{inside} room $r$.
Furthermore, a door $d$ is marked as \texttt{closed} if all cells intersecting with this door are occupied 
by an obstacle (usually being the door itself in this case), inducing a relation between environment 
states of layers $0$ and $1$. 
In Section~\ref{sec:HGG}, we present \emph{abstract game graphs} (AGGs) which capture such hierarchies in reactive games.
 
%
%

The abstraction relations naturally decompose every layer in the example into small, local game graphs located \enquote{inside} 
a higher level system state: the game graph $\G$ is decomposed in local game graphs $\G_{r},~r\in R$. This is possible for this example as 
the set of possible moves in one room is independent from the part of the environment state that does not belong to this 
context, e.g., all the obstacles contained in the set $x$ that are not located inside this room.
In Section~\ref{sec:LGG}, we introduce {\em local game graphs} (LGGs) which decompose AGGs to model this locality within the hierarchy.
 
To exploit this local structure in reactive synthesis, we additionally require that the specification is also given as a set of local specifications,
one for each local game; otherwise, there is no obvious way to automatically break a global specification into local synthesis problems. 
For example, for the reachability task, one can consider a specification of reaching a room at the higher layer, and reaching from one point
of a room to a prescribed exit point in the lower layer.
Correspondingly, notice that the environment assumptions can also be decomposed into layers.

As a second example, consider the more complex task:
\begin{center}
 \textit{\enquote{Collect all empty bottles in the building and return\\ them to the kitchen in the $5$th floor.}}
\end{center}
This task can be manually decomposed in a natural fashion as follows.
The level $2$ task asks the robot to visit all floors of the building and to return to floor $5$ whenever its capacity to carry empty bottles is reached. 
While in one floor, the level $1$ task asks the robot to visit all rooms until the carrying capacity is reached, and to visit 
the kitchen whenever the latter is true and the robot is in floor $5$. 
Finally, the level $0$ tasks ask the robot to search for empty bottles in a single room, approach each bottle and pick it up.
In this paper we assume that both the system specification and the environment assumptions are already given 
in a decomposed manner.
The automatic decomposition of a global winning condition into local ones is an orthogonal, difficult, problem.

In Section~\ref{subsec:DefHierarchicalGames}, we define \emph{hierarchical reactive games} (HRGs) by combining the set of LLGs over hierarchies with a set of local winning conditions and a set of local environment assumptions. This generates a set of local games over an LGG w.r.t. a local specification $\varphi$ and a local assumption $\eass$.

The main challenge for reactive synthesis for HRGs is that the games played at the various layers interact.
That is, a strategy at a higher layer (``go to the kitchen'') introduces additional constraints at the
lower layer (``the higher level strategy requires that the robot should go to the exit that takes it to the
kitchen'').
In Section~\ref{sec:Strategies}, we provide a synthesis algorithm that computes a dynamic controller
for HRGs.
The controller computes assume-admissible strategies for each local game, 
and dynamically updates the winning conditions and strategies through the hierarchy.
We prove that the algorithm is sound and that it aborts the game only when a local subgame cannot be won by the system
against admissible strategies of the environment.

\section{Hierarchical Decomposition}\label{sec:HGG}

We now introduce a hierarchy of $L$ two player game graphs where the higher layers are 
a more abstract representation of the original game graph at layer $l=0$.

\subsection{Layering, Abstract Plays, and Timescales}\label{sec:R}

%
Let $\G=\Tuple{\X{},\Y{},\TrE{}{},\TrS{}{}}$ be a game graph.
A sequence $\tuple{\X{0}, \Y{0}}, \tuple{\X{1},\Y{1}}, \ldots, \tuple{\X{\lmax}, \Y{\lmax}}$
is a \emph{layering} of $\G$ if 
\begin{inparaenum}[(i)]
\item $\X{0} = \X{}$ and $\Y{0} = \Y{}$, and
\item for each $l\in [1,\lmax]$, there exist \emph{abstraction functions}
$\Ry{l} : \Y{l-1}\fun \Y{l}$ and
$\Rx{l} : \BR{\X{l-1}\times\Y{l-1}} \fun \X{l}$.
\end{inparaenum}


Notice that while the system abstraction function maps system states at level $l-1$ to system states at level $l$,
the environment abstraction function $\Rx{l}$ maps a pair $(x,y)$ of environment and 
system states at level $l-1$ into an environment state at level $l$. 
This allows us to incorporate the loss of direct control with increasing abstraction level, 
as illustrated in the following example.

\begin{example}
Consider the robot in \REFsec{sec:Example} and assume that the system states of layer $0$ are extended by the binary variable \texttt{Bottle}, resulting in the state $\Set{q,\mathtt{Bottle}}$ if the robot is in cell $q$ and carries a bottle and the state $\Set{q}$ if the latter is not true. In this example, a transition from state $\Set{q}$ to $\Set{q,\mathtt{Bottle}}$ is enforceable in layer $0$ if there is a bottle in cell $q$ (which can be modeled by a corresponding environment variable) assuming that the robot can always pick up a bottle when it is in this cell

Now assume that the specification in the room level asks the robot to go to the kitchen, if it is carrying a bottle. To realize this task, a strategy in layer $1$ does not need to \emph{enforce} the robot to pick up a bottle in a particular room (because it might not actually know in which rooms bottles are located) but only \emph{observe} that the latter happened. This intuition can only be modeled if $\mathtt{Bottle}$ is included in the environment states  rather than the system states of layer $1$. To be able to trigger this environment variable in layer $1$ when the robot picks up a bottle, the tuple 
$\Tuple{x,\Set{q,\mathtt{Bottle}}}\in\X{0}\times\Y{0}$ must be projected to an environment state $\Set{\mathtt{Bottle}}\cup x'\in\X{l}$ using the map $\Rx{1}$.
%
%
%
\end{example}

For notational convenience, we define 
the composition of abstraction functions $\Rxup{l}:\BR{\X{}\times\Y{}}\fun\X{l}$ and $\Ryup{l}:\Y{}\fun\Y{l}$ as
\begin{subequations}\label{equ:layers}
 \begin{align}
 &\AllQ{x\in\X{},y\in\Y{}}{\Rxup{l}(x,y)=\Rx{l}\BR{\Rx{l-1}\BR{\hdots\Rx{1}\BR{x,y}}}},\label{equ:Rup:x}\\
  &\AllQ{y\in\Y{}}{\Ryup{l}(y)=\Ry{l}\BR{\Ry{l-1}\BR{\hdots\Ry{1}\BR{y}}}}\label{equ:Rup:y}
\end{align}
\end{subequations}
and the special cases $x=\Rxup{0}(x,y)$ and $y=\Ryup{0}(y)$.

A layering induces an abstraction for a play $\pi\in\Play{}$ for each layer $l>0$ as follows.
Given a game $\G{}$, a play $\play{}\in\Play{}$, and layers $\tuple{\X{l},\Y{l}}_{l=0}^\lmax$ with
abstraction functions $\Rx{l}$ and $\Ry{l}$, we define the set of \emph{abstract plays} $\Pi=\Set{\pi^l}_{l=0}^L$ of $\pi$ by 
$\pi^l\in(\X{l}\times\Y{l})^\infty$ with $\pi^l(k)=\Tuple{x^l(k),y^l(k)}$ s.t.\
\begin{equation}\label{equ:pil}
  \AllQ{k\in\domp{\pi}}{
\begin{propConjA}
   x^l(k)=\Rxup{l}\BR{x(k),y(k-1)}\\
   y^l(k)=\Ryup{l}(y(k))
\end{propConjA}
}
\end{equation}
and $\pi^l(0)=\Tuple{\Rxup{l}(x(0), y(0)),\Ryup{l}(y(0))}$.

%
  
Intuitively, the abstract plays in $\Pi$ are an abstraction of the 
play $\pi$ which becomes coarser the higher the layer, as 
multiple system and environment states are clustered into one state in a higher level. 
Specifically, this implies that state changes occur less frequently in a higher 
level than in the play $\pi$ as outlined in the following example.

\begin{example}\label{exp:projection}
Consider the path of the robot depicted by filled cycles in \REFfig{fig:layers} (bottom). 
This path represents the system state component $y$ of a play $\pi\in\Play{}$. 
Applying the second line of \eqref{equ:pil}, this sequence $y$ can be abstracted to layer $l=1$ and $l=2$ as follows.
\begin{align*}
 \begin{matrix}
  y=&q_{22}^5&q_{23}^5&q_{33}^5&q_{43}^5&q_{53}^5&q_{54}^5&q_{55}^5&q_{56}^5&\hdots\\[0.1cm]
  y^1=&r_{11}^5&r_{11}^5&r_{11}^5&r_{21}^5&r_{21}^5&r_{21}^5&r_{22}^5&r_{22}^5&\hdots\\[0.1cm]
   y^2=&f^5&f^5&f^5&f^5&f^5&f^5&f^5&f^5&\hdots
 \end{matrix}
\end{align*}
The abstract sequences $y^1$ and $y^2$ are depicted in \REFfig{fig:layers} (middle) and (top), respectively. 
The state changes in levels $1$ and $2$ correspond to changes in rooms and floors, respectively.
While the state at level $0$ changes in each time step,
observe that state transitions in layers $1$ and $2$ only happen irregularly and 
not at every time point. 
It should be noted that environment states in layer $1$ and $2$, 
i.e., the set of closed doors and blocked stairs, can change independently from system state changes 
and is not illustrated in \REFfig{fig:layers}. 
\end{example}
  
\REFexp{exp:projection} illustrates that an abstract play $\pi^l$ is usually not turn-based. 
To obtain a turn-based game and to remove redundant information, we introduce a new time 
scale for every layer which is triggered by changes in the system states in an abstract game $\pi^l$ as follows.
%
Given a play $\play{}\in\Play{}$ and a layer $l\in [0,\lmax]$, the \emph{timescale transformation} $\kappa^l$ of $\play{}$ in layer $l$ 
is the identity function if $l = 0$,
and defined by the strictly monotone sequence $\kappa^l\in\Nb^\infty$ s.t.
\begin{subequations}\label{equ:kappa}\allowdisplaybreaks
 \begin{align}
 &\kappa^l(0)=0,\\
  &\AllQSplit{m\in\dom{\kappa},m>0,k\in[\kappa(m-1),\kappa(m))}{
  y^l(k)=y^l(\kappa^l(m-1))\neq y^l(\kappa^l(m))
  } \\
  &\text{and}\quad \AllQ{k>\maxw{{\kappa^l}}}{y^l(k)=y^l(\maxw{\kappa^l})},
 \end{align}
\end{subequations}
 otherwise.
The set of \emph{projected plays} $\PlP=\Set{\plP{l}}_{l=0}^\lmax$ of $\play{}$ with $\plP{l}=\Tuple{\plPx{l},\plPy{l}}$ is defined as the sub-sequence of the abstract play $\pi^{l}$ at time points given by
$\kappa^l$ for every $l\in[1,L]$. Formally, 
\begin{equation}\label{equ:projpi}
 \AllQ{k\in\dom{\kappa^l}}{\plP{l}(k)=\pi^l(\kappa^l(k))}.
\end{equation}

A projected play $\plP{}$ is called \emph{infinite} if $|\plP{}| = \infty$ and \emph{finite} otherwise. 
While plays $\pi\in\Play{}$ can always be made infinite (by the serial assumption on the transition relations), its projection $\plP{l}$ to layer $l >0$ need not be infinite.
For example, if the robot from \REFsec{sec:Example} should just move within room $r^5_{11}$,
this obviously induces an infinite play $\pi$. 
However, its projection to the room layer is given by $\plP{1}=r^5_{11}$,
i.e., $\plP{1}$ is finite with length $1$. 
%
%


\begin{example}\label{exp:projection2}
 Consider the abstract sequences $y^1$ and $y^2$ in \REFexp{exp:projection}. Using \eqref{equ:kappa} and \eqref{equ:projpi} their induced time scale transformations are given by
 \begin{equation*}
  \kappa^1=0~3~6~\hdots\quad\text{and}\quad \kappa^2=0~20\\
 \end{equation*}
 and the resulting projections for layer $1$ and $2$ are given by
 \begin{align*}
  \plPy{1}=r_{11}^5~r_{12}^5~r_{22}^5\hdots\quad\text{and}\quad\plPy{2}=f^5~f^6
 \end{align*}
corresponding to changes in rooms and floors respectively at those times.
In \REFfig{fig:layers}, system states of projected plays are depicted by filled circles, whereas states only belonging to abstract plays are depicted by non-filled cycles. 
\end{example}

It can be easily shown (see \REFlem{lem:kappallp1} in \REFapp{sec:app:proofs}) that the range of the timescale transformation
$\kappa^{l+1}$ is a subset of the range of $\kappa^{l}$; if there is an event
at the $(l+1)$st layer, there is a corresponding event at the $l$th (and so, in each lower)
layer. Using this observation we can simplify notation by defining
\begin{equation}\label{equ:kappallp1}
 \kappa^{l+1}_l(k):=\BR{\kappa^l}^{-1}\BR{\kappa^{l+1}(k)}
\end{equation}
to denote the position in the $l$th layer of the 
$k$th event in the $(l+1)$st layer.

\subsection{Abstract Game Graphs}\label{subsec:AGG}
Using the notion of abstract states and plays from the previous section, we now construct game graphs for every layer $l$.
We remark that the actual game is only played in 
the lowest layer, i.e., in the game graph $\G$, and the higher layers only model
projected plays of this game.
%
\begin{definition}\label{def:Gl}
Let $\G=\Tuple{\X{},\Y{},\TrE{}{},\TrS{}{}}$ be a game graph, and 
$\tuple{\X{l},\Y{l}}_{l=0}^\lmax$ a layering of $\G$ using the abstraction functions $\Rx{l}$ and $\Ry{l}$.
Then we define the set of \emph{abstract game graphs} (AGG) $\Set{\G^l}_{l=0}^{L}$ for each layer $l\in[1,\lmax]$ by $\G^l:=\Tuple{\X{l},\Y{l},\TrE{l}{},\TrS{l}{}}$ s.t.
{\allowdisplaybreaks
\begin{subequations}\label{equ:Gl}
\begin{align}
 &\propAequ{
   x'\in\TrE{l}{}\BR{x,y}
}{  \ExQ*{\play{}\in\Play{},y'\in\Y{l}}{
\begin{propConjA}
 \play{l}(\kappa^l(0))=\Tuple{x,y}\\
 \ExQ{k\in(0,\kappa^l(1)]}{\play{l}(k)=\Tuple{x',y'}}
\end{propConjA}
}
}\label{equ:Gl:TrE}\\
 &\propAequ{
   y'\in\TrS{l}{}\BR{x,y}}{
  \ExQ*{\play{}\in\Play{},x'\in\X{l}}{
   \begin{propConjA}
  \play{l}(\kappa^l(1)-1)=\Tuple{x',y}\\
  \play{l}(\kappa^l(1))=\Tuple{x,y'}
  \end{propConjA}
 }
 }.\label{equ:Gl:TrS}
\end{align}
\end{subequations}} 
and for $l=0$ by $\G^0:=\G$.
\end{definition}

Intuitively, the maps $\TrE{l}{}$ and $\TrS{l}{}$ collect all transitions that can occur in projected plays $\plP{l}$ of possible lowest level plays $\pi\in\Play{}$, as illustrated in the following example. It should be noted that all lowest level plays $\pi$ are existentially quantified in \eqref{equ:Gl}, i.e., all possible plays in the lowest layer are considered.

\begin{example}\label{exp:Gl}
 Consider the play $\pi\in\Play{}$ and its abstract play $\pi^1$ depicted in \REFfig{fig:absttrans}. The existence of the play $\pi$ introduces the depicted system and environment transitions using \eqref{equ:Gl:TrE} and \eqref{equ:Gl:TrS}, respectively. Observe that the construction considers every environment change (induced by the play $\pi$) as an environment transition from the environment state at the last triggering instance indicated by $\kappa$. Furthermore, system transitions are only generated at triggering times. It can be seen in \REFfig{fig:absttrans} that the environment state in layer $l>0$ possibly changes multiple times before a system state change follows. 
\end{example}
\begin{figure}[t!]
\begin{center}
 \begin{tikzpicture}[node distance=1cm]
\def\h{0} \def\ha{1.5} 
\def\v{0}\def\va{0.7} 

\foreach \x in {0,...,2} { 
\node (no\x) at (\h+2*\x*\ha,0) {$\Tuple{x_{\x},y_{\x}}$};
\draw [-latex] (no\x) -- node[yshift=0.2cm] () {$\TrE{}{}$} ++ (0.6*\ha,0);
\draw [->,dashed,black!50,thick] (no\x) -- ++ (0,\va);
\coordinate[below of=no\x, node distance=0.6cm] (bno\x);
 }
 \foreach \x in {1,...,2} { 
\node (ne\x) at (\h+2*\x*\ha-1*\ha,0) {$\Tuple{x_{\x},y_{\number\numexpr\x-1\relax}}$};
\draw [-latex] (ne\x) -- node[yshift=0.2cm] () {$\TrS{}{}$} ++ (0.6*\ha,0);
\draw [->,dashed,black!50,thick] (ne\x) -- ++ (0,\va);
\coordinate[below of=ne\x, node distance=0.6cm] (bne\x);
 }
\node (ne3) at (\h+4.8*\ha,0) {$\hdots$};

\node at (\h-0.5*\ha,0) {$\pi\hspace{-0.1cm}:$};
\node at (\h-0.5*\ha,1cm) {$\pi^1\hspace{-0.1cm}:$};

\node[above of=no0] (a0) {$\Tuple{x^1_0,y^1_0}$};
\node[above of=ne1] (a1) {$\Tuple{x^1_0,y^1_0}$};
\node[above of=no1] (a2) {$\Tuple{x^1_0,y^1_0}$};
\node[above of=ne2] (a3) {$\Tuple{x^2_0,y^1_0}$};
\node[above of=no2] (a4) {$\Tuple{x^2_0,y^1_1}$};
\node[above of=ne3] (a5) {$\hdots$};

\draw [-latex] (a0.north) to[out=20,in=160] node[yshift=0.25cm,pos=0.9] {$\TrE{1}{}$} (a1.north);
\draw [-latex,thick] (a0.north) to[out=30,in=150] node[yshift=0.25cm,pos=0.7] {$\TrE{1}{}$} (a3.north);
\draw [-latex,thick] (a3.north) to[out=20,in=160] node[yshift=0.25cm] {$\TrS{1}{}$} (a4.north);

\draw [->,line width=1.5pt] (bno0) -- ++(4.8*\ha,0);
\draw [line width=1pt] (bno0)+(0,0.1) -- node[anchor=west,yshift=-0.3cm,xshift=-0.2cm] {$0=\kappa^1(0)$} ++(0,-0.1);
\draw [line width=1pt] (bno1)+(0,0.1) -- node[yshift=-0.3cm] {$1$} ++(0,-0.1);
\draw [line width=1pt] (bno2)+(0,0.1) -- node[anchor=west,yshift=-0.3cm,xshift=-0.2cm] {$2=\kappa^1(1)$} ++(0,-0.1);
\end{tikzpicture}
\end{center}
\vspace{-0.6cm}
\caption{Generation of system and environment transitions for layer $l=1$ from a play $\pi$ as formalized in \REFdef{def:Gl} and discussed in \REFexp{exp:Gl}.}
\label{fig:absttrans}
\end{figure}
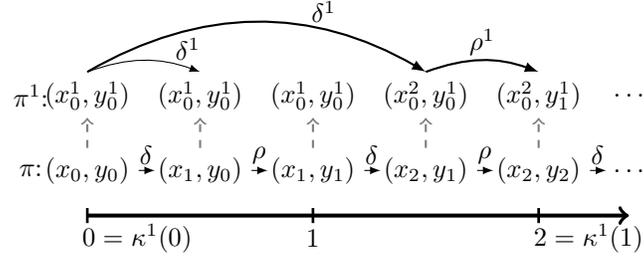

The construction in \REFdef{def:Gl} allows us to prove that 
projected plays $\plP{l}$ as defined in \eqref{equ:projpi} are 
also plays in the game graph $\G^{l}$, i.e., $\plP{l}\in\Play{l}$. 
Intuitively, the proof shows that there always exist transitions, as the 
ones emphasized in \REFfig{fig:absttrans}, connecting system and environment states at triggering times.

\begin{proposition}
\label{prop:Play-l}
For any game $\G$, any play $\play{}\in\Play{}$, and any $l\in[0,\lmax]$, we have that
$\plP{l}$ is a play in $\G^l$, i.e., $\plP{l}\in\Play{l}$.
\end{proposition}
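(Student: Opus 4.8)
The plan is to verify directly that $\plP{l}$ satisfies the two transition conditions in the definition of a play in $\G^l$: that for every $k\in\domp{\plP{l}}$ we have $\plPx{l}(k)\in\TrE{l}{}(\plPx{l}(k-1),\plPy{l}(k-1))$ and $\plPy{l}(k)\in\TrS{l}{}(\plPx{l}(k),\plPy{l}(k-1))$. The case $l=0$ is immediate, since $\G^0=\G$, $\kappa^0$ is the identity, and hence $\plP{0}=\play{}$ is a play in $\G$ by hypothesis. So fix $l\in[1,\lmax]$ and $k\in\domp{\plP{l}}$, and abbreviate the two consecutive triggering instants by $t=\kappa^l(k-1)$ and $t'=\kappa^l(k)$, so that $\plP{l}(k-1)=\pi^l(t)$ and $\plP{l}(k)=\pi^l(t')$.

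The single technical engine is that suffixes of plays are plays: since the play condition \eqref{equ:playp_def:b} imposes no constraint at index $0$, for any $s$ the shifted sequence $\tilde\pi(j):=\play{}(s+j)$ is again a play in $\G$. Comparing with \eqref{equ:pil} shows that its abstract play agrees with a shift of $\pi^l$ at all interior indices, $\tilde\pi^l(j)=\pi^l(s+j)$ for $j\ge 1$, while its system component satisfies $\tilde y^l(j)=y^l(s+j)$ for all $j\ge 0$. Taking $s=t$ and using the defining property \eqref{equ:kappa} of $\kappa^l$, the component $y^l$ is constant on $[t,t')$ and changes at $t'$; hence the first triggering time of this suffix is exactly $\tilde\kappa^l(1)=t'-t$.

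For the system transition I read off \eqref{equ:Gl:TrS} from this suffix: $\tilde\pi^l(\tilde\kappa^l(1))=\pi^l(t')=(\plPx{l}(k),\plPy{l}(k))$ and $\tilde\pi^l(\tilde\kappa^l(1)-1)=\pi^l(t'-1)$, whose system component is $y^l(t'-1)=y^l(t)=\plPy{l}(k-1)$ because $y^l$ is constant on the epoch. Thus $\tilde\pi$ together with the intermediate environment state $x^l(t'-1)$ witnesses $\plPy{l}(k)\in\TrS{l}{}(\plPx{l}(k),\plPy{l}(k-1))$ as required; the edge case $t'=t+1$ is handled identically using $\tilde\pi^l(0)$ in place of $\tilde\pi^l(\tilde\kappa^l(1)-1)$. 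Note that \eqref{equ:Gl:TrS} places no constraint on $\tilde\pi^l(0)$, so no initial-state matching is needed here.

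For the environment transition the same suffix supplies the reaching requirement of \eqref{equ:Gl:TrE}: at index $\tilde\kappa^l(1)=t'-t\in(0,\tilde\kappa^l(1)]$ it attains $\tilde\pi^l(t'-t)=\pi^l(t')=(\plPx{l}(k),\plPy{l}(k))$. The delicate point — and the step I expect to be the main obstacle — is the first conjunct $\tilde\pi^l(0)=(\plPx{l}(k-1),\plPy{l}(k-1))$. The system components agree, but the suffix's initial abstract environment state is computed with the time-$0$ convention of \eqref{equ:pil} as $\Rxup{l}(x(t),y(t))$, whereas $\plPx{l}(k-1)=x^l(t)=\Rxup{l}(x(t),y(t-1))$ uses the interior convention with the \emph{previous} system state $y(t-1)$. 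For $k=1$ this issue vanishes, since $t=\kappa^l(0)=0$ and $\plP{l}(0)=\pi^l(0)$ itself uses the time-$0$ convention, so $\play{}$ is a valid witness. For $k\ge 2$ one must either argue that these two abstract environment states coincide (exploiting that $t$ is a triggering instant at which the abstract system state changes) or construct a dedicated witness play whose abstract play genuinely starts at $(\plPx{l}(k-1),\plPy{l}(k-1))$ and reaches $\plPx{l}(k)$ within its first epoch; the seriality of $\G$ and the layer nesting recorded in \REFlem{lem:kappallp1} are the tools I would bring to bear. Carrying this matching out carefully closes the argument, and since each condition is checked locally at a single $k$, the finite/infinite distinction requires no separate treatment.
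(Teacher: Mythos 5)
Your proposal takes essentially the same route as the paper's proof, which goes through \REFlem{lem:playGl}: there, too, the existential witnesses for \eqref{equ:Gl:TrE} and \eqref{equ:Gl:TrS} are suffixes of $\play{}$ (namely $\pi'=\pi\ll{\kappa^l(m),\maxk{\pi}}$ and $\pi''=\pi\ll{\kappa^l(m+1)-1,\maxk{\pi}}$), which are plays for exactly the reason you give, namely that \eqref{equ:playp_def:b} imposes no constraint at index $0$. Your system-transition argument is complete and correct; it differs from the paper's only in using the suffix starting at $t=\kappa^l(k-1)$ rather than at $t'-1=\kappa^l(k)-1$, and both choices work because \eqref{equ:Gl:TrS} existentially quantifies the environment component at time $\kappa^l(1)-1$, so only the system component needs to be matched there.

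The step you leave open for $k\geq 2$ --- showing $\tilde\pi^l(0)=\Tuple{\plPx{l}(k-1),\plPy{l}(k-1)}$, i.e., the identity $\Rxup{l}(x(t),y(t))=\Rxup{l}(x(t),y(t-1))$ --- is a genuine gap as written: you name two possible repairs but carry out neither, and neither follows from the stated definitions. Exploiting that $t$ is a triggering instant does not help; triggering means precisely that $\Ryup{l}(y(t))\neq\Ryup{l}(y(t-1))$, so the two abstraction arguments are genuinely different points, and the paper's $\mathtt{Bottle}$ example shows $\Rxup{l}$ can depend nontrivially on its system argument (the identity fails, e.g., if the robot picks up a bottle exactly while crossing a context boundary). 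A dedicated witness play is not guaranteed either: by \eqref{equ:pil}, any play $\tilde\pi$ has $\tilde\pi^l(0)=\Tuple{\Rxup{l}(\tilde{x}(0),\tilde{y}(0)),\Ryup{l}(\tilde{y}(0))}$ with both components evaluated at the \emph{same} system state, whereas $\plP{l}(k-1)$ pairs $x(t)$ with the earlier state $y(t-1)$. You should know, however, that the paper's own proof of \REFlem{lem:playGl} contains exactly the same hole: it asserts $\pi'^l(\kappa'^l(0))=\pi^l(\kappa^l(m))$ \enquote{from the construction of $\pi'$}, which for $m\geq 1$ is precisely the unproven identity above. So you have not overlooked an argument that the paper possesses; rather, you have isolated an implicit compatibility assumption on the abstraction maps $\Rx{l}$ (or, equivalently, a needed amendment to the time-$0$ convention in \eqref{equ:pil}, letting the abstract play of a suffix inherit the predecessor state from the original play) that the paper uses silently. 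Closing your proof honestly requires making one of these two fixes explicit.
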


 \begin{proof}
 The claim follows directly from \REFlem{lem:playGl} in \REFapp{sec:app:proofs} as \eqref{equ:playp_def:b} 
holds for $\plP{l}$ and $\G^l$ when we pick $n=\kappa^l(m+1)$ in \eqref{equ:lem:playGl}.
\end{proof}

\section{Context-Based Decomposition}\label{sec:LGG}

A set of AGGs imposes an abstraction hierarchy on top of a given game graph $\G$. 
However, AGGs by themselves are not enough to decompose a synthesis problem.
For example, if the winning condition is given by a set of plays 
on the lowest layer, the induced abstraction layers cannot be exploited by a synthesis algorithm. 
In order to derive an efficient synthesis technique, in this section, we introduce the second
ingredient: {\em local} winning conditions, which induce {\em local game graphs}.

Roughly, a \emph{local} winning condition for the game $\G^l$ at layer $l$ is a 
set of abstract plays $\play{l}$ whose states belong to a single state at layer $l+1$.
For example, reaching a different floor is a local specification at layer 2.
A synthesis procedure to enforce $\varphi^\lmax$ would require solving games at lower
levels; in our example, the robot will have to successively reach a set of rooms, followed by the stairs
to achieve its goal.
Each of these ``lower level'' games occur in, roughly, the ``local''
game structure defined by states in the lower level that map to the current
state of the higher level.
We formalize this notion as \emph{local game graphs}.

\subsection{Local Game Graphs over Hierarchies}

Fix a layer $l$ and consider the games $\G^l$ and $\G^{l+1}$.
Consider a system state $\nu \in \Y{l+1}$.
A first attempt to define a local game is to restrict the game $\G^l$
to the set of system states $\set{y\in \Y{l} \mid \Ry{l+1}(y) = \nu}$. 
However, this is not sufficient, because
plays in the local game should be allowed to leave 
the region specified by $\nu$ for one step at the end. 
This is necessary to ensure that plays in consecutive local games can be concatenated to form
a play over the game graph $\G^l$ without formalizing a special reset action, as e.g., used 
in modular games by \cite{AlurLaTorreMadhusudan_2003b}. 
To account for these states, we introduce the $\ON{Post}$ operation:
%
  \begin{equation}
  \ON{Post}^l(\nu):=\SetCompX{\nu'\in\Y{l}}{
  \begin{propConjA}
   \nu'\neq\nu\\
  \ExQ{x\in\X{l}}{\nu'\in\TrS{l}{}(x,\nu)} 
  \end{propConjA}
  }.
 \end{equation}
Including the one-step post states allows us to view the actual game as a layer $0$ game and use the hierarchical and local decompositions as modeling formalism for hierarchical controller synthesis only.

Considering environment states instead of system states, a straightforward restriction to a 
context $\nu$ is not naturally given by $\Rxup{l+1}$, as the following example shows. 

\begin{figure*}
\begin{center}
  \input{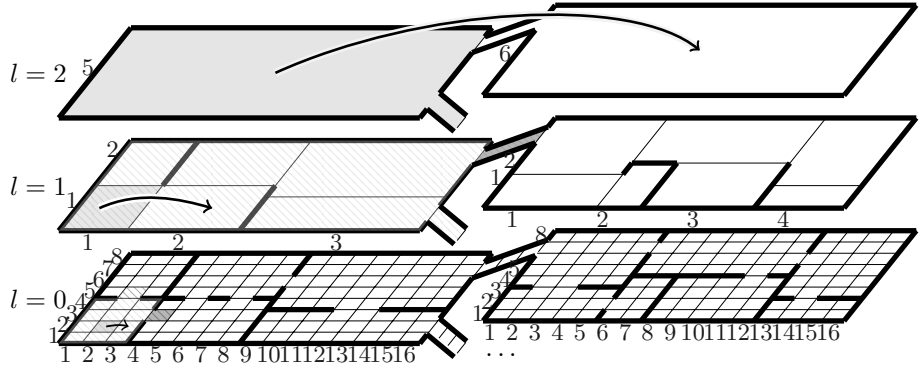}
\end{center}
\vspace{-0.5cm}
\caption{Floor plan from \REFfig{fig:layers}. The striped areas in layers $0$ and $1$ correspond to $\Y{0}_{r^5_{11}}$ and $\Y{1}_{f^5}$, respectively. The three arrows denote context changes requested by layer $l$ which induce a reachability specification for layer $l-1$ whose initial and goal states are depicted in light and dark gray, respectively.}
\label{fig:contexts}
\end{figure*}

\begin{example}\label{exp:rxone}
Consider the example from \REFsec{sec:Example} and its floor plan depicted in \REFfig{fig:contexts}. 
Recall from \REFsec{sec:Example} that an environment state $x\in\X{0}$ contains all grid cells that are occupied by an obstacle. 
However, by playing a game in room $r^5_{11}$ one is only interested in obstacles that are located 
inside $\Y{0}_{r^5_{11}}$. 
\end{example}

Therefore, instead of using $\Rxup{l+1}$ to restrict $\X{l}$ to context $\nu$, we use a restricting function $\rx{l}{\nu}$. 
For \REFexp{exp:rxone}, the map $\rx{1}{r^5_{11}}$ simply maps the set $x$ of obstacle locations to the subset 
$x'\subseteq x$ of such locations that are inside the striped area in layer $0$ of \REFfig{fig:contexts}. 
For notation convenience, we define $\rx{L}{}$ as the identity map.
Using the above intuition, we define \emph{local game graphs} as follows.

\begin{definition}\label{def:Gly}
Given an AGG $\G^l$, the \emph{local} game graph (LGG)
$\G^{l}_{\nu}:=\Tuple{\X{l}_{\nu},\Y{l}_{\nu},\TrE{l}{\nu},\TrS{l}{\nu}}$ 
at layer $l$ restricted to $\nu\in\Y{l+1}$ consists of 
\begin{subequations}\label{equ:Gly}\allowdisplaybreaks
 \begin{align}
&\X{l}_{\nu}:=\SetCompX{\rx{l}{\nu}(x)}{x\in\X{l}}~\text{and}\label{equ:Gly:X}\\
&\Y{l}_{\nu}=\Yli{l}{\nu}\cup\Yla{l}{\nu}\label{equ:Gly:Y}\\
\SUCHTHAT~&\Yli{l}{\nu}:=\SetComp{y\in\Y{l}}{\nu=\Ry{l+1}(y)}~\text{and}\label{equ:Gly:Yli}\\
&\Yla{l}{\nu}:=\SetCompX{y'\in\Yli{l}{\nu'}}{
\begin{propConjA}
\nu'\in\ON{Post}^l(\nu)\\
\ExQ{y\in\Yli{l}{\nu},x\in\X{l}_{\nu}}{y'\in\TrS{l}{}(x,y)}
\end{propConjA}},\label{equ:Yla}
\end{align}
\end{subequations}
and transition maps 
$\TrE{l}{\nu}:\X{l}_{\nu}\times\Yli{l}{\nu}\fun2^{\X{l}_{\nu}}$ 
and $\TrS{l}{\nu}:\X{l}_{\nu}\times\Yli{l}{\nu}\fun2^{\Y{l}_{\nu}}$ defined as: 
\begin{subequations}\label{equ:Gly:Tr}\allowdisplaybreaks
\begin{align}
 &\propImp{\propConj*{
  x'\in\TrE{l}{}(x,y)}{
  y\in\Yli{l}{\nu}}
}{\rx{l}{\nu}(x')\in\TrE{l}{\nu}(\rx{l}{\nu}(x),y)}
\quad\text{and}
\label{equ:Gly:TrE}\\
 &\propImp{\propConj*{
 y'\in\TrS{l}{}(x,y)}{
 y\in\Yli{l}{\nu}\wedge y'\in \Y{l}_{\nu}}
 }{y'\in\TrS{l}{\nu}(\rx{l}{\nu}(x),y)}.
\label{equ:Gly:TrS}
\end{align}
\end{subequations}
We write
$[\Glall]:=\left\{\left\{\G^l_\nu\right\}_{\nu\in\Y{l+1}}\right\}_{l=0}^{\lmax-1}\cup\Set{\G^L}$
for the set of LGGs over $\G$. 
\end{definition}
\begin{example}\label{exp:rx}
 Consider the example from \REFsec{sec:Example} and its floor plan depicted in \REFfig{fig:contexts}. The striped areas in layers $0$ and $1$ correspond to the context restricted system state sets $\Y{0}_{r^5_{11}}$ and $\Y{1}_{f^5}$, respectively. It is easy to see that
  $\Yla{0}{r^5_{11}}=\Set{q^5_{25},q^5_{43}}$ and
  $\Yla{1}{f^5}=\Set{s_{56}}$,
 while layer $l=2$ is not decomposed.
\end{example}


In the robot example of \REFsec{sec:Example} the generated set of LGGs is \enquote{truly local} in the sense that the local system dynamics do not depend on environment variables from other contexts. 
E.g., an obstacle in another room $r'$ does not influence the dynamics of the robot in room $r\neq r'$. 
This inherent decomposability of the system dynamics, similar to the natural relations among states of different 
layers, is a feature of the system we want to control which is necessary for the subsequently proposed synthesis algorithm and formalized in the following assumption.

\begin{assumption}\label{ass:Xbinlnu}
For every layer $l\in[0,\lmax-1]$ and context $\nu\in\Y{l+1}$ it holds for all $x\in\X{l}$ and $y\in\Yli{l}{\nu}$ that
\begin{equation}\label{equ:truelylocal}
 \propImp{y'\in\TrS{l}{}\Tuple{x,y}}{y'\in\TrS{l}{}\Tuple{\rx{l}{\nu}(x),y}}.
\end{equation}
%
\end{assumption}

It should be noted that the right hand side of \eqref{equ:truelylocal} uses $\TrS{l}{}$ instead of $\TrS{l}{\nu}$. Therefore, $\TrS{l}{\nu}\subseteq\TrS{l}{}$ if \REFass{ass:Xbinlnu} holds, which implies that in this case \eqref{equ:Gly:Tr} holds in both directions.

Similarly to \REFprop{prop:Play-l} we can prove that the part of a play $\play{l}$ that takes place in context 
$\nu$ is actually a play in $\G^l_\nu$. 
However, to formalize this we need to define \emph{local plays} which are projected to the current context.
Given a set of LGGs $[\Glall]$, a play $\play{}\in\Play{0}$ and its sets of abstract and projected plays $\Pi$ and $\PlP$, the \emph{local restriction} of $\pi^{l}$ and $\plP{l}$ is defined for all $m\in\domp{\plP{l}}$ by 
\begin{subequations}
  \begin{align}
 &\pld{l}(m):=\Tuple{\plxd{l}(m),\ply{l}(m)} &&\text{with}\quad\plxd{l}(m):=\rx{l}{y^{l+1}(\kappa^l(m)-1)}\BR{\plx{l}(m)}\quad\text{and}\label{equ:pld}\\
 &\plPd{l}(m):=\Tuple{\plPxd{l}(m),\plPy{l}(m)} &&\text{with}\quad\plPxd{l}(m):=\rx{l}{y^{l+1}(\kappa^l(m)-1)}\BR{\plPx{l}(m)}.\label{equ:plPd}
 \end{align}
\end{subequations}
The restriction of $\plx{l}(m)$ (resp. $\plPx{l}(m)$) at time $k=\kappa^l(m)$ is defined w.r.t.\ the last system state $y^{l+1}(k-1)$ as $y^{l+1}(k)$ is only available after the next system move that is depended on $x(k)$. 
The local restriction $\plPd{l}$ of the projected play introduces a sequence $\plPdm{l}$ of local projected plays defined by
 \begin{subequations}\label{def:plPdm}
 \begin{align}
  &\AllQ{m\in\domp{\plP{l+1}}}{\plPdm{l}(m-1):=\plPd{l}\ll{\kappa^{l+1}_l(m-1),\kappa^{l+1}_l(m)}}\label{equ:plPdm:a}\\
  &\text{and}\quad\plPdm{l}(\maxk{\plP{l+1}})=\maxw{\plPdm{l}}:=\plPd{l}\ll{\maxw{\kappa^{l+1}_l},\maxk{\plP{l}}},\label{equ:plPdm:b}
 \end{align}
 \end{subequations}
 where $\maxk{w}=|w|-1$ denotes the time of the last element of $w$.
 We write $ \rhoall:=\left\{\plPdm{l}\right\}_{l=0}^{\lmax-1}\cup\Set{\plPdm{L}}$
for the set of all such sequences induced by $\pi$, where $\plPdm{L}(0)=\plP{L}$ and $\maxk{\plPdm{L}}=0$.

\begin{example}\label{exp:plPdm}
 Consider the play $\pi$ whose $y$-component is depicted by filled cycles in \REFfig{fig:layers} (bottom). For illustration purposes, assume a static environment with a closed door between room $r^5_{11}$ and $r^5_{12}$, denoted by the binary variable $d$, and an obstacle in $q^5_{63}$. The closed door, which is an environment variable for layer $1$, corresponds to obstacles in $q^5_{24}$ and $q^5_{25}$ for layer $0$. For this play, the local plays contained in the set $\rhoall$ are given by the following strings.
 \newcommand{\Th}[1]{\Tuple{#1,\cdot}}
 \begin{align*}\allowdisplaybreaks
  \plPdm{0}(0)&=\Tuple{\Set{q^5_{24},q^5_{25}},q^5_{22}}\Tuple{\Set{q^5_{24},q^5_{25}},q^5_{23}}\Tuple{\Set{q^5_{24},q^5_{25}},q^5_{33}}\Tuple{\Set{q^5_{24},q^5_{25}},q^5_{43}}\\
  \plPdm{0}(1)&=\Tuple{\Set{q^5_{24},q^5_{25}},q^5_{43}}\Tuple{\Set{q^5_{63}},q^5_{53}}\Tuple{\Set{q^5_{63}},q^5_{54}}\Tuple{\Set{q^5_{63}},q^5_{55}}\\[-0.3cm]
  &~~\vdots\\[-0.2cm]
  \plPdm{0}(7)&=\Tuple{\Set{\bot},q^6_{62}}\Tuple{\Set{\bot},q^6_{63}}\\[0.2cm]
  \plPdm{1}(0)&=\Tuple{\Set{d},r^5_{11}}\Tuple{\Set{d},r^5_{21}}\Tuple{\Set{d},r^5_{22}}\Tuple{\Set{d},r^5_{32}}\Tuple{\Set{d},s_{56}}\\
  \plPdm{1}(1)&=\Tuple{\Set{d},s_{56}}\Tuple{\Set{\bot},r^6_{12}}\Tuple{\Set{\bot},r^6_{11}}\Tuple{\Set{\bot},r^6_{21}}\\[0.2cm]
  \plPdm{2}(0)&=\Tuple{\Set{\bot},f^{5}}\Tuple{\Set{\bot},f^{6}}.
 \end{align*}
 where $\Set{\bot}$ denotes that no obstacles are present.
Due to the definition of $\Y{l}_\nu$ in \REFdef{def:Gly}, contexts of neighboring cells overlap. This is also visible by the above local plays, which overlap for one time instant. E.g, the state $\Tuple{\Set{q^5_{24},q^5_{25}},q^5_{43}}$ belongs both to $\plPdm{0}(0)$ and $\plPdm{0}(1)$, which are the local plays in context $Y^0_{r^5_{11}}$ and  $Y^0_{r^5_{21}}$, respectively. As we use the convention that the environment moves first, the environment variables of such overlapping states are always restricted to the context, which is currently left.
\end{example}

%

\begin{proposition}\label{prop:playGly_simple}
Let $[\Glall]$ be a set of LGGs and $\Play{l}_{y}$ the set of plays in $\G^l_{y}$. Furthermore, let $\play{}\in\Play{}$ and $\rhoall$ its induced set of local projected play sequences. 
 Then 
it holds for all $l\in[0,\lmax-1]$ and $m\in\dom{\plP{l+1}}$ that
\begin{equation}\label{equ:prop:playGly_simple}
 \plPdm{l}(m)\in\Play{l}_{\plPy{l+1}(m)}.
\end{equation}
\end{proposition}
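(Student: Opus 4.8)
The plan is to fix $l\in[0,\lmax-1]$ and $m\in\dom{\plP{l+1}}$, to write $\nu:=\plPy{l+1}(m)$ for the current context, and to reduce the claim to checking the two play conditions \eqref{equ:playp_def:b} for the segment $\plPdm{l}(m)$ in the local game graph $\G^l_\nu$. By \REFprop{prop:Play-l} we already know $\plP{l}\in\Play{l}$ and $\plP{l+1}\in\Play{l+1}$, so all transitions of the unrestricted projected plays are available. Using \REFlem{lem:kappallp1} I would first locate the segment on the $l$th timescale by setting $a:=\kappa^{l+1}_l(m)$ and $b:=\kappa^{l+1}_l(m+1)$, so that $\plPdm{l}(m)=\plPd{l}\ll{a,b}$ by \eqref{equ:plPdm:a}, with $\kappa^l(a)=\kappa^{l+1}(m)$ and $\kappa^l(b)=\kappa^{l+1}(m+1)$. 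It then remains to verify state membership in $\X{l}_\nu\times\Y{l}_\nu$ and to lift each transition to $\G^l_\nu$.

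For the system states I would exploit that the context is constant on the interior: for every position $k\in[a,b-1]$ no level-$(l+1)$ event occurs, hence $\Ry{l+1}(\plPy{l}(k))=y^{l+1}(\kappa^l(k))=\nu$ and therefore $\plPy{l}(k)\in\Yli{l}{\nu}$ by \eqref{equ:Gly:Yli}. At the terminal position $b$ a level-$(l+1)$ event takes place, so $\plPy{l+1}(m+1)=:\nu'\neq\nu$; since $\plP{l+1}\in\Play{l+1}$ supplies the system move $\nu'\in\TrS{l+1}{}(\plPx{l+1}(m+1),\nu)$, we obtain $\nu'\in\ON{Post}^l(\nu)$, and combined with the system transition $\plPy{l}(b)\in\TrS{l}{}(\plPx{l}(b),\plPy{l}(b-1))$ out of $\plPy{l}(b-1)\in\Yli{l}{\nu}$ this yields $\plPy{l}(b)\in\Yla{l}{\nu}$ by \eqref{equ:Yla}. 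For the environment states, a short computation shows that on the interior and at the post-state, i.e.\ for $k\in(a,b]$, the restriction in \eqref{equ:plPd} uses exactly $y^{l+1}(\kappa^l(k)-1)=\nu$, so $\plPxd{l}(k)=\rx{l}{\nu}(\plPx{l}(k))\in\X{l}_\nu$ by \eqref{equ:Gly:X}.

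The transition lifting is then mechanical. For every interior step $k\in(a,b]$, \REFprop{prop:Play-l} gives $\plPx{l}(k)\in\TrE{l}{}(\plPx{l}(k-1),\plPy{l}(k-1))$ and $\plPy{l}(k)\in\TrS{l}{}(\plPx{l}(k),\plPy{l}(k-1))$; since the predecessor system state lies in $\Yli{l}{\nu}$ and the successor lies in $\Y{l}_\nu$, the implications \eqref{equ:Gly:TrE} and \eqref{equ:Gly:TrS} apply directly and produce $\plPxd{l}(k)\in\TrE{l}{\nu}(\plPxd{l}(k-1),\plPy{l}(k-1))$ and $\plPy{l}(k)\in\TrS{l}{\nu}(\plPxd{l}(k),\plPy{l}(k-1))$, which are precisely \eqref{equ:playp_def:b} for $\G^l_\nu$. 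Only the forward direction of \eqref{equ:Gly:Tr} is used, so \REFass{ass:Xbinlnu} is not required here.

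I expect the delicate point to be the handover state at position $a$ rather than the transition lifting. By the convention recorded in \eqref{equ:pld}--\eqref{equ:plPd} and illustrated in \REFexp{exp:plPdm}, the first state $\plPd{l}(a)$ of the segment overlaps the last state of $\plPdm{l}(m-1)$ and carries its environment $\rx{l}{\mu}(\plPx{l}(a))$ restricted to the context $\mu:=y^{l+1}(\kappa^l(a)-1)$ that is being \emph{left}, rather than to $\nu$. One must therefore argue separately that this initial state is still an admissible state of $\G^l_\nu$ and that the first transition out of it is valid, which forces one to track precisely which restriction context applies at the boundary and to treat the degenerate cases $m=0$ (where $\kappa^l(a)-1$ is undefined) and the final segment $\plPdm{l}(\maxk{\plP{l+1}})$ (which stays inside $\nu$ and hence has no $\Yla{l}{\nu}$ state). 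This careful alignment of the two timescales through $\kappa^{l+1}_l$ and \REFlem{lem:kappallp1}, together with bookkeeping of the restriction contexts, is where the real work lies; everything else follows from \REFdef{def:Gly} and \REFprop{prop:Play-l}.
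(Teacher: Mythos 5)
Your overall route is the same as the paper's: the paper proves this proposition by delegating to \REFlem{lem:playGly} in the appendix, whose proof establishes exactly the three items you outline (interior system states in $\Yli{l}{\nu}$ via the timescale argument, the exit state in $\Yla{l}{\nu}\cap\Yli{l}{\nu'}$, and lifting of transitions through \eqref{equ:Gly:Tr} using $\plP{l}\in\Play{l}$ from \REFprop{prop:Play-l}). Your remark about the handover state is a fair catch: the first environment state of a segment is restricted to the context being left, and the paper's own proof of \REFlem{lem:playGly} is silent on this boundary position; but since you only flag it and defer it, your write-up is incomplete there as well.

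The genuine gap is the step you explicitly declare unproblematic: the claim that \REFass{ass:Xbinlnu} is not required is wrong. With your notation $\nu=\plPy{l+1}(m)$, $\nu'=\plPy{l+1}(m+1)$ and $b=\kappa^{l+1}_l(m+1)$, membership $\plPy{l}(b)\in\Yla{l}{\nu}$ cannot be derived the way you do it: by \eqref{equ:Yla} it requires a witness transition $\plPy{l}(b)\in\TrS{l}{}(x,y)$ with $y\in\Yli{l}{\nu}$ \emph{and} $x\in\X{l}_{\nu}$, i.e., a \emph{restricted} environment state. The play only supplies $\plPy{l}(b)\in\TrS{l}{}(\plPx{l}(b),\plPy{l}(b-1))$ with an unrestricted $\plPx{l}(b)\in\X{l}$, which in general is not an element of $\X{l}_{\nu}$ (in the robot example it records obstacles in the whole building, not only those inside $\nu$). \REFass{ass:Xbinlnu} is precisely what closes this gap: it lets one replace $\plPx{l}(b)$ by $\rx{l}{\nu}(\plPx{l}(b))\in\X{l}_{\nu}$ (using \eqref{equ:Gly:X}) while preserving the transition, and this is exactly how the paper's proof of \REFlem{lem:playGly} obtains the exit-state membership. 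The failure propagates: without it, $\plPy{l}(b)$ cannot be placed in $\Y{l}_{\nu}$, so the hypothesis $y'\in\Y{l}_{\nu}$ of \eqref{equ:Gly:TrS} fails for the final move and the last transition of the segment cannot be lifted to $\G^l_{\nu}$ either. In short, the assumption enters not through the direction in which \eqref{equ:Gly:Tr} is read (there you are right that only the forward implication is used) but through \eqref{equ:Yla}. Finally, the two cases you defer --- the handover state and the final segment --- are exactly the split between \eqref{equ:playGly} and \eqref{equ:playGly:end} in \REFlem{lem:playGly}; for the final segment all states remain in $\Yli{l}{\nu}$, and only there is the assumption indeed dispensable.
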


\begin{proof}
 \eqref{equ:prop:playGly_simple} follows by combining the last lines of \eqref{equ:playGly} and \eqref{equ:playGly:end} in \REFlem{lem:playGly} proven in \REFapp{sec:app:proofs}.
\end{proof}

%

%

\subsection{Hierarchical Reactive Games over Sets of LGGs}\label{subsec:DefHierarchicalGames}

We have seen in the example of \REFsec{sec:Example} that the motivation 
for constructing LGGs comes from the natural decomposability of system dynamics, environment assumptions and tasks into 
local and global components which are naturally restricted to a context $\nu\in\Y{l+1}$. 
Recall that local specifications should intuitively only contain finite strings to eventually allow 
progress in the higher layer upon completion of the local task. 
This observation is formalized as follows. %
%
Given a set $[\Glall]$ of LGGs, layer $l\in[0,\lmax-1]$, and context $\nu\in\Y{l+1}$, the sets
\begin{align}\label{equ:PhiallR}
&\varphi^l_\nu\subseteq\INTERSECT{\Tuple{\X{l}_\nu\times\Yli{l}{\nu}}^*}{\Play{l}_\nu}\quad\text{and}\quad
\eass^l_\nu\subseteq\INTERSECT{\Tuple{\X{l}_\nu\times\Yli{l}{\nu}}^\infty}{\Play{l}_\nu}
\end{align}
are the \emph{local system specification} and the \emph{local environment assumption} for $\G^l_\nu$, respectively.
The sets
$\varphi^L\subseteq\Play{L}$ and $\eass^L\subseteq\Play{L}$
are a system specification and an environment assumption for $\G^L$, respectively.
We define sets of local system specifications and local environment assumptions over $[\Glall]$ as
\begin{align}\label{equ:phiall}
 &\PhiallR:=\left\{\hspace{-0.1cm}\left\{\varphi^l_\nu\right\}_{\nu\in\Y{l+1}}\right\}_{l=0}^{\lmax-1}\cup\Set{\varphi^L}\quad\text{and}\quad
 \Alphaall:=\left\{\hspace{-0.1cm}\left\{\eass^l_\nu\right\}_{\nu\in\Y{l+1}}\right\}_{l=0}^{\lmax-1}\cup\Set{\eass^L}.
\end{align}

A winning strategy for a local specification in layer $l+1$ induces transitions from a state $\Tuple{x,y}$ to a (possibly different) state $\Tuple{x,y'}$. As $y,y'\in\Y{l+1}$ are different contexts for layer $l$, this order of contexts must be obeyed by the strategy in layer $l$. 
Therefore, we need a proper translation of transitions in level $l+1$ into reachability specification for local games in layer $l$ and combine these specifications with the given low level tasks. 
%
%
Formally, the reachability specification for a layer $l\in[0,\lmax-1]$ in context $\nu\in\Y{l+1}$ w.r.t.\ the next context $\nu'\in\ON{Post}^{l+1}(\nu)$ is defined by 
 \begin{align}\label{equ:psi}
  \psi^{l}_{\nu}(\nu')&:=
  \DiCases{
  \SetComp{w\in\INTERSECT{\Tuple{\X{l}_\nu\times\Y{l}_{\nu}}^*}{\Play{l}_{\nu}}}{\maxw{w}\in\INTERSECT{\Yla{l}{\nu}}{\Yli{l}{\nu'}}}}
  {\nu\neq \nu'}
  {\Set{\INTERSECT{\Tuple{\X{l}_\nu\times\Yli{l}{\nu}}^{\omega}}{\Play{l}_{\nu}}}}
  {\nu= \nu'}
 \end{align}
and the combination of $\psi^{l}_{\nu}(\nu')$ with a local task $\varphi^l_\nu\in\PhiallR$ is defined by
\begin{equation}\label{equ:ConcatSpec}
 \phisconc{l}{\nu}{\nu'}
 :=\SetCompX{\xi\sconc\xi'}{
  \propConj{
  \xi\in\varphi^l_\nu}{
  \maxw{\xi}\sconc\xi'\in\psi^l_\nu(\nu')}
 }.
\end{equation}

\begin{example}
 Consider the floor plan in \REFfig{fig:contexts} and assume that the robot is in state $q^5_{22}$ corresponding to the states $r^5_{11}$ and $f^5$ in layers $l=1$ and $l=2$, respectively, as indicated by the light gray coloring. Now assume that the controller in layer $l=2$ requests a context change from $f^5$ to $f^6$. This induces the reachability specification $\psi^{1}_{f^5}(f^6)$ containing all sequences of rooms in $\Play{1}_{f^5}$ with final room $s_{56}$.
 Now a memoryless strategy for this specification first needs to request a context change from $r^5_{11}$ to $r^5_{21}$. This request, in turn, induces the reachability specification $\psi^{1}_{r^5_{11}}(r^5_{21})$ containing all sequences of cells in $\Play{}_{r^5_{11}}$ with final cell $q^5_{43}$. A possible first move of the robot to fulfill this specification is from $q^5_{22}$ to $q^5_{32}$. The respective goal states of the two specifications are indicated in dark gray in \REFfig{fig:contexts}.
\end{example}

%

The construction in \eqref{equ:ConcatSpec} implies that only a (possibly strict) prefix $\xi$ of a play $\pi\in\phisconc{l}{\nu}{\nu'}$ needs to be contained in $\varphi^l_\nu$. While this might seem restrictive for non-suffix closed specifications such as safety, one can circumvent this problem by using the idea of \enquote{weak until}. Intuitively, one would specify to stay safe, i.e., only visit states from a set $Q_{\mathsf{safe}}$, \enquote{until} the context is left. Then \eqref{equ:ConcatSpec} checks if the current requested context change can be enforced by staying in safe states. For reachability type specifications, such as the request of the completion of a certain task, this issue does not arise. 

Given the above definitions of local specifications, hierarchical reactive games can be constructed from a set of LGGs as follows.


\begin{definition}\label{def:HierarchicalGames}
  Given a set of local specifications $\PhiallR$ over a set of LGGs $[\Glall]$ and a set of level $0$ initial states $\I{}\subseteq(\X{}\times\Y{})$, the tuple  $\Tuple{[\Glall],\I{},\PhiallR}$ is called a \emph{hierarchical reactive game} (HRG) over $[\Glall]$. 
  Furthermore, given the set of local initial conditions
  \begin{equation}\label{equ:Il}
  \I{l}(m):=\TriCases
  {\SetComp{\Tuple{\Rxup{l}(x,y),\Ryup{l}(y)}}{\Tuple{x,y}=\I{}}}{m=0}
  {\Set{\maxw{\plPdm{l}(m-1)}}}{m>0,l<L}
  {\text{undefined}}{else,}
 \end{equation}
 a set $\rhoall$ is defined to be \emph{winning} (resp. \emph{possibly winning}) for $\Tuple{[\Glall],\I{},\PhiallR}$, if for all $l\in[0,L-1]$ holds that
  \begin{compactenum}[(i)]
  \item for all $m\in\dom{\plP{l+1}}$ (with $m<\maxk{\plP{l+1}}$ if $\maxk{\plP{l+1}}<\infty$) there exists a prefix $\xi\sqsubseteq\plPdm{l}(m)$ s.t.\ $\xi$ is winning for $\Tuple{\Play{l}_{\plPy{l+1}(m)},\I{l}(m),\varphi^l_{\plPy{l+1}(m)}}$, and 
  \item for $m=\maxk{\plP{l+1}}<\infty$ there exists a string $\xi=\plPdm{l}(m)$ (resp. $\xi\sqsubseteq\plPdm{l}(m)$ or $\plPdm{l}(m)\sqsubseteq\xi$) s.t.\ $\xi$ is winning for $\Tuple{\Play{l}_{\plPy{l+1}(m)},\I{l}(m),\varphi^l_{\plPy{l+1}(m)}}$, and 
%
\item $\plP{\lmax}$ is winning (resp. possibly winning) for $\Tuple{\Play{\lmax},\I{L}(0),\varphi^\lmax}$.
 \end{compactenum}
 \vspace{-0.4cm}
\end{definition}

\section{Assume-Admissible Hierarchical Strategy Construction}\label{sec:Strategies}

Let $\Tuple{[\Glall],\I{},\PhiallR}$ be a HRG with initial condition $\I{}\in(\X{}\times\Y{})$ and let $\Alphaall$ be a set of local environment assumptions over $[\Glall]$. 
Then we want to synthesize a strategy (i.e., a controller) for layer $0$ that generates a play whose 
projection is winning for the set of local system specifications $\PhiallR$ if $\Alphaall$ holds.
We assume that $\PhiallR$ and $\Alphaall$ are both $\omega$-regular languages.
While in principle one can flatten the game and solve one global game to obtain a solution to this problem, this will be prohibitively expensive. 
We therefore propose an algorithm that 
constructs a winning strategy in each local game that is encountered and ``stitches together'' these winning strategies dynamically.
Additionally, one could statically solve and memorize all possibly constructed local games. 
Our algorithm avoids this expensive construction by only solving games that actually arise online.
Hence, our procedure is \emph{dynamic} in that it solves a series of local games in each step starting
from the current state --- this is conceptually similar to receding horizon control approaches.
To incorporate environment assumptions, 
we use a slightly modified version of the algorithm from \cite{BrenguierRaskinSankur_ArXiv_2015}
to compute an assume-admissible winning strategy for a local game and a local environment assumption.
Our procedure treats this algorithm as a black box; in principle, a different strategy synthesis algorithm can be used.

\subsection{Synthesis of Assume-Admissibly Winning Strategies}\label{sec:AAwinningSol}

Assume-admissibly winning strategies for the play $\Tuple{\G,\I{},\varphi}$ w.r.t.\ the assumption $\eass$ can 
be computed by the algorithm given by \citet[Thm. 4]{BrenguierRaskinSankur_ArXiv_2015} in case $\varphi$ and $\eass$
are $\omega$-regular objectives. 
We denote the outcome of this strategy synthesis by $\Sol{\mathsf{AA}}{\G,\I{},\varphi,\eass}$.
Whenever the environment does not play admissible, the definition of assume-admissibly winning strategies does only restrict the behavior of the system to an admissible one. This does not give any guarantees w.r.t.\ $\varphi$ in case the environment does not play admissible. To circumvent this issue we slightly modify the outcome of the available strategy synthesis.
%
\begin{definition}\label{def:SolAA}
 Let $f^{\mathsf{AA}}=\Sol{\mathsf{AA}}{\G,\I{},\varphi,\eass}$ be an assume-admissibly winning strategy, then its associated \emph{possibly winning strategy} $f$, is defined for all $\play{}\in\Play{}$ s.t.
  \begin{equation}\label{equ:Sol}
  \SplitX{f(\play{}|_{[0,k]},x(k\plps1))=}{\DiCases
  {f^{\mathsf{AA}}(\play{}|_{[0,k]},x(k\plps1))}{\play{}|_{[0,k]}\sconc\Tuple{x(k\plps1),f^{\mathsf{AA}}(\play{}|_{[0,k]},x(k\plps1))}\in\overline{\varphi}}
  {\emptyset}{\text{else}.}}
 \end{equation}
 We define the set of all possibly winning strategies for the game $\Tuple{\G,\I{},\varphi}$ w.r.t. $\eass$ by $\Sol{}{\G,\I{},\varphi,\eass}$.
\end{definition}
A strategy $f=\Sol{}{\G,\I{},\varphi,\eass}$ blocks whenever the environment forces the play into a state from which the play cannot be won anymore. This implies that all finite plays $\pi$ compliant with $f$ are possibly winning, i.e. $\play{}\in\overline{\varphi}$, even if the environment does not play admissible. However, if it does, the compliant play is winning. This is formalized by the following proposition.	
\begin{proposition}\label{prop:WeaklyWinning}
 Given $f=\Sol{}{\G,\I{},\varphi,\eass}$, $\g{}\in\StrategyE(\G)$, it holds for all $\play{}\in\Play{}$ that 
 \begin{subequations}
   \begin{align}
    &\propImp{\g{}\in\AdmStrat\Tuple{\G,\I{},\eass}}{f\in\WinStrat(\G,\I{},\varphi,\g{})}, \label{equ:WeaklyWinningAdm}\\
&\text{and}\quad\propImp{
\begin{propConjA}
\play{}\in\Compl\Tuple{\f{},\I{}}\\
 |\play{}|<\infty
\end{propConjA}
}{\play{}\in\WinPlays\Tuple{\G,\I{},\overline{\varphi}}}.\label{equ:WeaklyWinningNonAdm}
 \end{align}
 \end{subequations}
\end{proposition}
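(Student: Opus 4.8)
The plan is to prove the two implications separately, in both cases exploiting the fact that, by construction in \eqref{equ:Sol}, the modified strategy $f$ agrees with $f^{\mathsf{AA}}$ exactly as long as the play stays in $\overline{\varphi}$, and blocks (returns $\emptyset$) otherwise.

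For \eqref{equ:WeaklyWinningAdm}, fix $\g{}\in\AdmStrat\Tuple{\G,\I{},\eass}$. Since $f^{\mathsf{AA}}$ is assume-admissibly winning, \eqref{equ:AdmStrat} gives $f^{\mathsf{AA}}\in\WinStrat(\G,\I{},\varphi,\g{})$. First I would show that along any play $\play{}$ compliant with $f^{\mathsf{AA}}$ and $\g{}$, every finite prefix lies in $\overline{\varphi}$: by \eqref{equ:WinStrat} such a $\play{}$ extends to a compliant winning play $\play{}\sconc\xi\in\varphi$, and every finite prefix of $\play{}$ is then a prefix of a string in $\varphi$, hence by definition in $\overline{\varphi}$. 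Consequently, at every step the guard in \eqref{equ:Sol} is satisfied, so $f$ never blocks and $f(\play{}|_{[0,k]},x(k+1))=f^{\mathsf{AA}}(\play{}|_{[0,k]},x(k+1))$ along such plays. Together with the converse inclusion, which is immediate because $f$ equals $f^{\mathsf{AA}}$ wherever it is defined, this yields $\Compl(f,\g{},\I{})=\Compl(f^{\mathsf{AA}},\g{},\I{})$. Since $f^{\mathsf{AA}}$ is winning against $\g{}$ and the two strategies have identical compliant plays, the winning condition \eqref{equ:WinStrat} transfers verbatim to $f$, proving \eqref{equ:WeaklyWinningAdm}.

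For \eqref{equ:WeaklyWinningNonAdm}, let $\play{}\in\Compl(f,\I{})$ be finite, witnessed by some $\g{}$ and some $n$ with $\play{}|_{[0,n]}\in\I{}$; the same $n$ already supplies the initial-condition part of membership in $\WinPlays\Tuple{\G,\I{},\overline{\varphi}}$, so it remains to show $\play{}\in\overline{\varphi}$. If the play makes at least one strategy move, i.e.\ its last index $N$ satisfies $N>n$, then $y(N)=f(\play{}|_{[0,N-1]},x(N))$ is a defined, non-empty value, so by the guard in \eqref{equ:Sol} the resulting prefix $\play{}|_{[0,N]}=\play{}$ lies in $\overline{\varphi}$, as desired. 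The remaining case is the degenerate one $N=n$, where $\play{}=\play{}|_{[0,n]}\in\I{}$ and no strategy move was taken.

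The hard part will be this base case: establishing that every initial segment in $\I{}$ from which $f$ can be run is itself possibly winning, i.e.\ lies in $\overline{\varphi}$. I would resolve it by invoking \eqref{equ:WeaklyWinningAdm}: the set $\AdmStrat\Tuple{\G,\I{},\eass}$ is non-empty, since admissible strategies always exist for $\omega$-regular objectives, so I pick some $\g{}'$ in it; by \eqref{equ:WeaklyWinningAdm}, $f\in\WinStrat(\G,\I{},\varphi,\g{}')$, hence the play extending $\play{}|_{[0,n]}$ compliantly under $f$ and $\g{}'$ completes to a winning play in $\varphi$, whence its prefix $\play{}|_{[0,n]}$ lies in $\overline{\varphi}$. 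This closes the base case and completes \eqref{equ:WeaklyWinningNonAdm}. I expect the only genuine care needed is in justifying that $f^{\mathsf{AA}}$, and hence $f$, indeed issues a move from $\play{}|_{[0,n]}$ against $\g{}'$, which follows from the realizability implicit in the existence of the assume-admissibly winning strategy returned by $\Sol{\mathsf{AA}}{\G,\I{},\varphi,\eass}$.
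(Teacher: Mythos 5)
Your proof is correct and follows essentially the same route as the paper's: for \eqref{equ:WeaklyWinningAdm} you argue that along plays compliant with $f^{\mathsf{AA}}$ and an admissible $\g{}$ the guard in \eqref{equ:Sol} is always satisfied, so $f$ coincides with $f^{\mathsf{AA}}$ and inherits its winning property, and for \eqref{equ:WeaklyWinningNonAdm} you read membership in $\overline{\varphi}$ directly off the guard at the last strategy move. The only difference is that you additionally treat the degenerate compliant play consisting of an initial string in $\I{}$ with no strategy moves (handled via the existence of an admissible environment strategy together with \eqref{equ:WeaklyWinningAdm}), a corner case the paper's proof passes over silently; this is a refinement of, not a departure from, the paper's argument.
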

\begin{proof}
 Let $f^{\mathsf{AA}}=\Sol{\mathsf{AA}}{\G,\I{},\varphi,\eass}$ and $f$ its associated possibly winning strategy. Using \eqref{equ:AdmStrat}, 	$\g{}\in\AdmStrat\Tuple{\G,\I{},\eass}$ implies $f^{\mathsf{AA}}\in\WinStrat(\G,\I{},\allowbreak\varphi,\g{})$. Using \eqref{equ:WinStrat}, this implies $\pi\in\WinPlays\Tuple{\G,\I{},\overline{\varphi}}$. Therefore, the second case in \eqref{equ:Sol} cannot occur and we obtain $f=f^{\mathsf{AA}}$, i.e., $\f{}\in\WinStrat(\G,\I{},\varphi,\g{})$.
 Observe that the left side of \eqref{equ:WeaklyWinningNonAdm} implies that the right side of \eqref{equ:newcompliant} holds for $\play{}$ and $\f{}$, hence $f(\play{}|_{[0,k-1]},x(k))\neq\emptyset$ for all $k\in\dom{\pi}$. Using \eqref{equ:Sol}, this implies $\play{}|_{[0,k]}\in\Compl\Tuple{f^{\mathsf{AA}},\I{}}$ and $\play{}|_{[0,k]}\in\overline{\varphi}$, hence, $\pi\in\WinPlays\Tuple{\G,\I{},\overline{\varphi}}$.
%
%
%
\end{proof}

We remark that the algorithm to compute assume-admissible strategies  in \citet[Thm. 4]{BrenguierRaskinSankur_ArXiv_2015} 
can be trivially adapted to ensure \REFprop{prop:WeaklyWinning}, by blocking the game whenever a losing state 
(one in which there is no winning strategy for the system)
is entered.


\subsection{The Strategy Synthesis Algorithm}\label{sec:Strategy:Construct}


Recall that we aim to synthesize a strategy (i.e., a controller) for layer $0$ that generates a play whose projection 
is assume-admissible winning for the HRG $\Tuple{[\Glall],\I{},\PhiallR}$ w.r.t.\ $\Alphaall$. 
Hence, the goal of each computation round of our algorithm is to determine the next system state $y(k+1)$ in layer $0$, 
i.e., to calculate the current control action that needs to be applied to the system. 
This depends on the environment state $x(k+1)$ in layer $0$ which is sensed in the beginning of each such computation round 
and projected to all layers $l\in[1,L]$ in an \enquote{bottom up} fashion. 
The current state in every layer local game is given by the restriction of $x^l(k+1)$ to the current context and the projection $y^l(k)$ of the last system state.
Based on this information, the next step in every layer local game needs to be calculated. 

This calculation is challenging due to the interaction between plays in different layers. 
In particular, a move from system state $\nu$ to $\nu'$ requested by a strategy in layer $l\in[1,L]$ results in an additional reachability 
specification for the current local game in layer $l-1$. 
Furthermore, such an \enquote{induced} reachability specification for the local game in layer $l-1$ and context $\nu$ 
might change multiple times, before this context is left.
This is due to the fact that an environment state in layer $l>0$ possibly changes multiple times before a system state change follows, as 
discussed in the construction of abstract game graphs (see \REFsec{subsec:AGG}). Hence, whenever such a specification change occurs, the strategy in layer $l-1$ needs to be re-calculated. The only strategy that is not influenced by this interplay is the highest level strategy, which is computed only once when initializing the algorithm. 
Once the strategies are updated in a \enquote{top down} manner, the controller picks the next move at layer $0$ based on the updated strategy for layer $0$ and plays it. This changes the states for all higher layers and the algorithm continues with the next computation cycle.

We now describe the algorithm formally.

\begin{algo}[Strategy Synthesis Procedure]\label{def:F}
Let $\Tuple{[\Glall],\I{},\PhiallR}$ be a HRG with $\I{}\in(\X{}\times\Y{})$ and $\Alphaall$ a set of local environment assumptions over $[\Glall]$.
Then the dynamic hierarchical strategy $F=\Set{f^l}_{l=0}^L$ for the game $\Tuple{[\Glall],\I{},\PhiallR}$ w.r.t.\ $\Alphaall$ and its compliant play $\pi$ are iteratively defined as follows:\\
\begin{compactitem}[$\blacktriangleright$]
 \item Initialization:\\
 \begin{subequations}\label{equ:F}
 \begin{inparaitem}[$\triangleright$]
  \item Using $\I{L}$ as in \eqref{equ:Il}, calculate the assume admissible winning strategy for the highest layer $L$ using
  \begin{equation}
   h^L=\Sol{}{\G^L,\I{L}(0),\varphi^L,\eass^L}.
  \end{equation}
\item Initialize the play and the local history, respectively, with 
\begin{equation}
 \pi=\Tuple{x(0),y(0)}=\I{}\quad\text{and}\quad\plg{l}(0)=\pi.
\end{equation}
 \end{inparaitem}
\item Iteration for all $k\in\Nb$:\\
\begin{inparaitem}[$\triangleright$]
 \item Sense the environment move 
 \begin{equation}
  x(k+1)\in\TrE{0}{}(\pi).\label{equ:F:xk}
 \end{equation}
 \item Compute the local environment state $\plxd{l}(k+1)$ using \eqref{equ:pil} and \eqref{equ:pld}, i.e.,
 \begin{equation}\label{equ:F:plxd}
  \plxd{l}(k+1)=\rx{l}{y^{l+1}(k)}\Tuple{\Rxup{l}(x(k+1),y(k))}
 \end{equation}
 for each layer $l$;\\
 \item Iteratively calculate the current strategy by 
 \begin{align}
  f^L(k)&=h^L\quad\text{and}\label{equ:F:fLk}\\
 \AllQ{l\in[0,L-1]}{f^{l}(k)&=\TriCases 
 {\emptyset}{
 \GotStuck{l+1}(k)
 }
 {h^l(k)}{
 \Done{l+1}(k)
 }
 {f^l_{\nu\nu'^{l+1}}(k)}{\text{else}}
 }\label{equ:F:flk}
\end{align}
with 
\begin{align}
\nu&:=y^{l+1}(k),\notag\\
\nu'^{l+1}(k)&:=f^{l+1}(k)(\plg{l+1}(k),\plxd{l+1}(k\plps1)),\label{equ:F:nu}\\
 h^l(k)&:=\DiCases
 {\Sol{}{\G^l_{\nu},\Set{\plg{l}(k)},\varphi^l_{\nu},\eass^l_{\nu}}}
 {\nu\neq y^{l+1}(k-1)}
 {h^l(k-1)}{\text{else}}\label{equ:F:hlk:DC}\\
 f^l_{\nu\nu'^{l+1}}(k)&=
  \DiCases
  {\Sol{}{\G^l_{\nu},\Set{\plg{l}(k)},\phisconc{l}{\nu}{\nu'^{l+1}(k)},\eass^l_{\nu}}}
  {\begin{propDisjA}
    \nu\neq y^{l+1}(k\mips1)\\
    \nu'^{l+1}(k)\neq\nu'^{l+1}(k\mips1)
   \end{propDisjA}}
  {f^l_{\nu\nu'^{l+1}}(k-1)}{\text{else}}\label{equ:F:flk:DC}
\end{align}
 and the predicates are defined by
 \begin{align}
  &\propAequ{\Win{l}(k)}{
  \plg{l}(k)\in\DiCases{\varphi^{l}_{\nu}}{l\in[0,L-1]}{\varphi^L}{l=L}
},\label{equ:F:Win}\\
  &\propAequ{\Done{l}(k)}{\begin{propConjA}
  \propDisj*{l=L}{\Done{l+1}(k)}\\
  \Win{l}(k)\\
(\plg{l}(k),\plxd{l}(k+1))\notin \dom{h^l(k)}
 \end{propConjA}
},\quad\text{and}\label{equ:F:Done}\\
  &\propAequ{\GotStuck{l}(k)}{
\begin{propConjA}
\neg\Done{l}(k)\\
(\plg{l}(k),\plxd{l}(k+1))\notin\dom{f^{l}(k)}
\end{propConjA}
  }.\label{equ:F:GotStuck}
\end{align}

 \item Play the next move following the current system strategy for layer $l=0$
 \begin{equation}
  y(k+1)=f^0(k)(\plg{0}(k),\plxd{0}(k+1)).\label{equ:F:yk}
 \end{equation}
 \item Append $\Tuple{x(k+1),y(k+1)}$ to the play giving 
 \begin{equation}
  \pi=\Tuple{x\ll{0,k+1},y\ll{0,k+1}}\label{equ:F:pi}.
 \end{equation} 
 \item Using \eqref{equ:plPdm:b}, compute the new context restricted history 
 \begin{equation}
  \plg{l}(k+1)=\maxw{\plPdm{l}}\quad\text{with}\quad\plPdm{l}\in\rhoall.\label{equ:F:LH}
 \end{equation}
\end{inparaitem}
\end{subequations}
\end{compactitem}
\end{algo}
%
%

As discussed before, every computation round $k$ of the construction in \eqref{equ:F} starts with the sensing of 
the next environment move in \eqref{equ:F:xk}, giving the full $0$-level environment state $x(k+1)=x^0(k+1)$. 
This state is used to compute the local restricted environment states $\plxd{l}(k+1)$ for every layer and 
current context $y^{l+1}(k)$ in \eqref{equ:F:plxd}. 
Note that this construction is done \enquote{bottom up}.

Thereafter, the selection of the current strategy $f^l$ for every layer and its respective current goal state $\nu'^l$ are calculated. 
Observe that this is done \enquote{top down}, as $\nu'^l$ is used to calculated the current reachability specification for the reachability 
game in layer $l-1$. 
The construction of $f^l$ in \eqref{equ:F:flk} distinguishes three cases: the play at the highest layer has been won,
or the play at the higher layer got stuck,
or none of these conditions occurred. We consider these cases separately.

For the first case observe, that the specification of level $L$ might be a set of finite strings and local specifications are sets of 
finite strings by definition (see \REFsec{subsec:DefHierarchicalGames}). Therefore, the play 
constructed in \eqref{equ:F} does not need to be infinite to be winning for $\PhiallR$. 
If the play in layer $L$ is winning for $\varphi^L$ and the strategy does not request any 
other move (denoted by the predicate $\Done{L}$ in \eqref{equ:F:Done}), then this is communicated 
downwards using the second line of \eqref{equ:F:flk}. 
In this case all lower level strategies must be winning for local specifications only, 
using the assume-admissible strategy calculated in \eqref{equ:F:hlk:DC}.

For the second case, observe that the strategy calculation in \eqref{equ:F:hlk:DC} and \eqref{equ:F:flk:DC} does not need to have a solution.
Further, even if it has a solution, system strategies are not assumed to be left-total. 
Hence, there might exist (non-admissible) environment moves that cause a blocking of $f$ without the game being winning. 
These two situations are modeled by the predicate $\GotStuck{l}$ in \eqref{equ:F:Done}. 
If such a situation occurs, it is communicated downwards by the first line of \eqref{equ:F:flk} 
resulting in $\GotStuck{l'}$ for all $l'<l$ and therefore an abortion of the game. 
Intuitively, the first time $\GotStuck{l}$ occurs, it is because of 
an \enquote{unrealizeable} local specification.
We introduce a fourth predicate 
\begin{equation}\label{equ:F:UnReal}
 \propAequ{\UnReal{l}(k)}{
 \DiCases{\GotStuck{l}(k)}{l=L}
 {\propConj{\neg\GotStuck{l+1}(k)}{\GotStuck{l}(k)}}{l<L}}
\end{equation}
to remember the first layer at which the controller got stuck.
We will show in \REFsec{sec:Strategy:Soundness} that an unrealizable specification is the only reason for a non-winning play constructed in \eqref{equ:F} to be aborted. 

In the third case, i.e., if neither $\GotStuck{l}$ nor $\Done{l+1}$ is true, the strategy for level $l$  is calculated by \eqref{equ:F:flk:DC} 
using again two subcases. 
In the first subcase, either a new context was entered (resulting in a new local game) or the \enquote{top down induced} reachability specification has 
changed (due to a change of $\nu'^l$ caused by a new environment state in layer $l+1$). In this case the strategy for level $l$ needs to be re-calculated. 
However, if neither of these two situations occurs, the strategy from the previous time step can be used, 
avoiding unnecessary re-computations.
 
After the strategy construction in \eqref{equ:F:flk}-\eqref{equ:F:GotStuck}, the system state is updated to $y(k+1)$, 
using the currently selected lowest level strategy $f^0(k)$ in \eqref{equ:F:yk}. 
Hence, \eqref{equ:F:flk}-\eqref{equ:F:GotStuck} only utilize the hierarchical structure of the game graph to 
compute $f^0(k)$, which is the only control action that is actually applied to 
the system, e.g., the robot in our example. Then $\Tuple{x(k+1),y(k+1)}$ is appended to the constructed play $\pi$. 
As intuitively assumed, such plays $\pi$ generated by \REFalg{def:F} up to length $k$ are plays in $\G$, i.e., $\pi\in\Play{}$, as shown in the following proposition. Observe, that this implies that also $\plP{l}\in\Play{l}$ for all $l\in[0,L]$ (from \REFprop{prop:Play-l}) and $\plPdm{l}(m)\in\Play{l}_{\plPy{l+1}(m)}$ for all $l\in[0,L-1]$ and $m\in\dom{\plP{l+1}}$ (from \REFprop{prop:playGly_simple}).

\begin{proposition}\label{prop:piInPlay}
Let $\pi$ be a play computed in \REFalg{def:F}. Then $\pi\in\Play{}$.
\end{proposition}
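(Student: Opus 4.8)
The plan is to unwind the definition of a play in \eqref{equ:playp_def:b} and verify, for each index at which \REFalg{def:F} appends a new pair, that the two membership conditions hold. Since $\pi$ is constructed purely by repeatedly appending $\Tuple{x(k\plps1),y(k\plps1)}$ in \eqref{equ:F:pi}, and since the conditions in \eqref{equ:playp_def:b} are local (they relate only $\play{}(k)$ and $\play{}(k\mips1)$), it suffices to show that the pair appended in round $k$ satisfies both $x(k\plps1)\in\TrE{}{}\BR{x(k),y(k)}$ and $y(k\plps1)\in\TrS{}{}\BR{x(k\plps1),y(k)}$. If the algorithm aborts in some round (that is, $f^0(k)=\0$ via the first case of \eqref{equ:F:flk}, so that \eqref{equ:F:yk} produces no successor), the play is finite and every earlier appended pair has already been checked; nothing remains to prove for that round. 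Hence I focus on a round $k$ in which a genuine successor $y(k\plps1)$ is produced.

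The environment condition is immediate. By \eqref{equ:F:xk} the sensed move satisfies $x(k\plps1)\in\TrE{0}{}(\pi)$, and since $\G^0=\G$ by \REFdef{def:Gl} we have $\TrE{0}{}=\TrE{}{}$; reading $\TrE{0}{}(\pi)$ as the environment map applied to the last state $\maxw{\pi}=\Tuple{x(k),y(k)}$ gives $x(k\plps1)\in\TrE{}{}\BR{x(k),y(k)}$, which is the first conjunct of \eqref{equ:playp_def:b}.

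For the system condition, write $\nu:=y^1(k)$. By \eqref{equ:F:yk} we have $y(k\plps1)=f^0(k)\Tuple{\plg{0}(k),\plxd{0}(k\plps1)}$, and in every branch of \eqref{equ:F:flk} with $f^0(k)\neq\0$ the strategy $f^0(k)$ is a possibly-winning strategy of the form $\Sol{}{\G^0_{\nu},\ldots}$ for the local game graph $\G^0_{\nu}$. By \REFdef{def:SolAA} a nonempty value of such a strategy coincides with the value of an assume-admissibly winning strategy, which is in particular a valid system strategy over $\G^0_{\nu}$; hence its output lies in $\TrS{0}{\nu}$ applied to the current local environment state and the last local system state of $\plg{0}(k)$. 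The history update \eqref{equ:F:LH} guarantees that this last system state is $y(k)$ (at layer $0$ the system component is never restricted), and \eqref{equ:F:plxd} with $\Rxup{0}=\mathrm{id}$ gives $\plxd{0}(k\plps1)=\rx{0}{\nu}\Tuple{x(k\plps1)}$. Therefore $y(k\plps1)\in\TrS{0}{\nu}\Tuple{\rx{0}{\nu}(x(k\plps1)),y(k)}$; moreover $y(k\plps1)\in\Y{0}_{\nu}$ since a strategy on $\G^0_{\nu}$ has codomain $\Y{0}_{\nu}$, while $\nu=\Ry{1}(y(k))$ yields $y(k)\in\Yli{0}{\nu}$ by \eqref{equ:Gly:Yli}.

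It remains to pass from the local map $\TrS{0}{\nu}$ on the restricted state $\rx{0}{\nu}(x(k\plps1))$ back to the global map $\TrS{0}{}=\TrS{}{}$ on the full state $x(k\plps1)$; this is the main obstacle and the only place where \REFass{ass:Xbinlnu} is essential. Under that assumption the defining implication \eqref{equ:Gly:TrS} for $\TrS{0}{\nu}$ holds as a biconditional, as noted in the remark following \REFass{ass:Xbinlnu}, so from $y(k\plps1)\in\TrS{0}{\nu}\Tuple{\rx{0}{\nu}(x(k\plps1)),y(k)}$ together with $y(k)\in\Yli{0}{\nu}$ and $y(k\plps1)\in\Y{0}_{\nu}$ we obtain exactly $y(k\plps1)\in\TrS{0}{}\Tuple{x(k\plps1),y(k)}=\TrS{}{}\Tuple{x(k\plps1),y(k)}$. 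The delicate point is that the restriction $\rx{0}{\nu}$ discards obstacles lying outside the context $\nu$, so a priori a locally admissible move could be one that is blocked by such an outside obstacle; the truly-local dynamics assumption \REFass{ass:Xbinlnu} is precisely what rules this out and makes the step sound. Combining the two conditions establishes \eqref{equ:playp_def:b} for the appended pair, completing the argument.
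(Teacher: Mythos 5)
Your proof is correct and is essentially the paper's own argument: extract the environment conjunct from \eqref{equ:F:xk}; note that a produced move means $f^0(k)\neq\emptyset$, so $f^0(k)$ is a system strategy over the local game graph $\G^0_{y^{1}(k)}$ whose output lies in $\TrS{0}{y^{1}(k)}$ applied to $\plxd{0}(k+1)$ and $\maxw{\plg{0}(k)}_2=y(k)$ (via \eqref{equ:F:LH}); then use \REFass{ass:Xbinlnu} to transfer the transition from the local map to $\TrS{}{}$. If anything, your final step is slightly more explicit than the paper's, which compresses the restricted-versus-full environment-state issue into the containment $\TrS{0}{y^{1}(k)}\subseteq\TrS{0}{}$, whereas you invoke the biconditional reading of \eqref{equ:Gly:TrS} asserted in the remark following \REFass{ass:Xbinlnu} --- the same ingredient, spelled out.
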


\begin{proof}
It follows from \eqref{equ:F:xk} and \eqref{equ:F:yk} that
     \begin{equation}\label{equ:proof:l0}
      \AllQ{k\in\domp{\pi}}{
      \begin{propConjA}
       x(k)\in\TrE{}{}(x(k-1),y(k-1))\\
       y(k)=f^{0}(k-1)(\plg{0}(k-1),\plxd{0}(k))
      \end{propConjA}},
     \end{equation}
implying $f^0(k-1)\neq\emptyset$ for all $k\in\domp{\pi}$. Therefore, \eqref{equ:F:flk}-\eqref{equ:F:GotStuck} imply that $f^0(k-1)$ is a system strategy over $\mathcal{G}^0_{y^{1}(k-1)}$ and the definition of the latter in \REFsec{sec:Prelim} gives
  $f^0(k-1)(\plg{0}(k-1),\plxd{0}(k))\in\TrS{0}{y^{1}(k-1)}(\plxd{0}(k-1),\maxw{\plg{0}(k-1)}_2)$.
 Now observe from \eqref{equ:F:LH}, \eqref{equ:plPdm:b} and \eqref{equ:projpi} that $\maxw{\plg{0}(k-1)}_2=y^0(k-1)$. 
 Now using $\TrS{0}{y^{1}(k-1)}\subseteq\TrS{0}{}$ from \REFass{ass:Xbinlnu} along with this observation, we see that \eqref{equ:proof:l0} actually implies \eqref{equ:playp_def:b}, hence $\pi\in\Play{}$. 
\end{proof}
%
We call a play 
$\pi$ calculated in \eqref{equ:F} up to length $k=|\play{}|$ \emph{maximal} if 
\begin{equation}
 \propImp{k<\infty}{(\plg{0}(k),\plxd{0}(k+1))\notin\dom{f^{0}(k)}}.
\end{equation}

One round of the construction in \eqref{equ:F} is ended by calculating the current local histories $\plg{l}(k+1)$ for every layer.
Intuitively, $\plg{l}(k+1)$ models the part of $\plP{l}$ generated after the last context change in layer $l$ and is therefore equivalent to $\maxw{\plPdm{l}}$. These histories are used in the calculation of assume-admissible strategies to ensure that a re-computation of a strategy within one context does result in a continuation of the already generated string w.r.t.\ the given specification.

While the local system strategies $f^l(k)$ are explicitly calculated for every time step $k$ in \eqref{equ:F:flk}-\eqref{equ:F:GotStuck}, the local environment strategies $g^l(k)$ are only given implicitly by the observed environment move \eqref{equ:F:xk} and its abstraction to every layer $l$. Formally, a play $\pi$ calculated in \eqref{equ:F} was played against an admissible environment strategy if for all $l\in[0,L-1]$, $m\in\dom{\plP{l}}$ exists an environment strategy $g^l_{\plPy{l+1}(m)}\in\AdmStrat(\G^l_{\plPy{l+1}(m)},\I{l}(m),\eass^l_{\plPy{l+1}(m)})$ s.t. $\plPdm{l}(m)\in\Compl(\G^l_{\plPy{l+1}(m)},g^l_{\plPy{l+1}(m)})$ and for layer $L$ exists $g^L\in\AdmStrat(\allowbreak\G^L,\I{L}(0),\eass^L)$ s.t. $\plP{L}\in\Compl(\G^L,g^L)$. If this holds, we call $\pi$ an \emph{environment admissible play}. 



\begin{example}
 Consider the play $\pi$ whose $y$-component is depicted by filled cycles in \REFfig{fig:layers} (bottom) and (for simplicity) the static environment used in \REFexp{exp:plPdm}, where we use $o=\Set{q^5_{24},q^5_{25},q^5_{63}}$ and $o_{\downarrow}=\Set{q^5_{24},q^5_{25}}$ for notational convenience. In this game the only objective is to reach $q^6_{63}$ in $r^6_{21}$ and $f^6$. This implies that $\PhiallR$ contains only empty sets except for 
 \begin{align*}
  \varphi^2=\Set{\bot}\times\Set{f^5f^6},~\varphi^1_{f^6}=\Set{\bot}\times R^*\sconc\Set{r^6_{21}},~\text{and}~\varphi^0_{r^6_{21}}=\Set{\bot}\times Q^*\sconc\Set{q^6_{21}}. 
 \end{align*}
 To illustrate \REFalg{def:F} we pick $k=2$, i.e., $\pi$ was generated for $3$ time steps and we are now calculating $\pi(3)=\Tuple{x(3),y(3)}$ using \eqref{equ:F}. \\
 First recall from \REFexp{exp:plPdm} that
 \begin{align*}
  \pi(2)&=\Tuple{o,q^5_{33}},~\pi^1(2)=\pi^1(0)=\Tuple{\Set{d},r^5_{11}},~\pi^2(2)=\pi^2(0)=\Tuple{\Set{\bot},f^5},~\text{and}\\
  \plg{0}(2)&=\Tuple{o_{\downarrow},q^5_{22}}\Tuple{o_{\downarrow},q^5_{23}}\Tuple{o_{\downarrow},q^5_{33}},~ 
  \plg{1}(2)=\Tuple{\Set{d},r^5_{11}},~\plg{2}(2)=\Tuple{\Set{\bot},f^{5}}.
 \end{align*}
 We furthermore assume that the strategy calculation for $k=0$ resulted in the requested moves depicted by the arrows in \REFfig{fig:contexts} (middle and top).
  Whith this initialization we obtain the following steps of the algorithm.\\
  \begin{inparaitem}[$\triangleright$]
   \item Due to the static environment assumption, \eqref{equ:F:xk} gives $x(k+1)=x(3)=o$.\\
   \item Applying \eqref{equ:F:plxd} yields $\plxd{0}(3)=o_{\downarrow}$, $\plxd{1}(3)=\Set{d}$ and $\plxd{2}(3)=\Set{\bot}$.\\
   \item First, \eqref{equ:F:fLk} and \eqref{equ:F:fLk} imply $f^2(2)\neq\emptyset$, $\nu'^{2}(2)=\nu'^{2}(1)=f^6$ and $\neg\Done{2}(2)$. Therefore, \eqref{equ:F:flk:DC} and \eqref{equ:F:flk} imply $f^1(2)=f^1_{f^5f^6}(0)\neq\emptyset$, $\nu'^{1}(2)=\nu'^{1}(1)=r^5_{11}$ and $\neg\Done{1}(2)$. With this, the lowest level strategy is given by $f^0(2)=f^0_{r^5_{11},r^5_{21}}(0)$.\\
   \item As we assume a static environment and no obstacles block the way between the robot and the exit to room $r^5_{21}$, we assume that $f^0_{r^5_{11},r^5_{21}}$ is a shortest path strategy and \eqref{equ:F:yk} gives $y(k+1)=y(3)=q^5_{43}$.\\
   \item Observe, that a context change has occurred during this step, i.e., \eqref{equ:F:LH} gives
   \begin{align*}
    \plg{0}(3)&=\Tuple{\plxd{0}(3),y(3)}=\Tuple{o_{\downarrow},q^5_{43}},~\plg{1}(3)=\Tuple{\Set{d},r^5_{11}}\Tuple{\Set{d},r^5_{21}},~
    \plg{2}(3)=\Tuple{\Set{\bot},f^{5}}.
   \end{align*}
  \end{inparaitem}
  With this local history the next iteration of the algorithm is started.  For the assumed very simple static environment, \REFalg{def:F} will never get stuck. Observe, that once we reach floor $f^6$, the level $2$ game is won and $\Done{2}$ is true. In this case $h^1$ will be calculated w.r.t.\ the specification $\varphi^1_{f^6}$. If in addition $r^6_{21}$ is reached, $\Done{1}$ is also set to true and $h^0$ is calculated. After one more time step also $\Done{0}$ is true and the algorithm terminates. The generated play is obviously winning for $\PhiallR$.
\end{example}

\subsection{Soundness}\label{sec:Strategy:Soundness}
In this section we prove three different soundness results for the play constructed in \REFalg{def:F}. Intuitively, \REFalg{def:F} is sound if a play $\pi$ calculated in \eqref{equ:F} is winning for the HRG $\Tuple{[\Glall],\I{},\PhiallR}$ if all generated local specifications are realizable and the environment plays admissible w.r.t. $\Alphaall$, which will be proven last in \REFthm{thm:FiniteWinningPlays}. 
As a first intermediate result we show that the only two reasons for a maximal play to terminate are actually that 
\begin{inparaenum}[(i)]
 \item a current local specification is not realizable or
 \item the play is already winning given a finite winning condition in layer $L$.
\end{inparaenum}

\begin{theorem}\label{thm:GotStuck}
Let 
$\pi$ be a \emph{maximal}
play computed by \eqref{equ:F}. Then it holds that
\begin{align}\label{equ:thm:GotStuck}	
 \propAequ{|\pi|<\infty}{
  \begin{propDisjA}
\AllQ{l\in[0,L]}{\Done{l}(\maxk{\pi})}\\ 
\ExQ{l\in[0,L]}{\UnReal{l}(\maxk{\pi})}\\
 \end{propDisjA}
 }.
\end{align}
\end{theorem}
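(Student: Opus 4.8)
The plan is to reduce \eqref{equ:thm:GotStuck} to a single one-step characterization that holds at every iteration index $k$ of \eqref{equ:F}:
\[
(\plg{0}(k),\plxd{0}(k\plps1))\notin\dom{f^0(k)}
\quad\Longleftrightarrow\quad
\Big(\AllQ{l\in[0,L]}{\Done{l}(k)}\Big)\vee\Big(\ExQ{l\in[0,L]}{\GotStuck{l}(k)}\Big).
\qquad(\star)
\]
Granting $(\star)$, the theorem follows: by construction $y(k\plps1)$ is produced (and the play extended past index $k$) exactly when $f^0(k)$ is defined at $(\plg{0}(k),\plxd{0}(k\plps1))$, so $(\star)$ says the play is extended past $k$ iff the disjunction fails at $k$; thus $|\pi|=\infty$ iff the disjunction fails at every index, while a finite maximal play has the disjunction holding at its final index $k=\maxk{\pi}$, which is \eqref{equ:thm:GotStuck} once the two right-hand sides are identified. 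That identification, $\ExQ{l}{\UnReal{l}(k)}\Leftrightarrow\ExQ{l}{\GotStuck{l}(k)}$, is routine: $\UnReal{l}\Rightarrow\GotStuck{l}$ directly from \eqref{equ:F:UnReal}, and conversely taking the largest $l^\ast$ with $\GotStuck{l^\ast}(k)$ gives $\UnReal{l^\ast}(k)$ (via $l^\ast=L$, or $\neg\GotStuck{l^\ast\plps1}(k)$ when $l^\ast<L$).

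Two monotonicity properties of the recursively defined predicates, each read off \eqref{equ:F:flk}--\eqref{equ:F:GotStuck}, drive the proof of $(\star)$. \emph{Done propagates upward}: for $l<L$, \eqref{equ:F:Done} gives $\Done{l}(k)\Rightarrow\Done{l\plps1}(k)$, hence $\Done{1}(k)\Rightarrow\Done{l}(k)$ for all $l\in[1,L]$, and contrapositively $\neg\Done{l\plps1}(k)\Rightarrow\neg\Done{l}(k)$. \emph{Stuck propagates downward}: if $\GotStuck{l\plps1}(k)$ then the first branch of \eqref{equ:F:flk} forces $f^l(k)=\emptyset$, while $\neg\Done{l\plps1}(k)$ yields $\neg\Done{l}(k)$; by \eqref{equ:F:GotStuck} this gives $\GotStuck{l}(k)$, so contrapositively $\neg\GotStuck{1}(k)\Rightarrow\neg\GotStuck{l}(k)$ for every $l\ge1$. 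I also use that $\Done{l}$ and $\GotStuck{l}$ are mutually exclusive, since \eqref{equ:F:GotStuck} lists $\neg\Done{l}(k)$ as a conjunct of $\GotStuck{l}(k)$.

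To prove $(\star)$ I split on the three branches of \eqref{equ:F:flk} determining $f^0(k)$ (replacing it by \eqref{equ:F:fLk} in the degenerate case $L=0$). If $\GotStuck{1}(k)$ holds, then $f^0(k)=\emptyset$, so the left side of $(\star)$ holds and $\GotStuck{1}(k)$ witnesses the right side. If $\Done{1}(k)$ holds, then $f^0(k)=h^0(k)$ and, by upward propagation, no layer $l\ge1$ is stuck; here \eqref{equ:F:Done} reduces to $\Done{0}(k)\Leftrightarrow\Win{0}(k)\wedge\big((\plg{0}(k),\plxd{0}(k\plps1))\notin\dom{h^0(k)}\big)$, so if $h^0(k)$ is defined at the current state then both $\Done{0}(k)$ and $\GotStuck{0}(k)$ fail and the right side is false, matching the false left side, whereas if $h^0(k)$ is undefined the left side holds and exactly one of $\Done{0}(k),\GotStuck{0}(k)$ holds according to $\Win{0}(k)$, so the right side holds. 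Finally, if neither $\GotStuck{1}(k)$ nor $\Done{1}(k)$ holds, then $f^0(k)=f^0_{\nu\nu'}(k)$ and $\neg\Done{1}(k)$ forces $\neg\Done{0}(k)$, so by \eqref{equ:F:GotStuck} $\GotStuck{0}(k)$ is equivalent to the left side of $(\star)$; moreover $\neg\GotStuck{1}(k)$ together with downward propagation rules out any stuck layer above $0$, so the right side of $(\star)$ holds precisely when $\GotStuck{0}(k)$ does, i.e.\ precisely when the left side holds.

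The delicate step is the $\Done{1}(k)$ branch, where $h^0(k)=\Sol{}{\G^0_\nu,\Set{\plg{0}(k)},\varphi^0_\nu,\eass^0_\nu}$ may be undefined for two genuinely different reasons --- the local task $\varphi^0_\nu$ has just been completed, or it is blocked/unrealizable --- and the proof must confirm that the bookkeeping predicates attribute these to $\Done{0}(k)$ and $\GotStuck{0}(k)$ respectively. The reassuring point is that this distinction is already encoded in \eqref{equ:F:Done}--\eqref{equ:F:GotStuck} through $\Win{}$, so once the branch structure of \eqref{equ:F:flk} is unfolded consistently across layers, the verification of $(\star)$ is a finite Boolean check rather than a semantic argument about $\Sol{}{\cdots}$; the only real care needed is to keep the recursive predicate unfoldings and the order of the three branches of \eqref{equ:F:flk} aligned at every layer.
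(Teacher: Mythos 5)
Your proof is correct and is essentially the paper's own argument: both reduce finiteness of a maximal play to undefinedness of $f^0$ at the last index (the paper's \eqref{equ:FiniteCausedByf}), and both then rely on the same Boolean facts about the recursively defined predicates — $\Done{l}$ propagates upward via \eqref{equ:F:Done}, $\GotStuck{l}$ propagates downward via \eqref{equ:F:flk} and \eqref{equ:F:GotStuck}, and $\UnReal{l}$ marks the topmost stuck layer. The paper merely packages these facts differently, routing everything through $\GotStuck{0}$ in \REFlem{lem:GotStuck:1} and \REFlem{lem:GotStuck:UnRealCanNotWait} (recovering an unrealizable layer by an upward induction where you instead take the largest stuck layer $l^\ast$), while you prove the one-step equivalence $(\star)$ directly by case analysis on the three branches of \eqref{equ:F:flk}; the difference is organizational, not mathematical.
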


\begin{proof}
To prove this theorem we need that 
\begin{subequations}\allowdisplaybreaks
\begin{equation}\label{equ:thm:GotStuck:1}
 \propAequ{
\ExQ*{l\in[0,L]}{\UnReal{l}(\maxk{\pi})}\\
 }
 {\GotStuck{0}(\maxk{\pi})}
\end{equation}
which is proven for all $k\in\dom{\pi}$ in \REFlem{lem:GotStuck:UnRealCanNotWait} (see \REFapp{sec:app:proofs}). Furthermore, as we assume environment strategies to be left-total, \eqref{equ:F:xk} can always be computed. Hence, $\pi$ becomes finite while being maximal iff \eqref{equ:F:yk} cannot be evaluated, i.e.,
\begin{equation}\label{equ:FiniteCausedByf} 
 \propAequ{\maxk{\pi}<\infty}{(\plg{0}(\maxk{\pi}),\plxd{0}(\maxk{\pi}+1))\notin\dom{f^{0}(\maxk{\pi})}}.
\end{equation}
Now we pick $k=\maxk{\pi}$ and prove both directions separately.\\
\begin{inparaitem}[$*$]
\item[\enquote{$\Rightarrow$}]
Using \eqref{equ:FiniteCausedByf} and \eqref{equ:F:GotStuck} implies that either 
\begin{inparaenum}[(i)]
 \item $\neg\Done{0}(k)$ and $\GotStuck{0}(k)$, or
 \item $\Done{0}(k)$.
\end{inparaenum}
Using \eqref{equ:thm:GotStuck:1}, (i) implies\footnote{To simplify notation we denote by $\langle$(\#).right.n$\rangle$ (resp. $\langle$(\#).left.n$\rangle$) the $n$th statement on the right (resp. left) side of the implication/equivalence relation in equation (\#).} \eqrefR{equ:thm:GotStuck}{2}.
As $\Done{0}(k)$ implies $\AllQ{l\in[0,L]}{\Done{l}}(k)$ (from \eqref{equ:F:Done}), (ii) implies \eqrefR{equ:thm:GotStuck}{1}. \\
\item[\enquote{$\Leftarrow$}]
If \eqrefR{equ:thm:GotStuck}{2} is true, it follows from \eqref{equ:thm:GotStuck:1} that $\GotStuck{0}(k)$ and $\neg\Done{0}(k)$ (see the proof of \REFlem{lem:GotStuck:UnRealCanNotWait}). Hence, \eqref{equ:thm:GotStuck} and \eqref{equ:FiniteCausedByf} implies \eqrefL{equ:thm:GotStuck}{}.
If \eqrefR{equ:thm:GotStuck}{1} is true, we know from \eqref{equ:F:flk} that $f^{0}(k)=h^0(k)$. Therefore, \eqrefR{equ:F:Done}{3} and \eqref{equ:FiniteCausedByf} implies \eqrefL{equ:thm:GotStuck}{}.
\end{inparaitem} 
\end{subequations}
\end{proof}

While the second case in \REFthm{thm:GotStuck} is not desired w.r.t.\ the goal of constructing a winning play, it can usually not be avoided in a realistic scenario as we can 
\begin{inparaenum}[(i)]
 \item not enforce the environment to play admissible and 
 \item checking feasibility of all possibly occurring local games before startup might not be appropriate, as this set might be very large.
\end{inparaenum}
However, \REFalg{def:F} ensures that if this situation occurs, the local specifications are not falsified up to this point. This is formalized by the notion of possibly winning, which ensures that generated finite plays always stay in the prefix closure of the considered local specifications. 

\begin{theorem}\label{thm:SoundnessRecursivePlay}
Given the preliminaries of \REFalg{def:F}, let $\play{}$ be the play computed by \eqref{equ:F} up to length $k$, 
and $\rhoall$ its set of local projected play sequences. Then $\rhoall$ is possibly winning for $\Tuple{[\Glall],\I{},\PhiallR}$.
\end{theorem}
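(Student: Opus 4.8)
The plan is to verify directly the three clauses of the \emph{possibly winning} version of \REFdef{def:HierarchicalGames}, layer by layer, using the structural facts already assembled in the text. By \REFprop{prop:piInPlay} we have $\pi\in\Play{}$, hence by \REFprop{prop:Play-l} every abstraction $\plP{l}\in\Play{l}$ and by \REFprop{prop:playGly_simple} every local segment $\plPdm{l}(m)\in\Play{l}_{\plPy{l+1}(m)}$; so all the objects named in the three clauses are genuine plays, and it only remains to place them in the appropriate prefix-closed specifications. The single workhorse will be \eqref{equ:WeaklyWinningNonAdm} of \REFprop{prop:WeaklyWinning}: any \emph{finite} play that is compliant with a possibly-winning strategy $\Sol{}{}$ (\REFdef{def:SolAA}) lies in the prefix closure $\overline{\varphi}$ of that strategy's objective.

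First I would dispatch condition~(iii) for the top layer. Since $f^L(k)=h^L=\Sol{}{\G^L,\I{L}(0),\varphi^L,\eass^L}$ is fixed throughout the run, the finite projected play $\plP{L}$ is compliant with $h^L$ from the initial state $\I{L}(0)$; applying \eqref{equ:WeaklyWinningNonAdm} gives $\plP{L}\in\WinPlays\Tuple{\Play{L},\I{L}(0),\overline{\varphi^L}}$, which is exactly the statement that $\plP{L}$ is possibly winning for $\Tuple{\Play{L},\I{L}(0),\varphi^L}$.

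Next, for each $l\in[0,L-1]$ I would analyze the local games. The crucial observation is that every \emph{completed} local segment (index $m<\maxk{\plP{l+1}}$) ends with a context change, i.e.\ its last system state lies in $\Yla{l}{\nu}\cap\Yli{l}{\nu'}$ with $\nu=\plPy{l+1}(m)$. The strategy active when that segment terminated is $f^l_{\nu\nu'}=\Sol{}{\G^l_\nu,\cdot,\phisconc{l}{\nu}{\nu'},\eass^l_\nu}$ (cf.\ \eqref{equ:F:flk:DC}), whose objective $\phisconc{l}{\nu}{\nu'}$ is built in \eqref{equ:ConcatSpec} from the local task $\varphi^l_\nu$ followed by the reachability specification $\psi^l_\nu(\nu')$ of \eqref{equ:psi}. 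Because a play in $\Play{l}_\nu$ can contain at most one post-state, and only as its final symbol (post-states are disjoint from the inside states $\Yli{l}{\nu}$ and carry no outgoing system transition), any finite play in $\overline{\phisconc{l}{\nu}{\nu'}}$ whose last state lies in $\Yla{l}{\nu}\cap\Yli{l}{\nu'}$ must in fact be a complete element of $\phisconc{l}{\nu}{\nu'}$, hence admits a prefix $\xi\in\varphi^l_\nu$. Applying \eqref{equ:WeaklyWinningNonAdm} to the finite compliant segment $\plPdm{l}(m)$ puts it in $\overline{\phisconc{l}{\nu}{\nu'}}$, and the post-state observation then yields a prefix $\xi\sqsubseteq\plPdm{l}(m)$ winning for $\Tuple{\Play{l}_\nu,\I{l}(m),\varphi^l_\nu}$; the initial condition holds trivially since $\xi(0)=\plPdm{l}(m)(0)\in\I{l}(m)$ by \eqref{equ:Il}. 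This settles condition~(i). For the last, possibly unfinished, segment $m=\maxk{\plP{l+1}}$ (condition~(ii)), the same application of \eqref{equ:WeaklyWinningNonAdm} to $\plPdm{l}(m)$ --- compliant with $h^l$ in the $\Done{l+1}$ case (cf.\ \eqref{equ:F:hlk:DC}) or with $f^l_{\nu\nu'}$ otherwise --- gives $\plPdm{l}(m)\in\overline{\varphi^l_\nu}$ respectively $\plPdm{l}(m)\in\overline{\phisconc{l}{\nu}{\nu'}}$; in either case there is a witness $\xi\in\varphi^l_\nu$ that is comparable to $\plPdm{l}(m)$ under $\sqsubseteq$ (both being prefixes of one winning string), which is precisely the relaxed requirement ``$\xi\sqsubseteq\plPdm{l}(m)$ or $\plPdm{l}(m)\sqsubseteq\xi$'' of the possibly-winning clause.

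The main obstacle I anticipate is justifying that each local segment $\plPdm{l}(m)$, and the top-level play $\plP{L}$, really is compliant \emph{as a play} with the strategy that \REFalg{def:F} associates to it --- in particular reconciling the bookkeeping of \eqref{equ:F:hlk:DC}--\eqref{equ:F:flk:DC}, where a strategy is re-solved (with the accumulated local history $\plg{l}(k)$ as its initial condition) whenever a new context is entered or the induced target $\nu'^{l+1}$ changes. I would handle this by noting that each re-solve takes the current history as its initial segment, so the whole segment $\plPdm{l}(m)$ is compliant with the \emph{final} strategy active on it in the sense of \eqref{equ:newcompliant}, and that the context actually reached upon a change necessarily equals that final target $\nu'=\plPy{l+1}(m+1)$ because context transitions are system moves governed by that strategy and the $\Sol{}{}$ construction blocks any move leaving $\overline{\phisconc{l}{\nu}{\nu'}}$. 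Making these compliance statements precise is the technical heart of the argument and would most naturally be isolated as an auxiliary lemma in \REFapp{sec:app:proofs}, after which the three clauses above follow as described.
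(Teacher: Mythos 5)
Your proposal is correct and follows essentially the same route as the paper: the paper also verifies the three clauses of \REFdef{def:HierarchicalGames} by combining \REFprop{prop:piInPlay}, \REFprop{prop:Play-l} and \REFprop{prop:playGly_simple} with \eqref{equ:WeaklyWinningNonAdm} of \REFprop{prop:WeaklyWinning}, handling layer $L$ exactly as you do, and it likewise delegates the technical heart you identified --- compliance of each local segment with the final strategy active on it, and the identification of the reached context with that strategy's target via the post-state structure of $\psi^l_\nu(\nu')$ --- to appendix lemmas (\REFlem{lem:SoundnessRecursivePlay:A}, \REFlem{lem:localWin}, with \REFlem{lem:CondReach} playing the role of your ``post-state observation''). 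The only packaging difference is that the paper proves those lemmas by an explicit bottom-up induction over layers, which your plan leaves implicit.
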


\begin{proof}
We have two important observations that we use in this proof. First, it holds for all $l\in[0,\lmax]$ and $m\in\domp{\plP{l}}$ that 
\begin{subequations}
  \begin{align}\label{equ:proof:linit:D}
 \begin{propConjA}
 \plPxd{l}(m)\in\TrE{l}{\plPy{l}(m-1)}(\plPxd{l}(m-1),\plPy{l}(m-1))\\
 \plPy{l}(m)=f^{l}(\kappa^{l}(m)-1)(\plg{l}(\kappa^{l}(m)-1),\plPxd{l}(m))
 \end{propConjA}
\end{align}
as proven in \REFlem{lem:SoundnessRecursivePlay:A} (see \REFapp{sec:app:proofs}). Second, 
it holds for all $l\in[0,\lmax-1]$ and $m\in\domp{\plP{l+1}}$ that
  \begin{align}
  \plPdm{l}(m-1)&\in\phisconc{l}{\plPy{l+1}(m-1)}{\plPy{l+1}(m)}\label{equ:localWin:a:D}
  \end{align}
and for $m=\maxk{\plP{l+1}}$ there exists $\nu'\in\ON{Post}^{l+1}(\plPy{l+1}(m))$ s.t.
\begin{align}
   \plPdm{l}(m)&\in\overline{\phisconc{l}{\plPy{l+1}(m)}{\nu'}},\label{equ:localWin:b:D}
\end{align}
\end{subequations}
as proven in \REFlem{lem:localWin} (see \REFapp{sec:app:proofs}).\\
Recall from \REFprop{prop:piInPlay} that $\pi\in\Play{}$, hence \REFprop{prop:playGly_simple} implies $\plPdm{l}(m)\in\Play{l}_{\plPy{l+1}(m)}$ and \eqref{equ:plPdm:a} obviously gives $\plPdm{l}(m)\ll{0,0}=\maxw{\plPdm{l}(m-1)}=\I{l}(m)$ for all $m\in\domp{\plP{l+1}}$. 
As \eqref{equ:localWin:a:D} holds, \eqref{equ:ConcatSpec} implies 
\begin{subequations}
  \begin{equation}\label{proof:thm:SoundnessRecursivePlay:1}
   \ExQ{\xi\in\overline{\Set{\plPdm{l}(m-1)}}}{\xi\in\varphi^{l}_{\plPy{l+1}(m-1)}}.
 \end{equation}
Now consider $m=\maxk{\plP{l+1}}$. As \eqref{equ:localWin:b:D} holds, \eqref{equ:ConcatSpec} implies that either
\begin{align}
 \plPdm{l}(m)\in\overline{\varphi^l_{\plPy{l+1}(m)}}\quad\text{or}\quad
 \ExQ{\xi\in\overline{\Set{\plPdm{l}(m)}}}{\xi\in\varphi^{l}_{\plPy{l+1}(m)}}.\label{proof:thm:SoundnessRecursivePlay:3}
\end{align}
\end{subequations}
Using the definitions of winning from \REFsec{sec:Prelim}, \eqref{proof:thm:SoundnessRecursivePlay:1}-\eqref{proof:thm:SoundnessRecursivePlay:3} imply that
conditions (i)-(ii) for possibly winning HRGs from \REFsec{subsec:DefHierarchicalGames} hold.
To prove condition (iii), observe from \eqref{equ:F:fLk} that $\AllQ{k\in\Nb}{f^L(k)=h^L}$. Furthermore, recall from the definition of $\rhoall$ that $\plPdm{L}(0)=\plP{L}$ and $\maxk{\plPdm{L}}=0$ and therefore $\plg{L}(\kappa^{l}(m)-1)=\plP{L}\ll{0,\kappa^{l}(m)-1}$. Using these observations in \eqref{equ:proof:linit:D}, it follows that \eqref{equ:newcompliant} holds for $\plP{L}$ w.r.t. $h^L$ and $\I{L}(0)$, implying $\play{L}\in\overline{\Compl\Tuple{h^L,\I{L}(0)}}$. As $h^L=\Sol{}{\G^L,\I{L}(0),\varphi^L,\eass^L}$ and $\plP{L}\in\Play{L}$ (from \REFprop{prop:piInPlay} and \REFprop{prop:Play-l}), it follows from \eqref{equ:WeaklyWinningNonAdm} in \REFprop{prop:WeaklyWinning} that $\plP{L}$ is possibly winning for $(\Play{L},\I{L}(0),\varphi^L)$.
\end{proof}

We now prove the main result of this paper, namely that maximal plays $\pi$ calculated by \REFalg{def:F} (finite and infinite) are actually winning for $\Tuple{[\Glall],\I{},\PhiallR}$ if the environment plays admissible and all constructed local plays have a solution, i.e.,
\begin{equation}\label{equ:ass2}
 \AllQ{k\in\dom{\pi},l\in[0,L]}{\neg\UnReal{l}(k)}.
\end{equation}

%
%

%

\begin{theorem}\label{thm:FiniteWinningPlays}
Let $\pi$ be a \emph{maximal and environment admissible} play computed by \eqref{equ:F} s.t. \eqref{equ:ass2} holds and let
$\rhoall$ be its set of local play sequences. Then $\rhoall$ is winning for $\Tuple{[\Glall],\I{},\PhiallR}$.
\end{theorem}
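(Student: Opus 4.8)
The plan is to upgrade the ``possibly winning'' conclusion of \REFthm{thm:SoundnessRecursivePlay} to ``winning'' by exploiting the two additional hypotheses, namely that $\pi$ is environment admissible and that \eqref{equ:ass2} rules out any $\UnReal{l}(k)$. I would first record that condition~(i) of \REFdef{def:HierarchicalGames} already holds without these extra hypotheses: by \eqref{equ:localWin:a:D}, established inside the proof of \REFthm{thm:SoundnessRecursivePlay}, we have $\plPdm{l}(m)\in\phisconc{l}{\plPy{l+1}(m)}{\plPy{l+1}(m+1)}$ for every $m\in\dom{\plP{l+1}}$ with $m<\maxk{\plP{l+1}}$, and the definition \eqref{equ:ConcatSpec} of $\phisconc{l}{\nu}{\nu'}$ exhibits a prefix $\xi\sqsubseteq\plPdm{l}(m)$ with $\xi\in\varphi^l_{\plPy{l+1}(m)}$. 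Since $\plPdm{l}(m)\in\Play{l}_{\plPy{l+1}(m)}$ by \REFprop{prop:playGly_simple} and $\xi|_{[0,0]}\in\I{l}(m)$, this $\xi$ is winning, so condition~(i) is met.

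It therefore remains to establish the strict conditions~(ii) and~(iii), which require full membership of the relevant play in $\varphi$ rather than in $\overline{\varphi}$. I would split on whether $\pi$ is finite or infinite. If $\pi$ is finite, then \REFthm{thm:GotStuck} together with \eqref{equ:ass2} forces $\Done{l}(\maxk{\pi})$ for every $l\in[0,L]$; by the definitions \eqref{equ:F:Done} and \eqref{equ:F:Win} of the predicates this yields $\Win{l}(\maxk{\pi})$, i.e.\ $\plg{l}(\maxk{\pi})\in\varphi^l_{y^{l+1}(\maxk{\pi})}$ for $l<L$ and $\plg{L}(\maxk{\pi})\in\varphi^L$. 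Since $\plg{l}(\maxk{\pi})=\maxw{\plPdm{l}}=\plPdm{l}(\maxk{\plP{l+1}})$ by \eqref{equ:F:LH} and \eqref{equ:plPdm:b}, the last local play at each layer lies in its own local specification, giving condition~(ii) with $\xi=\plPdm{l}(\maxk{\plP{l+1}})$, while the case $l=L$ gives condition~(iii).

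If $\pi$ is infinite, I would first argue that every projection $\plP{l}$ is infinite as well, so that no finite index $m=\maxk{\plP{l+1}}$ exists and condition~(ii) becomes vacuous. The inclusion of timescales from \REFlem{lem:kappallp1} reduces this to showing that $\plP{L}$ is infinite: were $\plP{L}$ finite, its final context would never be left, and since \eqref{equ:ass2} makes every induced local game realizable and $\pi$ is environment admissible, \eqref{equ:WeaklyWinningAdm} would drive each finite-string local specification $\varphi^l_\nu$ to completion, successively triggering $\Done{l}$ down the hierarchy and forcing $\pi$ to be finite, a contradiction. With condition~(ii) vacuous, only condition~(iii) remains: by environment admissibility there is $g^L\in\AdmStrat(\G^L,\I{L}(0),\eass^L)$ with which $\plP{L}$ is compliant, and since $\plP{L}$ is also compliant with $h^L=\Sol{}{\G^L,\I{L}(0),\varphi^L,\eass^L}$ by \eqref{equ:proof:linit:D} and $f^L=h^L$, \eqref{equ:WeaklyWinningAdm} shows $h^L\in\WinStrat(\G^L,\I{L}(0),\varphi^L,g^L)$. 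As $\plP{L}$ is infinite, the defining property \eqref{equ:WinStrat} of a winning strategy admits no proper extension and hence yields $\plP{L}\in\WinPlays\Tuple{\G^L,\I{L}(0),\varphi^L}$, i.e.\ condition~(iii).

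The main obstacle I anticipate lies entirely in the infinite case, and specifically in the claim that an infinite $\pi$ has all of its projections infinite. This is the point where the finite-string nature of the local specifications \eqref{equ:PhiallR}, realizability through \eqref{equ:ass2}, and admissibility of the environment must be combined to rule out a ``stalled'' higher layer, so that the only possible source of non-termination is a genuine liveness requirement in $\varphi^L$ that propagates infinitely often through every layer. Making the ``successively triggering $\Done{l}$'' step precise, ideally as a separate lemma asserting that a realizable local game with finite-string objective is completed in finitely many steps against an admissible environment, is the part that carries the real content beyond what \REFthm{thm:SoundnessRecursivePlay} already supplies; the remaining bookkeeping is a direct application of \REFprop{prop:WeaklyWinning} and \REFthm{thm:GotStuck}.
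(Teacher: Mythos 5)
Your proposal is correct and takes essentially the same route as the paper's own proof: condition (i) is inherited from \REFthm{thm:SoundnessRecursivePlay}, the finite case follows from \REFthm{thm:GotStuck} together with the $\Done{l}$/$\Win{l}$ predicates and \eqref{equ:F:LH}, and the infinite case combines the infiniteness of all projections with \REFprop{prop:WeaklyWinning} applied to $h^L$ and the admissible $g^L$ (where you correctly invoke \eqref{equ:WeaklyWinningAdm}, which the paper's text miscites as \eqref{equ:WeaklyWinningNonAdm}). The completion lemma you defer---that realizable local games with finite-string objectives terminate against admissible environments, cascading $\Done{l}$ through the hierarchy---is exactly what the paper establishes in \REFlem{lem:GotStuck:Exist:A} and \REFlem{lem:GotStuck:Exist:B}, which its proof invokes as the equivalences \eqref{equ:proof:FWP:a} and \eqref{equ:proof:FWP:b}.
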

\begin{proof}
In this proof we use the following two observations
\begin{subequations}\label{equ:proof:FWP}\allowdisplaybreaks
 \begin{align}
&\AllQ*{k\inps\dom{\pi},l\inps[0,L]}{\neg\Done{l}(k)}
\Leftrightarrow\BR{|\pi|\eqps\infty}
\Leftrightarrow\AllQ*{l\inps[0,L]}{|\plP{l}|\eqps\infty},\label{equ:proof:FWP:a}\\
&\AllQ*{l\inps[0,L]}{\Done{l}(\maxk{\pi})}
\Leftrightarrow\BR{|\pi|<\infty}
\Leftrightarrow\AllQ*{l\inps[0,L]}{|\plP{l}|<\infty}.\label{equ:proof:FWP:b}
\end{align}
\end{subequations}
where \eqref{equ:proof:FWP:a} was proven in \REFlem{lem:GotStuck:Exist:B} (see \REFapp{sec:app:proofs}), the left side of \eqref{equ:proof:FWP:b} follows from \REFthm{thm:GotStuck} and \eqref{equ:ass2}, and the right side of \eqref{equ:proof:FWP:b} is a simple consequence from the definition of projections in \eqref{equ:projpi}. Hence, we generally have two cases to consider when proving the three conditions for winning HRGs from \REFsec{subsec:DefHierarchicalGames}.\\
First observe that condition (i) is equivalent for winning and possibly winning, no matter whether $\pi$ is finite or not. It therefore follows directly from \REFthm{thm:SoundnessRecursivePlay}. Furthermore, condition (ii) only needs to be proven if $|\plP{l+1}|<\infty$ and recall that for this case 
\REFthm{thm:SoundnessRecursivePlay} shows that $\plPdm{l}(\maxk{\plP{l+1}})$ is \emph{possibly} winning for $(\Play{l}_{\plPy{l+1}(m)},\maxw{\plPdm{l}(m-1)},\allowbreak\varphi^l_{\plPy{l+1}(m)})$ for all $l\in[0,L]$. Now observe from \eqref{equ:proof:FWP:b} that $\Done{l}(\maxk{\pi})$ which implies from \eqref{equ:F:Done} and \eqref{equ:F:Win} that $\plPdm{l}(\maxk{\plP{l+1}})=\plg{l}(\maxk{\pi})\in\varphi^{l}_{\plPy{l+1}(m)}$, where the first equality follows from \eqref{equ:F:LH} and \eqref{def:plPdm}. This obviously implies that \linebreak$\plPdm{l}(\maxk{\plP{l+1}})$ is winning in the above game. For finite plays, this reasoning also proves condition (iii). We therefore assume $|\plP{L}|=\infty$ and recall from the proof of \REFthm{thm:SoundnessRecursivePlay} that \eqref{equ:newcompliant} holds for $\plP{L}$ w.r.t. $h^L$ and $\I{L}(0)$. As $|\plP{L}|=\infty$ we have $\plP{L}\in\Compl\Tuple{h^L,\allowbreak\I{L}(0)}$. As $h^L=\Sol{}{\G^L,\I{L}(0),\varphi^L,\eass^L}$ and $\plP{L}\in\Play{L}$ (from \REFprop{prop:piInPlay} and \REFprop{prop:Play-l}) and $g^L\in\AdmStrat\Tuple{\G^L,\I{L}(0),\varphi^L,\eass^L}$, it follows from \eqref{equ:WeaklyWinningNonAdm} in \REFprop{prop:WeaklyWinning} that $\plP{L}$ is winning for $(\Play{L},\I{L}(0),\varphi^L)$.
\end{proof}

The important difference between \REFthm{thm:SoundnessRecursivePlay} and \REFthm{thm:FiniteWinningPlays} is that environment admissible infinite plays can only be generated if layer $L$ does not win in finite time, i.e., $\neg\Done{L}(k)$ for all $k\in\dom{\plP{L}}$. If the environment does not play admissible, infinite plays can also be generated if $\Done{L}(k)$ is true, as the environment might never \enquote{help} to reach the specification (i.e., does not play admissible) but also never moves to a losing state (i.e., causing the game to be aborted).

\begin{remark}
 It should be noted that the algorithm in \REFalg{def:F} works identically if we use a \enquote{usual} synthesis techniques to calculate 
winning (instead of assume-admissibly winning) strategies in $\Sol{}{\cdot}$ (i.e., a procedure to solve the unconstrained synthesis problem). 
Such a procedure is obtained, e.g., from the methods by \cite{Zielonka,EJ91} for general $\omega$-regular conditions, 
or more specialized procedures for co-safe properties (given by sets of finite-length plays) by
\cite{KV01,Finkbeiner,KW12}. 
This outlines the modularity of our approach w.r.t.\ the actual strategy synthesis routine used in local games.
However, it should be noted that in realistic scenarios, local games will usually not 
have winning strategies against a purely adversarial environment. 
Nevertheless, if the game gets stuck due to such an unrealizable sub-game, 
the result from \REFthm{thm:SoundnessRecursivePlay} still holds, i.e., the specification is not violated in this case.
\end{remark}

\subsection{Comments on Completeness}

Intuitively, the synthesis procedure given in \REFalg{def:F} is complete if, whenever there exists a strategy $\hat{f}$ over 
the game graph $\G$ s.t.\ all plays $\hat{\pi}\in\Play{}$ compliant with $\hat{f}$ induce a 
set of local play sequences that are winning for $\Tuple{[\Glall],\I{},\PhiallR}$ 
(if the environment plays an admissible strategy), then there exists a hierarchical strategy 
$F$ s.t.\ its compliant play $\pi$ generated by \eqref{equ:F} induces projected plays that are also winning 
for $\Tuple{[\Glall],\I{},\PhiallR}$ (if the environment plays an admissible strategy).

Unfortunately, this statement is not true. The major problem arises from the fact that assume-admissibly winning strategies are usually not unique for a particular game. Therefore, using one particular strategy calculated by $\Sol{}{\cdot}$ disregards other winning plays. This has two important consequences. First, a move of the current layer $l$ strategy cannot be revised if the current layer $l-1$ game is not realizable for the corresponding reachability specification, even if there exists a different possibly winning extension in layer $l$. In our robot example, this corresponds to the case where the robot is in a particular room $r$ with two adjacent rooms $r'$ and $r''$, where visiting either of them is winning. Now the current strategy for the room layer deterministically picks room $r'$. If the way towards room $r'$ is blocked by a static obstacle, the game in layer $0$ and context $r$ does not have a solution and the play gets stuck.

This problem also arises in reverse layer interaction, as assume-admissibly winning strategies are only ensured to be winning against 
a \enquote{local} admissible environment strategy. 
They do not consider admissible environment moves in higher layers that might cause specification changes in 
the current layer. 
Hence, the local strategy synthesis might pick a strategy that leads the play 
to a region of the state space which is losing for a different specification that might occur later in this game 
due to such an admissible environment move in a higher layer. In the above example this would correspond to the case that the door to room $r'$ gets closed which is visible to layer $1$ and therefore causes the strategy to request the robot to move to room $r''$, instead. Now assume that the way towards both $r'$ and $r''$ was unblocked initially. Given the specification to reach $r'$ the robot might pick one of two passages which allow to reach $r'$ but the selected one is to narrow for the robot to turn. When the specification changes, the robot cannot turn and approach $r''$, hence the game in layer $0$ and context $r$ does not have a solution and the play gets stuck.
Taking these interactions into account when synthesizing local assume-admissible winning strategies is a 
promising idea for future work to obtain a complete algorithm. 
This would also reduce blocking situations which are caused by this interplay.



Completeness holds in the special case of a trivial environment (which has no choice of moves) and the strategy 
only picks one among the choice of system moves \citep[as e.g.\ in][]{KloetzerBelta_2008,VasileBelta_2014}. 
However, in this case, one can compute a strategy statically using a dynamic programming procedure similar to
context free reachability \citep[see][]{RHS95,AlurLaTorreMadhusudan_2003b}.


%
%
%
%
%
%
%

%

\section{Conclusion}\label{sec:Conclusion}
We have shown in this paper how a large-scale reactive controller synthesis problem with intrinsic \emph{hierarchy} and \emph{locality} can be modeled as a hierarchical two player game over a set of local game graphs  w.r.t.\ to a set of local strategies on multiple, interacting abstraction layers. 
We have proposed a reactive controller synthesis algorithm for such hierarchical games that allows for \emph{dynamic specification changes} 
at each step of the play which is recalculated online in every step. This re-calculation becomes computationally tractable by the proposed decomposition. 
We have shown that our algorithm is sound: whenever the environment meets its assumptions and all dynamically generated local games have a 
solution, the controller synthesis algorithm generates a winning hierarchical play for a given specification. If these assumptions do not hold, the algorithm terminates but the generated finite play does not violate the specification up to this point. 

\appendix

\section{Additional Lemmas}\label{sec:app:proofs}

%

\begin{lemma}\label{lem:kappallp1}
 Let $\play{}$ be a play and $\kappa^l$ its timescale transformation for level $l\in[0,\lmax]$.
 For all $l\in[0,\lmax-1]$, we have 
 \begin{align*}
 &\AllQ{k\in\dom{\kappa^{l+1}}}{
  \ExQ{m\in\dom{\kappa^{l}}}{\kappa^{l+1}(k)=\kappa^{l}(m)}}
  \quad\text{and}\quad\maxw{\kappa^{l}}\geq\maxw{\kappa^{l+1}}.
 \end{align*}
\end{lemma}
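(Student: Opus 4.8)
The plan is to reduce both displayed statements to a single structural fact: the level-$(l+1)$ system abstraction factors through the level-$l$ one, so every time instant at which $y^{l+1}$ changes is also a time instant at which $y^{l}$ changes. First I would record that, by \eqref{equ:Rup:y} together with the system component of \eqref{equ:pil}, we have $y^{l+1}(k)=\Ryup{l+1}(y(k))=\Ry{l+1}\BR{\Ryup{l}(y(k))}=\Ry{l+1}(y^l(k))$ for every $k\in\dom{\pi}$. Consequently $y^l(k)=y^l(k-1)$ implies $y^{l+1}(k)=y^{l+1}(k-1)$; contrapositively, any change of $y^{l+1}$ forces a change of $y^l$ at the same instant.

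Next I would characterise the range of each timescale transformation as its set of change instants. Writing $C^l:=\Set{0}\cup\SetComp{k\in\domp{\pi}}{y^l(k)\neq y^l(k-1)}$, the defining conditions in \eqref{equ:kappa} show that $\kappa^l$ is exactly the strictly increasing enumeration of $C^l$. Indeed, each $\kappa^l(m)$ with $m>0$ lies in $C^l$, since $y^l(\kappa^l(m))\neq y^l(\kappa^l(m-1))=y^l(\kappa^l(m)-1)$, the last equality holding because $\kappa^l(m)-1\in[\kappa^l(m-1),\kappa^l(m))$, an interval on which $y^l$ is constant. Conversely, any $k\in C^l$ with $k>0$ must equal some $\kappa^l(m)$: otherwise $k$ would lie strictly inside an interval $[\kappa^l(m-1),\kappa^l(m))$ or beyond $\maxw{\kappa^l}$, forcing $y^l(k)=y^l(k-1)$ and contradicting $k\in C^l$. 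Hence $\ON{range}(\kappa^l)=C^l$.

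The two claims then follow from the structural fact via $C^{l+1}\subseteq C^l$: clearly $0\in C^l$, and for $k>0$ with $k\in C^{l+1}$ we have $y^{l+1}(k)\neq y^{l+1}(k-1)$, which forces $y^l(k)\neq y^l(k-1)$, i.e.\ $k\in C^l$. For the first claim, any $\kappa^{l+1}(k)\in\ON{range}(\kappa^{l+1})=C^{l+1}\subseteq C^l=\ON{range}(\kappa^l)$, so there is an $m\in\dom{\kappa^l}$ with $\kappa^{l+1}(k)=\kappa^l(m)$. For the second claim, the inclusion $C^{l+1}\subseteq C^l$, combined with the fact that both $\kappa^l$ and $\kappa^{l+1}$ enumerate their ranges in strictly increasing order, makes $\kappa^{l+1}$ a subsequence of $\kappa^l$; its last element (when $\kappa^{l+1}$ is finite) is therefore an element of $C^l$ and hence bounded by $\maxw{\kappa^l}$, whereas if $\kappa^{l+1}$ is infinite then $C^l\supseteq C^{l+1}$ is infinite, $\kappa^l$ is infinite too, and $\maxw{\kappa^l}\geq\maxw{\kappa^{l+1}}$ holds trivially.

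I expect the only genuine care to be in the bookkeeping of the second paragraph, namely verifying that $\kappa^l$ truly \emph{enumerates} $C^l$ and is not merely contained in it, and in the finiteness case-split for $\maxw{\cdot}$ in the final step. The conceptual heart — that the coarser abstraction $\Ry{l+1}$ is applied on top of $\Ryup{l}$, so jumps can only disappear and never appear when moving up a layer — is immediate and carries the whole argument.
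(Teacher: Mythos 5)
Your proof is correct and rests on essentially the same argument as the paper's: both hinge on the factoring $y^{l+1}(k)=\Ry{l+1}\BR{y^l(k)}$ (so a change of $y^{l+1}$ at an instant forces a change of $y^l$ at that same instant) together with the fact that the range of $\kappa^l$ is exactly the set of change instants. The paper phrases this as a short proof by contradiction and leaves the range characterization implicit ("by the definition of $\kappa^l$"), whereas you state and verify $\ON{range}(\kappa^l)=C^l$ explicitly and conclude directly via $C^{l+1}\subseteq C^l$; this is a presentational difference only.
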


\begin{proof}
We prove both statements by contradiction. \\
%
%
Take $k\in\dom{\kappa^{l+1}}$ and define $n=\kappa^{l+1}(k)$. 
Assume that there exists no $m\in\dom{\kappa^{l}}$ s.t.\ 
$n=\kappa^{l}(m)$. 
This implies, by the definition of $\kappa^l$ in \eqref{equ:kappa}, that $y^l(n-1)=y^l(n)$. However, 
this implies (by definition of layers) 
that $y^{l+1}(n-1)=y^{l+1}(n)$, which is a contradiction as the 
assumption $n=\kappa^{l+1}(k)$ implies (from \eqref{equ:kappa}) that $y^{l+1}(n-1)\neq y^{l+1}(n)$.\\
Assume that there exists a $k\in\dom{\kappa^{l+1}}$ s.t.\ 
$k>\maxw{\kappa^{l}}$ and $n=\kappa^{l+1}(k)$. 
As before, this implies $y^l(n-1)=y^l(n)$ and hence $y^{l+1}(n-1)=y^{l+1}(n)$ which 
is a contradiction to the assumption that $k\in\dom{\kappa^{l+1}}$.
\end{proof}

%
%
%
%
\begin{lemma}\label{lem:playGl}
For each game $\G$, each play $\play{}$ of $\G$ and each $l\in[0,\lmax]$, we have
 \begin{align}\label{equ:lem:playGl}
  \AllQ{m\in\dom{\plP{l}},n\in(\kappa^l(m),\kappa^l(m+1)]}{
  \begin{propConjA}
   x^l(n)\in\TrE{l}{}\Tuple{\plPx{l}(m),\plPy{l}(m)}\\
  \plPy{l}(m+1)\in\TrS{l}{}\Tuple{\plPx{l}(m+1),\plPy{l}(m)}
  \end{propConjA}
  }.
 \end{align}
\end{lemma}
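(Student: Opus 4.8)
The plan is to prove the two conjuncts of \eqref{equ:lem:playGl} by unfolding the existential definitions of $\TrE{l}{}$ and $\TrS{l}{}$ in \REFdef{def:Gl} and exhibiting a finite suffix of $\play{}$ as the witness play each definition requires. For $l=0$ the argument is immediate: $\kappa^0$ is the identity, the interval $(\kappa^0(m),\kappa^0(m+1)]$ is the singleton $\{m+1\}$, and \eqref{equ:lem:playGl} reduces verbatim to the play condition \eqref{equ:playp_def:b}. So I fix $l\in[1,\lmax]$ and $m\in\dom{\plP{l}}$. The observation I would record first, directly from \eqref{equ:kappa}, is that the abstract system state is constant across the epoch: $y^l(k)=\plPy{l}(m)$ for all $k\in[\kappa^l(m),\kappa^l(m+1))$, and it jumps at $\kappa^l(m+1)$. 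The main device is the suffix play $\sigma$ with $\sigma(j):=\play{}(\kappa^l(m)+j)$; since any suffix of a play still satisfies \eqref{equ:playp_def:b}, $\sigma\in\Play{}$. Using \eqref{equ:pil} one checks that $\sigma^l(j)=\play{l}(\kappa^l(m)+j)$ for every $j\geq 1$ and that the first nonzero triggering time of $\sigma$ equals $\kappa^l(m+1)-\kappa^l(m)$, so the first epoch of $\sigma$ is an exact copy of the window $(\kappa^l(m),\kappa^l(m+1)]$ of $\play{l}$.

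For the system-transition conjunct I would instantiate \eqref{equ:Gl:TrS} with this $\sigma$. At the end of its first epoch, $j=\kappa^l(m+1)-\kappa^l(m)$, it holds that $\sigma^l(j)=\play{l}(\kappa^l(m+1))=(\plPx{l}(m+1),\plPy{l}(m+1))$, while one step earlier $\sigma^l(j-1)=(x',\plPy{l}(m))$, the system component being $\plPy{l}(m)$ exactly by the constancy noted above and $x':=x^l(\kappa^l(m+1)-1)$ being whatever the play supplies. Because $x'$ is existentially quantified in \eqref{equ:Gl:TrS}, this is enough to conclude $\plPy{l}(m+1)\in\TrS{l}{}(\plPx{l}(m+1),\plPy{l}(m))$. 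The degenerate epoch length $\kappa^l(m+1)-\kappa^l(m)=1$ is handled the same way, reading $\sigma^l(0)$ off the boundary clause of \eqref{equ:pil}.

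For the environment-transition conjunct I would instantiate \eqref{equ:Gl:TrE}. Given $n\in(\kappa^l(m),\kappa^l(m+1)]$, setting $j:=n-\kappa^l(m)$ puts $j$ in $(0,\kappa^l(m+1)-\kappa^l(m)]$ and gives $\sigma^l(j)=\play{l}(n)=(x^l(n),y^l(n))$, which supplies the required intra-epoch hit with $x'=x^l(n)$. The system component of the initial state also matches, $\Ryup{l}(y(\kappa^l(m)))=\plPy{l}(m)$. The one quantity that does not match verbatim is the \emph{environment} component of the initial state, and this is where the real difficulty lies. The boundary clause of \eqref{equ:pil} yields $\sigma^l(0)=(\Rxup{l}(x(\kappa^l(m)),y(\kappa^l(m))),\plPy{l}(m))$, whereas \eqref{equ:Gl:TrE} demands the start $(\plPx{l}(m),\plPy{l}(m))$ with $\plPx{l}(m)=\Rxup{l}(x(\kappa^l(m)),y(\kappa^l(m)-1))$. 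These two abstract environment states are built from the same low-level obstacle set $x(\kappa^l(m))$ but paired with the \emph{current} system state $y(\kappa^l(m))$ in the witness versus the \emph{previous} one $y(\kappa^l(m)-1)$ in the projected play, and they genuinely can differ precisely because the abstract system state jumps at the triggering instant $\kappa^l(m)$.

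I expect this initial-state reconciliation to be the crux of the proof. For $m=0$ it evaporates: $\kappa^l(0)=0$, so I may take $\sigma=\play{}$ and then $\sigma^l(0)=\play{l}(0)=(\plPx{l}(0),\plPy{l}(0))$ on the nose. For $m\geq 1$ the plan is to replace the bare suffix by a witness whose time-$0$ environment abstraction reproduces $\plPx{l}(m)$: one looks for a low-level state $(\hat x,\hat y)$ with $\Ryup{l}(\hat y)=\plPy{l}(m)$, $\Rxup{l}(\hat x,\hat y)=\plPx{l}(m)$, and $x(\kappa^l(m)+1)\in\TrE{0}{}(\hat x,\hat y)$, and prepends it to the tail $\sigma(j)=\play{}(\kappa^l(m)+j)$ for $j\geq 1$. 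Showing that such a reconciling state always exists is the delicate point; it is automatic whenever the environment abstraction $\Rxup{l}$ does not depend on the part of the system state that changes at $\kappa^l(m)$ --- as for the closed-door/obstacle abstraction of \REFsec{sec:Example}, where $\plPx{l}(m)=\Rxup{l}(x(\kappa^l(m)),y(\kappa^l(m)))$ and the bare suffix $\sigma$ is already a valid witness.
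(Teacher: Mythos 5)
You follow essentially the same route as the paper: unfold the existential definitions \eqref{equ:Gl:TrE} and \eqref{equ:Gl:TrS} of \REFdef{def:Gl} and exhibit suffixes of $\play{}$ as the witness plays. Your proof of the system-transition conjunct is complete and correct. The paper instantiates \eqref{equ:Gl:TrS} with a second suffix $\pi''=\pi\ll{\kappa^l(m+1)-1,\maxk{\pi}}$, whose first triggering time is $1$, but your single suffix $\sigma=\pi\ll{\kappa^l(m),\maxk{\pi}}$ does the same job for exactly the reason you give --- the environment component of the pre-state in \eqref{equ:Gl:TrS} is existentially quantified --- and this also disposes of your degenerate case $\kappa^l(m+1)=\kappa^l(m)+1$.

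Where you stall, the environment-transition conjunct for $m\geq 1$, is precisely the step that the paper's proof elides. The paper asserts \enquote{observe from the construction of $\pi'$ that $\pi'^l(\kappa'^l(0))=\pi^l(\kappa^l(m))$}, but by the boundary clause of \eqref{equ:pil} the left-hand side has environment component $\Rxup{l}\left(x(\kappa^l(m)),y(\kappa^l(m))\right)$ while $\plPx{l}(m)=\Rxup{l}\left(x(\kappa^l(m)),y(\kappa^l(m)-1)\right)$; no justification for their equality is offered, and this is exactly the mismatch you isolate. Moreover, your suspicion that closing it needs a hypothesis on $\Rxup{l}$ is right in a strong sense: without one the lemma itself is false, so no choice of reconciling state $(\hat x,\hat y)$ can rescue the argument. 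Take $\X{}=\Set{a}$, $\Y{}=\Y{1}=\Set{0,1}$ with all transitions enabled, $\Ry{1}$ the identity, $\X{1}=\Set{A,B}$, $\Rx{1}(a,0)=A$ and $\Rx{1}(a,1)=B$. For the alternating play $\pi=(a,0)(a,1)(a,0)\cdots$ one gets $\kappa^1=\mathrm{id}$ and $\plP{1}(1)=\pi^1(1)=\Tuple{A,1}$, yet every play $\tilde\pi$ satisfies $\tilde\pi^1(0)\in\Set{\Tuple{A,0},\Tuple{B,1}}$, so $\TrE{1}{}(A,1)=\emptyset$ by \eqref{equ:Gl:TrE} and the first conjunct fails at $m=1$, $n=2$. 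So your proposal is incomplete only where the paper's own proof is unsound; what is missing is a standing assumption of the kind you name (e.g.\ that $\Rxup{l}(x,y)$ is insensitive to the system argument across the triggering instant, as for the obstacle abstraction of \REFsec{sec:Example}), under which your bare suffix $\sigma$ is already a legitimate witness and both your argument and the paper's close. In short, you reproduced the paper's proof where it is sound and located, rather than missed, its actual gap.
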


\begin{proof}
  Pick $l\in[1,\lmax]$ and $m\in\dom{\plP{l}}$ s.t. $m<\maxk{\kappa^l}$ and $\pi'=\pi\ll{\kappa^l(m),\maxk{\pi}}$ and $\pi''=\pi\ll{\kappa^l(m+1)-1,\maxk{\pi}}$.  Observe that $\pi',\pi''\in\Play{}$ by definition and we denote by $\kappa'^l$ and $\kappa''^l$ their respective timescale transformations defined via \eqref{equ:kappa}. Observe that $m<\maxk{\kappa^l}$ implies $\maxk{\kappa'^l},\maxk{\kappa''^l}>0$. We therefore obviously have $n\in(0,\kappa'^l(1)]$ and observe from the construction of $\pi'$ that
  \begin{align*}
   \pi'^l(\kappa'^l(0))=\pi^l(\kappa^l(m))&=\Tuple{\plPx{l}(m),\plPy{l}(m)}\quad\text{and}\\
   \pi'^l(\kappa'^l(1))=\pi^l(\kappa^l(m+1))&=\Tuple{\plPx{l}(m+1),\plPy{l}(m+1)}.
  \end{align*}
 With this it immediately follows from \eqref{equ:Gl:TrE} that $x^l(n)\in\TrE{l}{}\Tuple{\plPx{l}(m),\plPy{l}(m)}$.
 Observe that $m<\maxk{\kappa^l}$ implies that $\plPy{l}(m)\neq\plPy{l}(m+1)$. It furthermore follows from \eqref{equ:kappa} that $y^l(\kappa^l(m+1)-1)=\plPy{l}(m)$. Using these observations we have
 \begin{align*}
   \pi''^l(\kappa''^l(1)-1)=\play{l}(\kappa^l(m+1)-1)=&\Tuple{x^l(\kappa''^l(1)-1),\plPy{l}(m)}\\
  \text{and}\quad\pi''^l(\kappa''^l(1))=\play{l}(\kappa^l(m+1))=&\Tuple{\plPx{l}(m+1),\plPy{l}(m+1)}.
 \end{align*}
 With this it immediately follows from \eqref{equ:Gl:TrS} that $\plPy{l}(m+1)\in\TrS{l}{}\BR{\plPx{l}(m+1),\plPy{l}(m)}$.
\end{proof}

\begin{lemma}\label{lem:playGly}
Let $[\Glall]$ be a set of LGGs and $\Play{l}_{y}$ the set of plays in $\G^l_{y}$. Furthermore, let $\play{}\in\Play{}$ and $\rhoall$ its induced set of local projected play sequences. Then
it holds
for all $l\in[0,\lmax-1]$ and $m\in\domp{\plP{l+1}}$
that 
 \begin{subequations}
\begin{align}
& 
 \begin{propConjA}
    \AllQ{k\in[\kappa^{l+1}_l(m\mips1),\kappa^{l+1}_l(m))}{\plPy{l}(k)\in\Yli{l}{\plPy{l+1}(m-1)}}\\
    \plPy{l}(\kappa^{l+1}_l(m))\in\INTERSECT{\Yla{l}{\plPy{l+1}(m-1)}}{\Yli{l}{\plPy{l+1}(m)}}\\
  \plPdm{l}(m-1)\in\Play{l}_{\plPy{l+1}(m-1)}
 \end{propConjA}\label{equ:playGly}\\
 \intertext{and for all $l\in[0,\lmax-1]$ that}
  & 
 \begin{propConjA}
    \AllQ{k\in[\kappa^{l+1}_l(\maxk{\plP{l+1}}),\maxk{\plP{l}}]}{\plPy{l}(k)\in\Yli{l}{\maxw{\plPy{l+1}}}}\\
  \maxw{\plPdm{l}}\in\Play{l}_{\maxw{\plPy{l+1}}}
 \end{propConjA}.\label{equ:playGly:end}
\end{align}
 \end{subequations}
\end{lemma}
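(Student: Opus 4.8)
The plan is to establish the three conjuncts of \eqref{equ:playGly} in turn, each feeding the next, and then to obtain the tail statement \eqref{equ:playGly:end} by repeating the argument on the final segment. Fix $l$, $m$ and abbreviate $\nu:=\plPy{l+1}(m-1)$ and $\nu':=\plPy{l+1}(m)$. The whole argument rests on two identities. From the composition rule \eqref{equ:Rup:y} one gets $\Ry{l+1}(y^l(k))=y^{l+1}(k)$ for every $k$ (since $y^{l+1}(k)=\Ryup{l+1}(y(k))=\Ry{l+1}(\Ryup{l}(y(k)))$), which together with the projection rule \eqref{equ:projpi}, $\plPy{l}(k)=y^l(\kappa^l(k))$, yields $\Ry{l+1}(\plPy{l}(k))=y^{l+1}(\kappa^l(k))$. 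Second, by \eqref{equ:kappallp1} and \REFlem{lem:kappallp1} the selection $\kappa^{l+1}_l$ is strictly increasing and satisfies $\kappa^l(\kappa^{l+1}_l(m))=\kappa^{l+1}(m)$, while the defining property \eqref{equ:kappa} of the timescale transformation makes $y^{l+1}$ constant, equal to $\nu$, on the interval $[\kappa^{l+1}(m-1),\kappa^{l+1}(m))$ and forces $\nu\neq\nu'$.

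For the first conjunct, fix $k\in[\kappa^{l+1}_l(m-1),\kappa^{l+1}_l(m))$; monotonicity of $\kappa^l$ gives $\kappa^l(k)\in[\kappa^{l+1}(m-1),\kappa^{l+1}(m))$, so $\Ry{l+1}(\plPy{l}(k))=y^{l+1}(\kappa^l(k))=\nu$, i.e.\ $\plPy{l}(k)\in\Yli{l}{\nu}$ by \eqref{equ:Gly:Yli}. The $\Yli{l}{\nu'}$-half of the second conjunct follows identically by evaluating the identity at $k=\kappa^{l+1}_l(m)$, giving $\Ry{l+1}(\plPy{l}(\kappa^{l+1}_l(m)))=y^{l+1}(\kappa^{l+1}(m))=\nu'$. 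For the $\Yla{l}{\nu}$-half I would verify the two requirements of \eqref{equ:Yla}: that $\nu'$ is a one-step successor context of $\nu$ with $\nu'\neq\nu$, and that there is a witnessing transition. The former comes from \REFprop{prop:Play-l}, which makes $\plP{l+1}$ a play in $\G^{l+1}$, so $\nu'\in\TrS{l+1}{}(\plPx{l+1}(m),\nu)$, combined with $\nu\neq\nu'$ above. The latter comes from \REFprop{prop:Play-l} applied to $\plP{l}$: the play condition \eqref{equ:playp_def:b} gives $\plPy{l}(\kappa^{l+1}_l(m))\in\TrS{l}{}(\plPx{l}(\kappa^{l+1}_l(m)),\plPy{l}(\kappa^{l+1}_l(m)-1))$, whose source lies in $\Yli{l}{\nu}$ by the first conjunct, and \REFass{ass:Xbinlnu} then permits replacing the environment state by its restriction $\rx{l}{\nu}(\plPx{l}(\kappa^{l+1}_l(m)))\in\X{l}_\nu$, which is exactly the witness demanded by \eqref{equ:Yla}.

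For the third conjunct I would check that $\plPdm{l}(m-1)$, which by \eqref{equ:plPdm:a} is $\plPd{l}$ restricted to $[\kappa^{l+1}_l(m-1),\kappa^{l+1}_l(m)]$, satisfies the play condition \eqref{equ:playp_def:b} over $\G^l_\nu$. The first two conjuncts already place every system state of the segment in $\Y{l}_\nu=\Yli{l}{\nu}\cup\Yla{l}{\nu}$ (interior states in $\Yli{l}{\nu}$, the terminal state in $\Yla{l}{\nu}$), so the local transition maps are defined along it. For the moves themselves, the interval computation above shows that the restricting context $y^{l+1}(\kappa^l(j)-1)$ used in \eqref{equ:plPd} equals $\nu$ at every internal index $j$, hence $\plPxd{l}(j)=\rx{l}{\nu}(\plPx{l}(j))$ there; the environment moves of $\plP{l}$ then become $\TrE{l}{\nu}$-moves by \eqref{equ:Gly:TrE} and the system moves become $\TrS{l}{\nu}$-moves by \eqref{equ:Gly:TrS} (the target state being in $\Y{l}_\nu$, as just shown). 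The tail claim \eqref{equ:playGly:end} is proved identically, now invoking the final clause of \eqref{equ:kappa}, by which $y^{l+1}$ stays fixed at $\maxw{\plPy{l+1}}$ for all times beyond the last layer-$(l+1)$ event; consequently the terminal segment never enters a post-region and lies entirely in $\Yli{l}{\maxw{\plPy{l+1}}}$.

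The step I expect to be most delicate is the boundary bookkeeping in the last argument: matching $y^{l+1}(\kappa^l(j)-1)$ to $\nu$ at precisely the internal indices, and handling the overlap state $\plPy{l}(\kappa^{l+1}_l(m))$ shared by consecutive local plays, whose environment is restricted (per the ``context being left'' convention built into \eqref{equ:pld}) to the outgoing context $\nu$ while it simultaneously serves as the entry state of the next local game in context $\nu'$. Confirming that this single state plays both roles consistently---as a member of $\Yla{l}{\nu}$ and of $\Yli{l}{\nu'}$---is where the definitions must be lined up most carefully.
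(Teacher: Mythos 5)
Your proposal is correct and takes essentially the same route as the paper's own proof: the $\Yli{l}{\nu}$- and $\Yli{l}{\nu'}$-memberships via the composed abstraction functions and the constancy of $y^{l+1}$ between triggering instants, the $\Yla{l}{\nu}$-membership via \REFprop{prop:Play-l} together with \REFass{ass:Xbinlnu}, the play property via \eqref{equ:Gly:Tr}, and \eqref{equ:playGly:end} as a simplified repetition of the same argument. If anything, you are slightly more careful than the paper, which never explicitly checks the $\ON{Post}$-requirement in \eqref{equ:Yla} (you verify it via $\plP{l+1}$ being a play in $\G^{l+1}$) and silently passes over the overlap-state restriction convention that you flag as the delicate boundary case.
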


\begin{proof}
As the proof of \eqref{equ:playGly:end} is a simplified version of the proof for \eqref{equ:playGly}, we only give the latter.
 We fix $l\in[0,\lmax-1]$, $m\in\dom{\kappa^{l+1}}$ and $k\in[\kappa^{l+1}_l(m\mips1),\kappa^{l+1}_l(m))$ and prove all lines of the statement separately. To simplify notation we use $\nu:=\plPy{l+1}(m-1)$ and $\nu':=\plPy{l+1}(m)$.\\
 \begin{inparaitem}[$\blacktriangleright$]
  \item Pick $r:=\kappa^{l}(k)$ and $r':=\kappa^{l+1}(m)$ and observe that $r\in[\kappa^{l+1}(m-1),\kappa^{l+1}(m))$.
  With this choice, \eqref{equ:kappa}, \eqref{equ:projpi} and \eqref{equ:layers} imply
    \begin{subequations}
  \begin{align}
   &y^{l+1}(r)=\nu\neq\nu'=y^{l+1}(r')\label{equ:proof:playGlySound:00},\\
 &y^{l+1}(r)=\Ry{l+1}(y^{l}(r))~\text{and}~y^{l+1}(r')=\Ry{l+1}(y^{l}(r')).\label{equ:proof:playGlySound:0}
\end{align}
\end{subequations}
Substituting $y^{l}(r)=\plPy{l}(k)$ and $y^{l}(r')=\plPy{l}(\kappa^{l+1}_l(m))$ in \eqref{equ:proof:playGlySound:0} and using \eqref{equ:Gly:Yli} gives
   \begin{align}\label{equ:proof:playGlySound:1}
    \plPy{l}(k)\in\Yli{l}{\nu},\quad\text{and}\quad
   \plPy{l}(\kappa^{l+1}_l(m))\in\Yli{l}{\nu'},
  \end{align}
  where the left side of \eqref{equ:proof:playGlySound:1} proves the first line of \eqref{equ:playGly}.\\
  \item Recall from \REFprop{prop:piInPlay} that $\plP{l}\in\Play{l}$.
  Using \REFdef{def:Gl} this implies that
   \begin{subequations}
  \begin{align}\label{equ:proof:playGlySound:2}
   \plPx{l}(k+1)\in\TrE{l}{}\BR{\plPx{l}(k),\plPy{l}(k)}\quad\text{and}\quad
   \plPy{l}(k+1)\in\TrS{l}{}\BR{\plPx{l}(k+1),\plPy{l}(k)}.
  \end{align}
  Using the left side of \eqref{equ:proof:playGlySound:1} and \REFass{ass:Xbinlnu}, \eqref{equ:proof:playGlySound:2} implies
  \begin{align}\label{equ:proof:playGlySound:2c}
   \plPy{l}(k+1)\in\TrS{l}{}\BR{\rx{l}{\nu}(\plPx{l}(k+1)),\plPy{l}(k)}=\TrS{l}{}\BR{\plPxd{l}(k+1),\plPy{l}(k)}.
  \end{align}
   As $\plPxd{l}(k+1)=\rx{l}{\nu}(\plPx{l}(k+1))\in\X{l}_{\nu}$ (from \eqref{equ:Gly:X}) it follows from \eqref{equ:proof:playGlySound:2c} and \eqref{equ:Yla} that 
  \begin{equation}\label{equ:proof:playGlySound:2d}
   \plPy{l}(\kappa^{l+1}_l(m))\in\Yla{l}{\nu}.
  \end{equation}
   \end{subequations}
  Combining \eqref{equ:proof:playGlySound:2d} with the right side of \eqref{equ:proof:playGlySound:1} proves the second line of \eqref{equ:playGly}.\\
  \item Using \eqref{equ:proof:playGlySound:1}, \eqref{equ:proof:playGlySound:2d}, \eqref{equ:Gly:X} and \eqref{equ:proof:playGlySound:2} in \eqref{equ:Gly:Tr} implies that 
  \begin{align}\label{equ:proof:playGlySound:3}
   \plPxd{l}(k+1)\in\TrE{l}{\nu}\BR{\plPxd{l}(k),\plPy{l}(k)}~\text{and}~
   \plPy{l}(k+1)\in\TrS{l}{\nu}\BR{\plPxd{l}(k+1),\plPy{l}(k)},
  \end{align}
  hence, the third line of \eqref{equ:playGly} holds.
 \end{inparaitem}
\end{proof}

\begin{lemma}\label{lem:GotStuck:1}
Let $\pi$ be a \emph{maximal} play computed by \eqref{equ:F}. Then it holds for all $l\in[0,L-1]$ and $k\in\dom{\pi}$ that 
 \begin{equation}\label{proof:thm:GotStuck:1}
  \propAequ{
  \begin{propConjA}\neg\UnReal{l}(k)\\\neg\GotStuck{l+1}(k)\end{propConjA}}
  {(\plg{l}(k),\plxd{l}(k+1))\in\dom{f^{l}(k)}}
 \end{equation}
 if $\neg\Done{l+1}(k)$.
\end{lemma}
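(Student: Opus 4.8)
The plan is to prove the biconditional by unfolding the purely propositional definitions of the four predicates in \eqref{equ:F:Done}, \eqref{equ:F:GotStuck}, and \eqref{equ:F:UnReal}, together with the three-way case split defining $f^l(k)$ in \eqref{equ:F:flk}, all under the standing hypothesis $\neg\Done{l+1}(k)$ and the fact that $l\in[0,L-1]$.

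First I would record the key simplification: the hypothesis $\neg\Done{l+1}(k)$ together with $l<L$ forces $\neg\Done{l}(k)$. Indeed, for $l\in[0,L-1]$ the first conjunct of $\Done{l}(k)$ in \eqref{equ:F:Done} is $\propDisj*{l=L}{\Done{l+1}(k)}$, which reduces to $\Done{l+1}(k)$ and is therefore false; hence the whole conjunction defining $\Done{l}(k)$ is false. Consequently, by \eqref{equ:F:GotStuck}, the predicate $\GotStuck{l}(k)$ collapses to its second conjunct, i.e.\ $\GotStuck{l}(k)\Leftrightarrow (\plg{l}(k),\plxd{l}(k+1))\notin\dom{f^l(k)}$, and so $\neg\GotStuck{l}(k)\Leftrightarrow (\plg{l}(k),\plxd{l}(k+1))\in\dom{f^l(k)}$.

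Next I would split on whether $\GotStuck{l+1}(k)$ holds, matching the case distinction in \eqref{equ:F:flk}. If $\GotStuck{l+1}(k)$ holds, then $f^l(k)=\emptyset$, so the right-hand side of the claimed equivalence fails; on the left-hand side the conjunct $\neg\GotStuck{l+1}(k)$ fails as well, so both sides are false and the equivalence holds trivially. If instead $\neg\GotStuck{l+1}(k)$ holds, then (still under $\neg\Done{l+1}(k)$) the definition \eqref{equ:F:flk} selects the third branch, giving $f^l(k)=f^l_{\nu\nu'^{l+1}}(k)$; the left-hand conjunction reduces to $\neg\UnReal{l}(k)$, and by \eqref{equ:F:UnReal} with $l<L$ we have $\UnReal{l}(k)\Leftrightarrow\GotStuck{l}(k)$ because its first conjunct $\neg\GotStuck{l+1}(k)$ is true. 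Combining this with the collapsed form of $\GotStuck{l}(k)$ from the previous paragraph yields $\neg\UnReal{l}(k)\Leftrightarrow (\plg{l}(k),\plxd{l}(k+1))\in\dom{f^l(k)}$, which is exactly the asserted equivalence.

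The argument is entirely propositional once these unfoldings are in place, so I expect no hard computation. The one step to get right, and the point on which the backward direction rests, is the propagation $\neg\Done{l+1}(k)\Rightarrow\neg\Done{l}(k)$: this is what removes the $\Done{l}(k)$ disjunct from $\neg\GotStuck{l}(k)$ and makes $\GotStuck{l}(k)$ equivalent to plain non-membership in $\dom{f^l(k)}$. Without it, $\neg\GotStuck{l}(k)$ could a priori hold via $\Done{l}(k)$ while $(\plg{l}(k),\plxd{l}(k+1))$ still lay outside $\dom{f^l(k)}$, breaking the equivalence; so I would state this implication explicitly before the case split.
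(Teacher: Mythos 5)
Your proof is correct and takes essentially the same route as the paper's: both arguments rest on the observation that $\neg\Done{l+1}(k)$ together with $l<L$ forces $\neg\Done{l}(k)$ (so that $\GotStuck{l}(k)$ collapses to plain non-membership of $(\plg{l}(k),\plxd{l}(k+1))$ in $\dom{f^{l}(k)}$), combined with the first branch of \eqref{equ:F:flk} to handle $\GotStuck{l+1}(k)$ and the $l<L$ case of \eqref{equ:F:UnReal}. The only difference is organizational --- you case-split on $\GotStuck{l+1}(k)$ where the paper proves the two implications separately --- and this is immaterial.
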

\begin{proof}
 \begin{inparaitem}[$*$]
  \item[\enquote{$\Rightarrow$}] The left side of \eqref{proof:thm:GotStuck:1} and \eqref{equ:F:UnReal} implies $\neg\GotStuck{l}(k)$ and $\neg\Done{l+1}(k)$ implies $\neg\Done{l}(k)$ from \eqref{equ:F:Done}. Using both observations in \eqref{equ:F:GotStuck} implies $(\plg{l}(k),\plxd{l}(k+1))\in\dom{f^{l}(k)}$.\\
\item[\enquote{$\Leftarrow$}] The right side of \eqref{proof:thm:GotStuck:1} implies $f^l(k)\neq\emptyset$. Therefore, it follows from \eqref{equ:F:flk} that $\neg\GotStuck{l+1}(k)$ and (as $\neg\Done{l}(k)$) from \eqref{equ:F:GotStuck} $\neg\GotStuck{l}(k)$. Using both observations in \eqref{equ:F:UnReal} also gives $\neg\UnReal{l}(k)$.
\end{inparaitem}
\end{proof}

\begin{lemma}\label{lem:GotStuck:UnRealCanNotWait}
  Let $\pi$ be a \emph{maximal} play computed by \eqref{equ:F}. Then it holds for all $k\in\dom{\pi}$ that
\begin{equation}\label{equ:GotStuck:UnRealCanNotWait}
 \propAequ{
\ExQ*{l\in[0,L]}{\UnReal{l}(k)}\\
 }
 {\GotStuck{0}(k)}.
\end{equation}
\end{lemma}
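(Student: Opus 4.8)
The plan is to prove the equivalence by first isolating a single downward \emph{propagation} fact and then reading off both directions from it, together with the definitions \eqref{equ:F:UnReal}, \eqref{equ:F:Done}, and \eqref{equ:F:GotStuck}. The fact I would establish is: for every $l\in[0,L-1]$, if $\GotStuck{l+1}(k)$ holds then $\GotStuck{l}(k)$ holds. To see this, observe that $\GotStuck{l+1}(k)$ activates the first case of \eqref{equ:F:flk}, so $f^{l}(k)=\emptyset$ and hence $(\plg{l}(k),\plxd{l}(k+1))\notin\dom{f^{l}(k)}$ holds trivially. Moreover, $\GotStuck{l+1}(k)$ carries $\neg\Done{l+1}(k)$ as its first conjunct by \eqref{equ:F:GotStuck}; since for $l<L$ the predicate $\Done{l}(k)$ requires $\Done{l+1}(k)$ by \eqref{equ:F:Done}, we obtain $\neg\Done{l}(k)$. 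Both conjuncts of \eqref{equ:F:GotStuck} at layer $l$ are then satisfied, giving $\GotStuck{l}(k)$. (This is essentially the content already packaged in \REFlem{lem:GotStuck:1}, which I could also invoke directly.)

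For the direction $\exists l.\,\UnReal{l}(k)\Rightarrow\GotStuck{0}(k)$, suppose $\UnReal{l}(k)$ holds for some $l$. Inspecting \eqref{equ:F:UnReal} in both of its cases ($l=L$ and $l<L$) shows that $\UnReal{l}(k)$ always entails $\GotStuck{l}(k)$. I would then apply the propagation fact repeatedly, descending from layer $l$ through $l-1,\ldots,1$ to conclude $\GotStuck{0}(k)$.

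For the converse $\GotStuck{0}(k)\Rightarrow\exists l.\,\UnReal{l}(k)$, consider the set $\set{l\in[0,L]\mid\GotStuck{l}(k)}$. It is nonempty, since it contains $0$, and bounded above by $L$, so it has a largest element $l^{*}$. If $l^{*}=L$, then $\GotStuck{L}(k)$ and the first case of \eqref{equ:F:UnReal} give $\UnReal{L}(k)$. If $l^{*}<L$, maximality of $l^{*}$ yields $\neg\GotStuck{l^{*}+1}(k)$ while $\GotStuck{l^{*}}(k)$ holds, so the second case of \eqref{equ:F:UnReal} gives $\UnReal{l^{*}}(k)$. In either case some layer satisfies $\UnReal{l}(k)$, completing the equivalence.

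The step I expect to be most delicate is the $\neg\Done{l}(k)$ part of the propagation fact: it is tempting to read the three predicates in isolation, but their mutual recursion across layers must be tracked carefully, using precisely that $\GotStuck{l+1}$ supplies $\neg\Done{l+1}$ and that $\Done{l}$ inherits from $\Done{l+1}$ for $l<L$. The boundary case $l=L$, where $\Done{L}$ no longer refers to a higher layer, is exactly what forces the case split in \eqref{equ:F:UnReal}; handling it via the largest-stuck-layer $l^{*}$ keeps the converse direction a clean extremal argument rather than an induction that might stall at the top.
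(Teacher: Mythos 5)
Your proof is correct. The forward direction — propagating $\GotStuck{l+1}(k)\Rightarrow\GotStuck{l}(k)$ downward via the first case of \eqref{equ:F:flk} (so $f^l(k)=\emptyset$ and the domain condition holds vacuously) together with the inheritance $\neg\Done{l+1}(k)\Rightarrow\neg\Done{l}(k)$ for $l<L$ — is exactly the paper's argument. Where you genuinely diverge is the converse. The paper climbs upward iteratively, invoking \REFlem{lem:GotStuck:1} at each layer: $\GotStuck{0}(k)$ falsifies the right-hand side of \eqref{proof:thm:GotStuck:1}, giving $\UnReal{0}(k)\vee\GotStuck{1}(k)$, and the climb terminates because $\GotStuck{L}(k)$ is equivalent to $\UnReal{L}(k)$. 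You instead take the largest stuck layer $l^{*}$ and read $\UnReal{l^{*}}(k)$ directly off the two cases of \eqref{equ:F:UnReal}. Your route buys two things: it is purely definitional, needing nothing beyond \eqref{equ:F:UnReal} itself, and it sidesteps the side hypothesis $\neg\Done{l+1}(k)$ under which \REFlem{lem:GotStuck:1} is stated — a hypothesis the paper's converse step never explicitly discharges. The mathematical content of the two converses is the same (both identify the topmost stuck layer, where the layer above is not stuck), but your extremal packaging makes the argument self-contained and slightly tighter than the paper's detour through \REFlem{lem:GotStuck:1}.
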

\begin{proof}
 \begin{inparaitem}[$\bullet$]
\item[\enquote{$\Rightarrow$}:]
Pick $l$ s.t. $\UnReal{l}(k)$ and observe that this implies $\GotStuck{l}(k)$ (from \eqref{equ:F:UnReal}) and hence $\neg\Done{l}(k)$ (from \eqref{equ:F:GotStuck}).
Using the first line of \eqref{equ:F:flk} this implies $f^{l-1}(k)=\emptyset$. As $\neg\Done{l}(k)$ also implies $\neg\Done{l-1}(k)$ from \eqref{equ:F:Done} it follows from \eqref{equ:F:GotStuck} that $\GotStuck{l-1}(k)$ is true (i.e., $\propImp{\GotStuck{l}(k)}{\GotStuck{l-1}(k)}$). Applying this reasoning repetitively we eventually obtain $\GotStuck{0}(k)$.\\
\item[\enquote{$\Leftarrow$}:] 
Using  \eqref{equ:F:UnReal}, $\GotStuck{0}(k)$ implies that the right side of \eqref{proof:thm:GotStuck:1} in \REFlem{lem:GotStuck:1} is false. Hence, either $\UnReal{0}(k)$ or $\GotStuck{1}(k)$ is true. If \UnReal{0} is true the statement is proven. We therefore assume that $\GotStuck{1}(k)$ is true. We can reuse the same reasoning to either eventually get \UnReal{l} for some $l\in[0,L]$ (what proves the statement) or reach $\GotStuck{L}(k)$. However, it follows from \eqref{equ:F:UnReal} that the latter is equivalent to \UnReal{L}, what proves the statement.
\end{inparaitem}
\end{proof}


\begin{lemma}\label{lem:CondReach}
  Let $\pi=\Tuple{x,y}\in\Play{l}_\nu$ for some $\nu\in\Y{l+1}$ s.t. $y(0)\in\Yli{l}{\nu}$ and $\psi^l_\nu(\nu')$ with $\nu'\in\Y{l+1}, ~\nu\neq\nu'$ as in \eqref{equ:psi}.
  Then it holds that 
  \begin{align}\label{equ:CondReach}
   \propAequ{
    \pi\in\psi^l_\nu(\nu')
   }{
   \begin{propConjA}
    \maxk{\pi}<\infty\\
    \AllQ{k<\maxk{\pi}}{y(k)\in\Yli{l}{\nu}}\\
    \maxw{y}\in\INTERSECT{\Yla{l}{\nu}}{\Yli{l}{\nu'}}
   \end{propConjA}}.
  \end{align}
\end{lemma}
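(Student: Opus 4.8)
The plan is to prove the biconditional by directly unfolding the definition of $\psi^l_\nu(\nu')$ from \eqref{equ:psi} in the relevant case $\nu\neq\nu'$ and matching its constituent conditions against the three lines on the right-hand side of \eqref{equ:CondReach}. Writing $\pi=\Tuple{x,y}$, membership $\pi\in\psi^l_\nu(\nu')$ unfolds into the conjunction of (a)~$\pi\in\Tuple{\X{l}_\nu\times\Y{l}_{\nu}}^*$, (b)~$\pi\in\Play{l}_\nu$, and (c)~$\maxw{y}\in\INTERSECT{\Yla{l}{\nu}}{\Yli{l}{\nu'}}$. Since (b) is a standing hypothesis of the lemma it can be discharged immediately; and because every play in $\Play{l}_\nu$ already lies in $\Tuple{\X{l}_\nu\times\Y{l}_{\nu}}^\infty$ by definition, condition (a) then reduces to the single requirement that $\pi$ be finite, i.e.\ to the first line $\maxk{\pi}<\infty$. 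Condition (c) is verbatim the third line. Hence the only real content of the equivalence is the second line, namely that every non-terminal system state $y(k)$, $k<\maxk{\pi}$, lies in the interior $\Yli{l}{\nu}$.

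The main step, and the one carrying the genuine argument, will be to show that this interior condition is forced automatically by $\pi$ being a play of the local game graph. Here I would appeal to \REFdef{def:Gly}: the local transition maps $\TrS{l}{\nu}$ and $\TrE{l}{\nu}$ have domain $\X{l}_\nu\times\Yli{l}{\nu}$, \emph{not} $\X{l}_\nu\times\Y{l}_{\nu}$. Consequently, applying the play condition \eqref{equ:playp_def:b} to $\pi\in\Play{l}_\nu$, for each $k\in\domp{\pi}$ the requirement $y(k)\in\TrS{l}{\nu}\Tuple{x(k),y(k-1)}$ can hold only if $\Tuple{x(k),y(k-1)}$ lies in the domain of $\TrS{l}{\nu}$, which forces $y(k-1)\in\Yli{l}{\nu}$. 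Ranging $k$ over $\domp{\pi}=\Set{1,\ldots,\maxk{\pi}}$ then yields $y(j)\in\Yli{l}{\nu}$ for all $j<\maxk{\pi}$, exactly the second line. The hypothesis $y(0)\in\Yli{l}{\nu}$ is what covers the degenerate single-state case $\maxk{\pi}=0$, where no transition step is available to constrain the initial state. This establishes the ``$\Rightarrow$'' direction.

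For the converse I would assume the three right-hand-side lines and reassemble (a) and (c): finiteness gives the $(\cdot)^*$ part of (a); the non-terminal states lie in $\Yli{l}{\nu}\subseteq\Y{l}_{\nu}$ by line two; the terminal state $\maxw{y}$ lies in $\Yla{l}{\nu}\subseteq\Y{l}_{\nu}$ by line three; and all environment states lie in $\X{l}_\nu$ because $\pi\in\Play{l}_\nu\subseteq\Tuple{\X{l}_\nu\times\Y{l}_{\nu}}^\infty$. Together with hypothesis (b) and line three (which is (c)), this gives $\pi\in\psi^l_\nu(\nu')$. I expect the whole argument to be essentially definition-chasing, with the only subtlety being the domain-restriction observation of the second paragraph: it is precisely the mechanism by which reaching a post-state in $\Yla{l}{\nu}$ — a set disjoint from $\Yli{l}{\nu}$, since its elements map under $\Ry{l+1}$ to contexts distinct from $\nu$ — terminates the local play, so that such a state can occur only as the last symbol.
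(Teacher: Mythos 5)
Your proof is correct and follows essentially the same route as the paper's: both directions reduce to definition-chasing on \eqref{equ:psi}, with the only substantive step being line two of \eqref{equ:CondReach}, which both arguments derive from the fact that $\TrS{l}{\nu}$ (per \REFdef{def:Gly}) is only defined on $\X{l}_{\nu}\times\Yli{l}{\nu}$, so states of $\Yla{l}{\nu}$ admit no outgoing transitions. The paper phrases this as a contradiction (a non-final state in $\Yla{l}{\nu}$ would have to be final), while you argue directly that each transition forces its source state into $\Yli{l}{\nu}$ --- the same observation in contrapositive form.
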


\begin{proof}
 \begin{inparaitem}
  \item[\enquote{$\Leftarrow$}]
  \eqrefR{equ:CondReach}{1} and \eqrefR{equ:CondReach}{3} immediately imply that $\pi\in\psi^l_\nu(\nu')$ (from the first line of \eqref{equ:psi}).
  \item[\enquote{$\Rightarrow$}]
  \eqrefR{equ:CondReach}{2} is the only non-obvious conclusion from \eqref{equ:psi}.
  Recall that $\pi\in\Play{l}_\nu$ and $y(0)\in\Yli{l}{\nu}$. Therefore it holds for all $r\leq\maxk{\pi}$ that $y^{l}_{\nu}(r)\in\UNION{\Yli{l}{\nu}}{\Yla{l}{\nu}}$. Now assume that there exists $r'<\maxk{\pi}$ s.t. $y(r')\in\Yla{l}{\nu}$.
  Using \eqref{equ:Gly:Y} this would imply $y(r')\notin\Yli{l}{\nu}$ and therefore from \eqref{equ:Gly:TrS} there exist no $\tilde{x},\tilde{y}$ s.t. $\tilde{y}\in\TrS{l}{\nu}\BR{\tilde{x},y(r')}$, implying $r'=\maxk{\pi}$ which is a contradiction to the assumption.
 \end{inparaitem}
\end{proof}

\begin{lemma}\label{lem:DoneOnlyEnding}
   Let $\pi$ be a  play computed by \eqref{equ:F} up to length $\maxk{\pi}$. Then it holds for all $l\in[1,L]$ and $k<\kappa^{l}(\maxk{\plP{l}})$ that $\neg\Done{l}$. 
\end{lemma}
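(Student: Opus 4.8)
The plan is to prove the contrapositive in three moves: a ``freezing'' fact stating that $\Done{l}(k)$ forbids a layer-$l$ event at time $k+1$, a monotonicity fact stating that $\Done{l}$ is absorbing, and an anchoring step at the last event using the transition identity \eqref{equ:proof:linit:D}. The backbone observation I would use everywhere is that $\Done{l}(k)$ is upward closed in the layer index: by \eqref{equ:F:Done} the predicate $\Done{l+1}(k)$ is a conjunct of $\Done{l}(k)$ for $l<L$, so $\Done{l}(k)$ entails $\Done{l'}(k)$ for every $l'\in[l,L]$, and $\Done{l+1}(k)$ entails $\neg\GotStuck{l+1}(k)$ by \eqref{equ:F:GotStuck}. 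Consequently, whenever $\Done{l}(k)$ holds we are in the second case of \eqref{equ:F:flk}, giving $f^l(k)=h^l(k)$ (and $f^L(k)=h^L$ by \eqref{equ:F:fLk}), while \eqref{equ:F:Done} asserts $(\plg{l}(k),\plxd{l}(k+1))\notin\dom{h^l(k)}$.

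First I would establish the freezing fact: if $\Done{l}(k)$ then $y^l(k+1)=y^l(k)$, i.e. $k+1$ is not a layer-$l$ triggering time. Suppose instead that $k+1=\kappa^l(m)$ for some $m\in\domp{\plP{l}}$. By \eqref{equ:proof:linit:D} the projected state obeys $\plPy{l}(m)=f^l(\kappa^l(m)-1)(\plg{l}(\kappa^l(m)-1),\plPxd{l}(m))$; using $\plPxd{l}(m)=\plxd{l}(\kappa^l(m))=\plxd{l}(k+1)$ (from \eqref{equ:plPd} and \eqref{equ:projpi}) together with $f^l(k)=h^l(k)$, this right-hand side equals $h^l(k)(\plg{l}(k),\plxd{l}(k+1))$, which is undefined because $(\plg{l}(k),\plxd{l}(k+1))\notin\dom{h^l(k)}$. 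This contradicts the existence of $\plPy{l}(m)$, so no layer-$l$ event can occur at $k+1$.

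Next I would prove, by downward induction on $l$ from $L$ to $1$, the monotonicity statement $\Done{l}(k)\Rightarrow\Done{l}(k+1)$ (for $k+1\in\dom\pi$). Assume $\Done{l}(k)$. The freezing fact at layers $l$ and $l+1$ (the latter applicable since $\Done{l+1}(k)$ holds by upward closure, vacuous when $l=L$) shows that neither the layer-$l$ state nor the layer-$(l{+}1)$ context changes from $k$ to $k+1$; hence $\plg{l}(k+1)=\plg{l}(k)$ (the local history grows only at layer-$l$ events, by \eqref{equ:F:LH}) and $h^l(k+1)=h^l(k)$ (since \eqref{equ:F:hlk:DC} recomputes only on a context change). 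Therefore $\Win{l}(k+1)$ holds iff $\Win{l}(k)$ does via \eqref{equ:F:Win}, and $\Done{l+1}(k+1)$ holds by the induction hypothesis (or $l=L$). The remaining conjunct $(\plg{l}(k+1),\plxd{l}(k+2))\notin\dom{h^l(k+1)}$ is the one genuinely delicate point: since each local specification is a set of \emph{finite} strings and $\Win{l}(k)$ already witnesses $\plg{l}(k)\in\varphi^l$, the possibly-winning strategy $h^l$ of \REFdef{def:SolAA} has completed its co-safe objective and offers no continuation, so ``no move available'' is stable under the change of sensed environment state from $\plxd{l}(k+1)$ to $\plxd{l}(k+2)$. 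Combining these conjuncts yields $\Done{l}(k+1)$.

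Finally I would anchor and conclude. If $\maxk{\plP{l}}=0$ the asserted range $k<\kappa^l(\maxk{\plP{l}})=0$ is empty; otherwise $\maxk{\plP{l}}\geq 1$, and applying \eqref{equ:proof:linit:D} at $m=\maxk{\plP{l}}$ the genuine layer-$l$ event at time $\kappa^l(\maxk{\plP{l}})$ forces $h^l$ to be defined at the corresponding argument at time $\kappa^l(\maxk{\plP{l}})-1$; the argument of the freezing fact then gives $\neg\Done{l}(\kappa^l(\maxk{\plP{l}})-1)$. The contrapositive of monotonicity propagates $\neg\Done{l}$ backwards to every $k\le\kappa^l(\maxk{\plP{l}})-1$, that is, to every $k<\kappa^l(\maxk{\plP{l}})$, which is the claim. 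I expect the main obstacle to be precisely the co-safe domain-persistence step inside the monotonicity argument, as it is the only place requiring that the absence of an available strategy move survives a change of the environment input; the remainder is bookkeeping anchored by the transition identity \eqref{equ:proof:linit:D}.
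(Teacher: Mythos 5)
Your route (contrapositive via a freezing fact, absorption of $\Done{l}$, and anchoring at the last layer-$l$ event) is genuinely different from the paper's, which is a direct one-step contradiction: assuming $\Done{l}(k)$ with $k<\kappa^{l}(\maxk{\plP{l}})$, upward closure in \eqref{equ:F:Done} gives $\Done{l'}(k)$ for all $l'\in[l,L]$, hence $f^l(k)=h^l(k)$ by \eqref{equ:F:fLk}--\eqref{equ:F:flk}, and the third conjunct of \eqref{equ:F:Done} leaves this strategy without a move, so the play can never leave the current layer-$l$ context, contradicting the existence of a later layer-$l$ event. The structural flaw in your version is circularity. Both your freezing fact and your anchoring step rest on the transition identity \eqref{equ:proof:linit:D}, which is precisely \REFlem{lem:SoundnessRecursivePlay:A}; but the inductive step of the paper's proof of that lemma explicitly invokes \REFlem{lem:DoneOnlyEnding} (it needs $\neg\Done{l+1}(k')$ to place $f^{l}(k')$ in the third case of \eqref{equ:F:flk}). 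So the identity you treat as given is only available downstream of the very statement you are proving, at exactly the layers $l\in[1,L]$ in question. A non-circular proof has to argue, as the paper does, directly from the case structure of \eqref{equ:F:flk}--\eqref{equ:F:GotStuck} together with the fact that possibly-winning strategies (\REFdef{def:SolAA}) for specifications $\varphi^l_\nu\subseteq\Tuple{\X{l}_\nu\times\Yli{l}{\nu}}^*$ never propose a move leaving the context; it cannot appeal to compliance of projected plays with layer strategies.

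The step you yourself flagged as delicate is also genuinely broken as argued. You justify $\Done{l}(k)\Rightarrow\Done{l}(k+1)$ by claiming that, the objective being co-safe and already met ($\Win{l}(k)$), the strategy $h^l$ ``offers no continuation'' regardless of the sensed environment state. But by \REFdef{def:SolAA} the domain of $h^l(k)$ at the history $\plg{l}(k)$ is input-dependent: $(\plg{l}(k),x')\in\dom{h^l(k)}$ iff the underlying assume-admissible strategy's move extends $\plg{l}(k)$ inside $\overline{\varphi^l_\nu}$, and nothing prevents a set of finite strings from containing $\plg{l}(k)$ together with proper extensions of it. Hence $(\plg{l}(k),\plxd{l}(k+1))\notin\dom{h^l(k)}$ need not transfer to the new input $\plxd{l}(k+2)$, and $\Done{l}$ could in principle toggle off after an environment move; your downward induction gives no argument excluding this, so the absorption claim is unsupported. (To be fair, this persistence issue is also compressed into the paper's single phrase that the play ``would not be able to leave the current context''; but the paper's argument does not rest on the false claim that $\dom{h^l(k)}$ is insensitive to the environment input, whereas yours does.)
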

\begin{proof}
 We prove the statement by contradiction. Pick any $l\in[1,L]$ and $k<\kappa^{l}(\maxk{\plP{l}})$ and assume that $\Done{l}$ is true.
 First observe that this implies $\Done{l'}$ for all $l'\in[l,L]$. With this it follows from \eqref{equ:F:flk} that $f^l(k)=h^l$. Now using \eqref{equ:F:Done} this implies that $(\plg{l}(k),\plxd{l}(k+1))\notin \dom{f^l(k)}$ and therefore the play would not be able to leave the current context. This is a contradiction to the assumption that $k<\kappa^{l}(\maxk{\plP{l}})$, what proves the statement. 
\end{proof}

\begin{lemma}\label{lem:SoundnessRecursivePlay:A}
  Let $\pi$ be a  play computed by \eqref{equ:F} up to length $\maxk{\pi}$. Then it holds for all $l\in[0,\lmax]$ and $m\in\domp{\plP{l}}$ that 
\begin{subequations}\label{equ:proof:linit}
  \begin{align}
 \plPxd{l}(m)&\in\TrE{l}{\plPy{l}(m-1)}(\plPxd{l}(m-1),\plPy{l}(m-1))~\text{and}\label{equ:proof:linit:a}\\
 \plPy{l}(m)&=f^{l}(\kappa^{l}(m)-1)(\plg{l}(\kappa^{l}(m)-1),\plPxd{l}(m)).\label{equ:proof:linit:b}
\end{align}
\end{subequations}

\end{lemma}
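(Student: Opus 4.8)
The plan is to establish the two components \eqref{equ:proof:linit:a} and \eqref{equ:proof:linit:b} by largely separate arguments: the first is a transition-consistency fact that follows from the earlier structural results, while the second is a strategy-compliance fact proven by induction on the layer $l$.

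For the environment transition \eqref{equ:proof:linit:a}, I would first note that $\plP{l}\in\Play{l}$ by \REFprop{prop:Play-l}, so the abstract transition $\plPx{l}(m)\in\TrE{l}{}(\plPx{l}(m-1),\plPy{l}(m-1))$ holds in the undecomposed AGG $\G^l$. Let $\nu\in\Y{l+1}$ be the level-$(l+1)$ context active at these steps. By \REFlem{lem:kappallp1} consecutive level-$l$ events inside one level-$(l+1)$ context share that context, so the restriction $\rx{l}{\nu}$ used to form $\plPxd{l}$ is the same at $m-1$ and $m$; and \REFlem{lem:playGly} gives $\plPy{l}(m-1)\in\Yli{l}{\nu}$. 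Applying the defining implication \eqref{equ:Gly:TrE} to the abstract transition then yields $\plPxd{l}(m)\in\TrE{l}{\nu}(\plPxd{l}(m-1),\plPy{l}(m-1))$, which is the claim. The only bookkeeping is the identity $\plPxd{l}(m)=\rx{l}{\nu}(\plPx{l}(m))$, obtained by unfolding \eqref{equ:plPd} with $\plPx{l}(m)=x^l(\kappa^l(m))$.

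For \eqref{equ:proof:linit:b} I would first record the identity $\plPxd{l}(m)=\plxd{l}(\kappa^l(m))$ (comparing \eqref{equ:plPd} with \eqref{equ:F:plxd}). Setting $k=\kappa^l(m)-1$, the claim reduces to $\plPy{l}(m)=\nu'^l(k)$, where $\nu'^l(k)=f^l(k)(\plg{l}(k),\plxd{l}(k+1))$ is the context requested by the layer-$l$ strategy one step before the event, as in \eqref{equ:F:nu}. I prove $y^l(\kappa^l(m))=\nu'^l(\kappa^l(m)-1)$ by induction on $l$. The base case $l=0$ is immediate, since $\kappa^0$ is the identity and $\nu'^0(m-1)=f^0(m-1)(\plg{0}(m-1),\plxd{0}(m))=y(m)$ by \eqref{equ:F:yk}. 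For the inductive step, \REFlem{lem:kappallp1} shows that $k=\kappa^l(m)$ is also a layer-$(l-1)$ event, so $y^{l-1}(k)$ is a genuine system move; the induction hypothesis at layer $l-1$ gives $y^{l-1}(k)=\nu'^{l-1}(k-1)$, and by \REFlem{lem:DoneOnlyEnding} the active strategy is the ``else'' branch $f^{l-1}_{\nu\nu'^l}(k-1)$ of \eqref{equ:F:flk}, a possibly-winning strategy for $\phisconc{l-1}{\nu}{\nu'^l(k-1)}$ with $\nu=y^l(k-1)$. The top layer $l=L$ is handled by the same step using $f^L=h^L$ from \eqref{equ:F:fLk} and $\rx{L}{}=\mathrm{id}$.

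The crux, and the step I expect to be the main obstacle, is to argue that this move \emph{must} exit context $\nu$ into the \emph{requested} context $\nu'^l(k-1)$, i.e.\ $y^{l-1}(k)\in\Yli{l-1}{\nu'^l(k-1)}$, whence $y^l(k)=\Ry{l}(y^{l-1}(k))=\nu'^l(k-1)$. I would combine three observations: (i) since $k$ is a layer-$l$ event, $y^l$ changes at $k$, so $y^{l-1}(k)\notin\Yli{l-1}{\nu}$ and hence $y^{l-1}(k)\in\Yla{l-1}{\nu}$ by \eqref{equ:Gly:Y}; (ii) states in $\Yla{l-1}{\nu}$ are terminal in $\G^{l-1}_\nu$, as the local system map is defined only on $\Yli{l-1}{\nu}$, so a compliant play reaching such a state cannot be extended; (iii) by the possibly-winning construction \eqref{equ:Sol} of \REFdef{def:SolAA} the strategy only permits moves keeping the play in $\overline{\phisconc{l-1}{\nu}{\nu'^l(k-1)}}$, and by the definition \eqref{equ:psi} of $\psi$ a non-extendable play lies in that prefix closure only if its last state belongs to $\INTERSECT{\Yla{l-1}{\nu}}{\Yli{l-1}{\nu'^l(k-1)}}$. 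Since the move $y^{l-1}(k)$ was actually played (the algorithm did not get stuck at $k-1$: the move $y(k)$ exists, so $f^0(k-1)\neq\emptyset$, and \eqref{equ:F:flk} forbids a non-stuck lowest layer above a stuck higher one, giving $f^{l-1}(k-1)\neq\emptyset$), it is such a permitted move, forcing $y^{l-1}(k)\in\Yli{l-1}{\nu'^l(k-1)}$ and closing the induction.
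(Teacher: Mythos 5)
Your proposal is correct and its skeleton coincides with the paper's own proof: \eqref{equ:proof:linit:a} comes from the structural facts proven in \REFlem{lem:playGly} (you re-derive them inline from \REFprop{prop:Play-l} and \eqref{equ:Gly:TrE}), and \eqref{equ:proof:linit:b} is proven by induction on the layer, the inductive step forcing the context actually entered to coincide with the requested one by combining membership of the local history in (the prefix closure of) the combined specification \eqref{equ:ConcatSpec} with the fact that exit states lie in $\Yla{l-1}{\nu}$ and are terminal in $\G^{l-1}_{\nu}$, so that prefix-closure membership upgrades to actual membership --- the paper packages this last step as \REFlem{lem:CondReach}. The one genuinely different move is how the prefix-closure membership is certified: the paper proves that the whole local segment is compliant with one fixed strategy, which needs the interval-constancy bookkeeping \eqref{equ:proof:C3}--\eqref{equ:proof:C5} (choosing $s$ so that neither the context nor the requested target changes, so that \eqref{equ:F:flk:DC} returns a single $\Sol{}{\cdot}$ object), and then invokes \eqref{equ:WeaklyWinningNonAdm} of \REFprop{prop:WeaklyWinning}; you instead apply the guard of \eqref{equ:Sol} exactly once, at the step immediately preceding the layer-$l$ event, observing that this guard tests the entire extended history against the prefix closure, so one application suffices. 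This shortcut is sound and eliminates the most delicate bookkeeping in the paper's argument; what the paper's longer route buys is the full-compliance statement and the intermediate fact \eqref{equ:proof:E}, which are reused verbatim in \REFlem{lem:localWin}, so its extra work is not wasted there. Two minor remarks: in your observation (i), before \eqref{equ:Gly:Y} can place the exit state in $\Yla{l-1}{\nu}$ you need it to lie in $\Y{l-1}_{\nu}$ at all, which follows from the induction hypothesis (the move is the output of a system strategy over $\G^{l-1}_{\nu}$) or from \REFlem{lem:playGly} as in the paper; and in part (a) your claim that the restriction $\rx{l}{\nu}$ agrees at indices $m-1$ and $m$ fails when the $(m-1)$st layer-$l$ event is itself a layer-$(l+1)$ context change, since \eqref{equ:plPd} then restricts that state to the context being left --- but this boundary case is glossed over in the statement and proof of \REFlem{lem:playGly} as well, so it is an imprecision inherited from the paper rather than a gap specific to your argument.
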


\begin{proof}
\begin{subequations}
Recall that $\play{}\in\Play{}$ from \REFprop{prop:piInPlay}. Therefore, \eqref{equ:proof:linit:a} follows directly from \eqref{equ:proof:playGlySound:3} in \REFlem{lem:playGly} (see \REFapp{sec:app:proofs}). We show \eqref{equ:proof:linit:b} by induction.\\
\begin{inparaitem}[$\blacktriangleright$]
 \item $l=0$:\\
 Recall that \eqref{equ:proof:l0} holds for $l=0$. As $\kappa^0$ is the identity map, the second line in \eqref{equ:proof:l0} and \eqref{equ:proof:linit} is equivalent for $l=0$.\\ 
 \item $l\fun l+1$:\\
 \begin{inparaitem}[$\bullet$]
  \item Pick $m\in\domp{\plP{l+1}}$, $k=\kappa^{l+1}(m)$, $\nu:=\plPy{l+1}(m-1)$ and $\nu':=\plPy{l+1}(m)$ 
  and recall from \REFlem{lem:kappallp1} that there exists $r\in\Nb$ s.t. $r=\kappa^{l+1}_l(m)$ and $\kappa^l(r)=k$, implying (from \eqref{equ:kappa}) that
 \begin{align}
  &y^{l+1}(k\mips1)=\nu\neq\nu'=y^{l+1}(k),~y^{l}(k)=\plPy{l}(r),\label{equ:proof:A}\\
  &\plPxd{l+1}(m)=\plxd{l+1}(k),~\text{and}~\plPxd{l}(r)=\plxd{l}(k).\notag
 \end{align}\\
 Now it follows from \REFlem{lem:playGly} that $\plPdm{l}(m-1)\in\Play{l}_{\nu}$, hence \REFlem{lem:CondReach} holds for $\plPdm{l}(m-1)$. Now using the first and second line of \eqref{equ:playGly} in \REFlem{lem:CondReach} immediately implies 
 \begin{equation}\label{equ:proof:0A}
  \plPdm{l}(m-1)\in\psi^{l}_{\nu}(\nu').
 \end{equation}
 \item We now show that $\plPdm{l}(m-1)$ is compliant with $f^l(k-1)$:\\
 As \eqref{equ:proof:linit} holds for $l$ 
we know that for all $k'<\maxk{\pi}$ we have  $\neg\GotStuck{l+1}(k')$ (from \REFlem{lem:GotStuck:1}). As additionally $\neg\Done{l+1}(k')$ from \REFlem{lem:DoneOnlyEnding}, \eqref{equ:F:flk} gives that 
 \begin{align}
  &f^{l}(k')=f^{l}_{y^{l+1}(k')\nu^{l+1}}(k')~\label{equ:proof:Ba}\\
   &\text{s.t.}~ \nu^{l+1}(k')=f^{l+1}(k')(\plg{l+1}(k'), \plxd{l+1}(k'+1)).\label{equ:proof:B}
 \end{align}
Now pick $s$ s.t. 
 \begin{equation}\label{equ:proof:C3}
  \AllQ{k',k''\in[\kappa^{l}(r-s),\kappa^{l}(r)-1]}{y^{l+1}(k')= y^{l+1}(k'')\wedge\nu^{l+1}(k')=\nu^{l+1}(k'')},
 \end{equation}
with $\nu^{l+1}$ as in \eqref{equ:proof:B} and observe that this implies $\kappa^{l}(r-s)\in[\kappa^{l+1}(m-1),\kappa^{l+1}(m))$. Using \eqref{equ:proof:C3} in \eqref{equ:F:flk:DC} therefore gives for all $k'\in[\kappa^{l}(r-s),\kappa^{l}(r)-1]$ that
\begin{equation}\label{equ:proof:C4}
 f^{l}(k')=f^{l}(k-1)=\Sol{}{\G^l_{\nu},\Set{\plg{l}(\kappa^{l}(r-s))},\phisconc{l}{\nu}{\nu'^{l+1}(k-1)}}.
\end{equation}
As \eqref{equ:proof:linit} holds for $l$ we can therefore substitute $f^{l}(\kappa^{l}(r)-1)$ in \eqref{equ:proof:linit:b} by $f^{l}(k-1)$ and obtain for all $r'\in[r-s,r-1]$ that 
\begin{equation}\label{equ:proof:C5}
 \plPy{l}(r')=f^{l}(k-1)(\plg{l}(\kappa^{l}(r')-1),\plPxd{l}(r')).
\end{equation}
It furthermore follows from the construction of $\plg{l}$ in \eqref{equ:F:LH} and $\plPdm{l}$ in \eqref{def:plPdm} that
 \begin{align}\label{equ:proof:C2}
  \plPdm{l}(m-1)&=\plg{l}(\kappa^{l}(r-s))\sconc \plPd{l}\ll{r-s+1,r}
 \end{align}
Now pick $n=\maxk{\plg{l}(\kappa^{l}(r-s))}$ and observe that $\plPdm{l}(m-1)\ll{0,n}\in\Set{\plg{l}(\kappa^{l}(r-s))}$. Additionally using \eqref{equ:proof:C5} therefore implies that $\plPdm{l}(m-1)\in\Compl\Tuple{f^l(k-1),\Set{\plg{l}(\kappa^{l}(r-s))}}$ (from \eqref{equ:newcompliant}). Using \eqref{equ:proof:C4} and \eqref{equ:WeaklyWinningNonAdm} from \REFprop{prop:WeaklyWinning} it follows that
\begin{equation}\label{equ:proof:D}
\plPdm{l}(m-1)\in\overline{\phisconc{l}{\nu}{\nu^{l+1}(k-1)}}.
\end{equation}
 %
\item It remains to shown that $\nu^{l+1}(k-1)=\nu'(=\plPy{l+1}(m)=y^{l+1}(k))$:\\
Using the fact that $y^l(k)\in\Yla{l}{\nu}$ it follows from \REFlem{lem:CondReach} and \eqref{equ:ConcatSpec} that \eqref{equ:proof:0A} and \eqref{equ:proof:D} can only be satisfied simultaneously if
\begin{equation}\label{equ:proof:E}
\plPdm{l}(m-1)\in\phisconc{l}{\nu}{\nu^{l+1}(k-1)}\quad\text{and}\quad\nu^{l+1}(k-1)=\nu'.
\end{equation}
With this observation \eqref{equ:proof:linit:b} immediately follows for $l+1$ from \eqref{equ:proof:B} as $\nu^{l+1}(k-1)=\plPy{l+1}(m)$.
%
\end{inparaitem}
\end{inparaitem}
\end{subequations}
\end{proof}

\begin{lemma}\label{lem:localWin}
 Let $\pi$ be a  play computed by \eqref{equ:F} up to length $\maxk{\pi}$ and $\rhoall$ its induced set of local projected play sequences. Then
it holds for all $l\in[0,\lmax-1]$ and $m\in\domp{\plP{l+1}}$ that 
\begin{subequations}\label{equ:localWin}
  \begin{align}
  \plPdm{l}(m-1)&\in\phisconc{l}{\plPy{l+1}(m-1)}{\plPy{l+1}(m)}\label{equ:localWin:a}
  \end{align}
and for $m=\maxk{\plP{l+1}}$ there exists $\nu'\in\ON{Post}^{l+1}(\plPy{l+1}(m))$ s.t.
\begin{align}
   \plPdm{l}(m)&\in\overline{\phisconc{l}{\plPy{l+1}(m)}{\nu'}}.\label{equ:localWin:b}
\end{align}
\end{subequations}
\end{lemma}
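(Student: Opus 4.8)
The plan is to handle the two claims \eqref{equ:localWin:a} and \eqref{equ:localWin:b} separately, reusing the compliance machinery already developed for \REFlem{lem:SoundnessRecursivePlay:A}. For \eqref{equ:localWin:a} I would observe that this is exactly what was established inside the inductive step of \REFlem{lem:SoundnessRecursivePlay:A}: writing $\nu:=\plPy{l+1}(m-1)$ and $\nu':=\plPy{l+1}(m)$, equation \eqref{equ:proof:E} there asserts $\plPdm{l}(m-1)\in\phisconc{l}{\nu}{\nu^{l+1}(k-1)}$ together with $\nu^{l+1}(k-1)=\nu'$, whose combination is precisely \eqref{equ:localWin:a}. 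If a self-contained argument is preferred, one reproves it by combining $\plPdm{l}(m-1)\in\psi^l_{\nu}(\nu')$ — which follows from \REFlem{lem:CondReach} fed with the first two lines of \eqref{equ:playGly} in \REFlem{lem:playGly} — with the fact that $\plPdm{l}(m-1)$ is a finite play compliant with $f^l_{\nu\nu'}=\Sol{}{\G^l_\nu,\cdot,\phisconc{l}{\nu}{\nu'},\eass^l_\nu}$, so that \eqref{equ:WeaklyWinningNonAdm} places it in $\overline{\phisconc{l}{\nu}{\nu'}}$; full membership in $\phisconc{l}{\nu}{\nu'}$ (not merely its closure) then follows from \eqref{equ:ConcatSpec} since the exit state has actually been reached.

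For \eqref{equ:localWin:b} I fix $m=\maxk{\plP{l+1}}$ and $\nu:=\plPy{l+1}(m)$, so that $\plPdm{l}(m)$ is the final, possibly incomplete, local play inside the last context $\nu$; by \eqref{equ:playGly:end} in \REFlem{lem:playGly} it is a finite play in $\Play{l}_\nu$. I would take $\nu'$ to be the goal context $\nu'^{l+1}$ requested by the layer-$(l+1)$ strategy at its last recomputation inside $\nu$, as in \eqref{equ:F:nu}. Since $\nu'^{l+1}=f^{l+1}(\cdot)(\cdots)\in\TrS{l+1}{}(\cdot,\nu)$ and the timescale construction of $\TrS{l+1}{}$ in \REFdef{def:Gl} only admits transitions for which the system state changes at the trigger $\kappa^{l+1}(1)$, we have $\nu'\neq\nu$, hence $\nu'\in\ON{Post}^{l+1}(\nu)$, as required by the statement.

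The core of \eqref{equ:localWin:b} is then a verbatim adaptation of the compliance argument \eqref{equ:proof:C2}--\eqref{equ:proof:D} of \REFlem{lem:SoundnessRecursivePlay:A}: choosing $n$ as the time of the last strategy recomputation inside $\nu$, the prefix of $\plPdm{l}(m)$ up to $n$ equals the initial condition $\Set{\plg{l}(\cdot)}$ while every later move obeys the current strategy, so the whole of $\plPdm{l}(m)$ is a finite play compliant with that strategy. If $\neg\Done{l+1}$ holds, the strategy is $f^l_{\nu\nu'}=\Sol{}{\G^l_\nu,\cdot,\phisconc{l}{\nu}{\nu'},\eass^l_\nu}$ by \eqref{equ:F:flk:DC}, and \eqref{equ:WeaklyWinningNonAdm} directly gives $\plPdm{l}(m)\in\overline{\phisconc{l}{\nu}{\nu'}}$. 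If instead $\Done{l+1}$ has become true, the strategy is $h^l=\Sol{}{\G^l_\nu,\cdot,\varphi^l_\nu,\eass^l_\nu}$ by \eqref{equ:F:hlk:DC}, so \eqref{equ:WeaklyWinningNonAdm} yields $\plPdm{l}(m)\in\overline{\varphi^l_\nu}$, which I embed into $\overline{\phisconc{l}{\nu}{\nu'}}$ by extending the reached interior state (in $\Yli{l}{\nu}$) along a serial continuation to an exit in $\Yla{l}{\nu}\cap\Yli{l}{\nu'}$.

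The main obstacle is exactly this final-segment case and its bookkeeping. Three points need care. First, the goal $\nu'^{l+1}$ may change several times while the layer-$(l+1)$ play rests in $\nu$, so I must argue that the accumulated local history $\plg{l}$ used as initial condition makes the \emph{entire} $\plPdm{l}(m)$ — not just its last stretch — compliant with the final strategy, exactly as in \REFlem{lem:SoundnessRecursivePlay:A}. Second, the $\Done{l+1}$ subcase, where \eqref{equ:WeaklyWinningNonAdm} only delivers membership in $\overline{\varphi^l_\nu}$ and the passage to $\overline{\phisconc{l}{\nu}{\nu'}}$ relies on the reachability of some exit to $\nu'$ (guaranteed by seriality when $\ON{Post}^{l+1}(\nu)\neq\0$; in a genuine dead-end context the downstream use in \REFthm{thm:SoundnessRecursivePlay} only needs $\plPdm{l}(m)\in\overline{\varphi^l_\nu}$ via \eqref{equ:ConcatSpec}, so no real gap arises). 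Third, one must keep the argument within finite plays, so that \eqref{equ:WeaklyWinningNonAdm}, which is stated for finite compliant plays, remains applicable.
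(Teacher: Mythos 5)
Your proposal takes essentially the same route as the paper's own proof: part (a) is discharged by citing \eqref{equ:proof:E} from the proof of \REFlem{lem:SoundnessRecursivePlay:A}, and part (b) is handled by the same case split on $\Done{l+1}(\kappa^{l+1}(m))$, reusing the compliance argument \eqref{equ:proof:B}--\eqref{equ:proof:D} together with \eqref{equ:WeaklyWinningNonAdm} in the $\neg\Done{l+1}$ case, and the strategy $h^l$ from \eqref{equ:F:hlk:DC} with \eqref{equ:WeaklyWinningNonAdm} in the $\Done{l+1}$ case. If anything, your treatment is slightly more careful than the paper's, which justifies $\nu'\in\ON{Post}^{l+1}(\plPy{l+1}(m))$ only implicitly and asserts the inclusion $\overline{\varphi^l_{\plPy{l+1}(m)}}\subseteq\overline{\phisconc{l}{\plPy{l+1}(m)}{\nu'}}$ directly \enquote{from \eqref{equ:ConcatSpec}} without addressing the exit-reachability subtlety that you correctly flag (and correctly observe is harmless for the downstream use in \REFthm{thm:SoundnessRecursivePlay}).
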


\begin{proof}
 \eqref{equ:localWin:a} follows from \eqref{equ:proof:E} in the proof of \REFlem{lem:SoundnessRecursivePlay:A}. We prove \eqref{equ:localWin:b}:\\
 Pick $l\in[0,\lmax-1]$ and $m=\maxk{\plP{l+1}}$ and recall from \REFlem{lem:SoundnessRecursivePlay:A} that \eqref{equ:proof:linit} holds for all $k\in\domp{\plP{l}}$. Therefore $\neg\GotStuck{l+1}(k')$ (from \REFlem{lem:GotStuck:1}) for all $k'<\maxk{\pi}$. Now we have two cases.\\
 \begin{inparaenum}[(i)]
  \item If $\neg\Done{l+1}(\kappa^{l+1}(m))$, \eqref{equ:proof:B} in the proof of \REFlem{lem:SoundnessRecursivePlay:A} holds for $k'\in[\kappa^{l+1}(m),\allowbreak\kappa^{l}(\maxk{\plP{l}})]$. Following exactly the same reasoning as in \eqref{equ:proof:B}-\eqref{equ:proof:D} we obtain 
 \begin{equation*}
  \plPdm{l}(m)\in\overline{\phisconc{l}{\plPy{l+1}(m)}{\nu^{l+1}(k-1)}}
 \end{equation*}
 with $\nu^{l+1}(k-1)$ as in \eqref{equ:proof:B}, implying \eqref{equ:localWin:b}.\\
 \item If $\Done{l+1}(\kappa^{l+1}(m))$, it follows from \eqref{equ:F:flk} and \eqref{equ:F:hlk:DC} that for $k'\in[\kappa^{l+1}(m),\allowbreak\kappa^{l}(\maxk{\plP{l}})]$
 \begin{equation}\label{equ:proof:hl}
  f^l(k')=h^l(\kappa^{l+1}(m))=\Sol{}{\G^l_{\plPy{l+1}(m)},\Set{\plg{l}(\kappa^{l+1}(m))},\varphi^l_{\plPy{l+1}(m)},\eass^l_{\nu}}
 \end{equation}
 and from the construction of $\plg{l}$ and $\plPdm{l}$ in \eqref{equ:F:LH} and \eqref{def:plPdm} that
  $\plg{l}(\kappa^{l+1}(m))=\maxw{\plPdm{l}(m-1)}$. 
By substituting \eqref{equ:proof:hl} in \eqref{equ:proof:linit:b} we therefore obtain $\plPdm{l}(m)\in\Compl(h^l(\kappa^{l+1}(m)),\maxw{\plPdm{l}(m-1)})$ 
(from \eqref{equ:newcompliant}). Using \eqref{equ:proof:C4} and \eqref{equ:WeaklyWinningNonAdm} from \REFprop{prop:WeaklyWinning} it follows that
$\plPdm{l}(m)\in\overline{\varphi^l_{\plPy{l+1}(m)}}$. Now recall from \eqref{equ:ConcatSpec} that $\overline{\varphi^l_{\plPy{l+1}(m)}}\subseteq\overline{\phisconc{l}{\plPy{l+1}(m)}{\nu^{l+1}(k-1)}}$, what proves the statement.
 \end{inparaenum}
\end{proof}


\begin{lemma}\label{lem:GotStuck:Exist:A}
Let $\pi$ be a \emph{maximal and environment admissible} play computed by \eqref{equ:F} s.t. \eqref{equ:ass2} holds. Then it holds that
\begin{align*}
\propImp{
\ExQ*{k\in\dom{\pi},l\in[0,L]}{\Done{l}(k)}
}{\ExQ*{k'\in\dom{\pi},k'\geq k}{\Done{0}(k')}}.
\end{align*}
\end{lemma}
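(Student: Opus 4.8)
The plan is to prove a slightly stronger, constructive statement: under the three standing hypotheses, as soon as any layer is done the play is finite and it terminates precisely with layer $0$ being done. I would argue by a downward induction on the layer index, pushing ``done-ness'' from layer $l$ to layer $0$. The first observation is the monotonicity built into \eqref{equ:F:Done}: since $\Done{l}(k)$ carries the conjunct $\propDisj*{l=L}{\Done{l+1}(k)}$, we get $\Done{l}(k)\Rightarrow\Done{j}(k)$ for every $j\in[l,L]$. Thus the hypothesis lets us assume that layers $l,\ldots,L$ are simultaneously done at time $k$, and for $l=0$ the claim holds with $k'=k$.

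The inductive step asserts: if layers $j,\ldots,L$ are done at some time $k_j\geq k$ with $j\geq 1$, then layers $j-1,\ldots,L$ are done at some $k_{j-1}\geq k_j$. Its first ingredient is \emph{persistence}. By \REFlem{lem:DoneOnlyEnding}, $\Done{j}(k_j)$ forces $k_j\geq\kappa^{j}(\maxk{\plP{j}})$, so the layer-$j$ system state $y^{j}$ has already reached its final value and the local history $\plg{j}$ no longer grows; applied at layer $j{+}1$ the same bound freezes the context $\nu=y^{j+1}$, so by \eqref{equ:F:hlk:DC} the strategy $h^{j}$ is no longer recomputed and $\Win{j}$ persists because $\plg{j}$ stays a fixed element of $\varphi^{j}_\nu$. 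Propagating this from layer $L$ downward to $j$ yields $\Done{j}(k'),\ldots,\Done{L}(k')$ for all $k'\geq k_j$ (the persistence of the domain-exit condition of $h^j$ under varying environment states is the delicate point, discussed below).

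The second ingredient is \emph{progress} at the topmost still-active layer $j-1$. Because $\Done{j}$ holds, the second case of \eqref{equ:F:flk} selects $f^{j-1}=h^{j-1}=\Sol{}{\G^{j-1}_\nu,\cdot,\varphi^{j-1}_\nu,\eass^{j-1}_\nu}$, so layer $j-1$ now plays purely for its local, finite-string task in the frozen context $\nu$. Since \eqref{equ:ass2} excludes every $\UnReal{l}(k)$ and $\pi$ is environment-admissible, \eqref{equ:WeaklyWinningAdm} in \REFprop{prop:WeaklyWinning} makes $h^{j-1}$ winning against the admissible environment; as $\varphi^{j-1}_\nu$ is co-safe, the induced local play enters $\varphi^{j-1}_\nu$ after finitely many layer-$(j-1)$ moves, after which $h^{j-1}$ has no remaining move and $\Done{j-1}$ holds. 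Each such layer-$(j-1)$ move is a request that, via \eqref{equ:psi}--\eqref{equ:ConcatSpec}, installs a reachability goal for layer $j-2$ which is again realizable by \eqref{equ:ass2} and hence, recursively and bottoming out in the actual moves played at layer $0$ through \eqref{equ:F:yk}, achieved in finite real time. Iterating the step from $j=l$ to $j=1$ produces a $k'\geq k$ with $\Done{0}(k')$; maximality together with the absence of $\GotStuck{0}$ (from \eqref{equ:ass2} and \REFlem{lem:GotStuck:UnRealCanNotWait}) makes this $k'=\maxk{\pi}\in\dom{\pi}$.

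The main obstacle is the progress argument of the third paragraph, in two respects. First, I must justify that the possibly-winning strategy for a co-safe objective actually \emph{terminates}, i.e.\ that $(\plg{j-1}(k'),\plxd{j-1}(k'+1))\notin\dom{h^{j-1}(k')}$ holds eventually and \emph{remains} false as $\plxd{j-1}$ varies; this I would read off the construction of $\Sol{}{\cdot}$ in \REFdef{def:SolAA}, using that satisfaction of a finite-string specification is permanent once achieved, which simultaneously supplies the missing persistence step above. Second, the recursive claim that every induced lower-layer reachability game completes in finite time must rest on a well-founded measure to rule out infinite regress; the natural choice combines the induction on $L-l$ with the finite length of the satisfying prefixes guaranteed by \REFlem{lem:localWin}, so that the number of requested moves at each layer is finite and their bottom-up realization terminates.
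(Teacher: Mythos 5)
Your proposal is correct and follows essentially the same route as the paper's own proof: a downward induction in which $\Done{}$ at layer $j$ selects, via the second case of \eqref{equ:F:flk}, the purely local strategy $h^{j-1}$ at layer $j-1$, whose finite-string objective is then won in finite time against an admissible environment (using \eqref{equ:ass2}, the absence of $\GotStuck{}$, and \REFprop{prop:WeaklyWinning}), iterating down to layer $0$. The two \enquote{delicate points} you flag --- persistence of higher-layer done-ness while the layer below completes, and the finite real-time realization of each abstract move --- are genuine, but the paper's proof glosses over them exactly as you do (it simply asserts \enquote{we eventually obtain $\Done{l-1}(k')$} and \enquote{applying this reasoning iteratively}), so your treatment is, if anything, the more explicit one.
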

\begin{proof}
Pick $k\in\dom{\pi},l\in[0,L]$ s.t. $\Done{l}(k)$ and assume $l>0$ as for $l=0$ the statement follows trivially.
Giving $\Done{l}(k)$, \eqref{equ:F:GotStuck} implies $\neg\GotStuck{l}(k)$ and \eqref{equ:F:flk} implies $f^{l-1}(k)=h^{l-1}(k)$.
Giving $\neg\UnReal{l-1}(k)$ and $\neg\GotStuck{l}(k)$, \eqref{equ:F:UnReal} implies $\neg\GotStuck{l-1}(k)$ and therefore (from \eqref{equ:F:GotStuck}) either $\Done{l-1}(k)$ or there exists a next step according to $h^{l-1}(k)$. Assume the latter is true.
Recall from \eqref{equ:F:LH} that $h^{l-1}(k)$ is an assume admissible winning strategy for the game $\Tuple{\G^l_{y^{l}(k)},\Set{\plg{l-1}(\maxw{\kappa^{l-1}})},\varphi^l_{y^{l}(k)}}$ and from \eqref{equ:PhiallR} that $\varphi^l_{y^{l}(k)}$ only contains finite strings. If the environment plays admissible, we therefore eventually obtain $\Done{l-1}(k')$ with $k<k'<\infty$.
Applying this reasoning iteratively, eventually leads to $\Done{0}(k'')$ where the time between $k$ and $k''$ is ensured to be finite.
\end{proof}

\begin{lemma}\label{lem:GotStuck:Exist:B}
Let $\pi$ be a \emph{maximal and environment admissible} play computed by \eqref{equ:F} s.t. \eqref{equ:ass2} holds. 
Then it holds that 
\begin{align*}
&\AllQ*{k\in\dom{\pi},l\in[0,L]}{\neg\Done{l}(k)}
\Leftrightarrow\AllQ*{l\in[0,L]}{|\plP{l}|=\infty}
\Leftrightarrow\BR{|\pi|=\infty}.
\end{align*}
\end{lemma}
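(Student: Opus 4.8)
The plan is to write the three assertions as (A) $\AllQ{k\in\dom{\pi},l\in[0,L]}{\neg\Done{l}(k)}$, (B) $\AllQ{l\in[0,L]}{|\plP{l}|=\infty}$, and (C) $|\pi|=\infty$, and to establish (A)$\Leftrightarrow$(C) together with (B)$\Leftrightarrow$(C). Two structural reductions do most of the bookkeeping. First, because $\kappa^0$ is the identity we have $\plP{0}=\pi$, and by \REFlem{lem:kappallp1} the range of $\kappa^{l+1}$ is contained in that of $\kappa^l$; hence $|\plP{l+1}|=\infty$ implies $|\plP{l}|=\infty$, so (B) is equivalent to the single statement $|\plP{L}|=\infty$ and (B)$\Rightarrow$(C) is immediate by taking $l=0$. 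Second, the conjunct $\Done{l+1}(k)$ appearing in \eqref{equ:F:Done} gives $\Done{l}(k)\Rightarrow\Done{L}(k)$, so $\ExQ{l}{\Done{l}(k)}$ is equivalent to $\Done{L}(k)$ and (A) is equivalent to $\AllQ{k}{\neg\Done{L}(k)}$.

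I would prove (A)$\Leftrightarrow$(C) by routing through $\Done{0}$. If $|\pi|<\infty$ then, since $\pi$ is maximal and \eqref{equ:ass2} falsifies the second disjunct of \REFthm{thm:GotStuck}, that theorem yields $\Done{0}(\maxk{\pi})$, contradicting (A); this is (A)$\Rightarrow$(C). Conversely, if $\Done{l}(k)$ holds for some $k,l$, then \REFlem{lem:GotStuck:Exist:A} gives $k'\geq k$ with $\Done{0}(k')$. Since $\Done{0}(k')$ entails $\Done{1}(k')$, the second case of \eqref{equ:F:flk} sets $f^0(k')=h^0(k')$, while the third conjunct of $\Done{0}(k')$ in \eqref{equ:F:Done} states $(\plg{0}(k'),\plxd{0}(k'+1))\notin\dom{h^0(k')}$; thus \eqref{equ:F:yk} cannot be evaluated at step $k'$ and the play halts, so $|\pi|<\infty$. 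This is (C)$\Rightarrow$(A).

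The direction (B)$\Rightarrow$(A) is then clean: if $|\plP{L}|=\infty$ every finite $k$ precedes the last layer-$L$ event, so \REFlem{lem:DoneOnlyEnding} with $l=L$ gives $\neg\Done{L}(k)$ for all $k$, and by the nesting above $\neg\Done{l}(k)$ for all $l,k$, i.e. (A). It remains to prove (C)$\Rightarrow$(B), and by the first reduction it suffices to show $|\plP{L}|<\infty\Rightarrow|\pi|<\infty$. I would do this by showing that a finite top-layer projection forces $\Done{L}(k)$ at some $k$, after which \REFlem{lem:GotStuck:Exist:A} and the halting argument of the previous paragraph give $|\pi|<\infty$. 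So suppose $|\plP{L}|<\infty$, so that layer $L$ freezes at $\nu^*:=\maxw{\plPy{L}}$ after its last triggering time, and suppose for contradiction that $|\pi|=\infty$. Then (C)$\Rightarrow$(A) gives $\neg\Done{L}(k)$ for all $k$; since \eqref{equ:ass2} makes $\neg\UnReal{L}(k)$ hold and $\UnReal{L}$ coincides with $\GotStuck{L}$ by \eqref{equ:F:UnReal}, \eqref{equ:F:GotStuck} forces $(\plg{L}(k),\plxd{L}(k+1))\in\dom{h^L}$, i.e. $f^L=h^L$ (by \eqref{equ:F:fLk}) keeps proposing a successor, which differs from $\nu^*$ because $\TrS{L}{}$ has no self-loops at triggering times by \REFdef{def:Gl}. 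This standing request installs a reachability objective on the frozen layer-$(L-1)$ game; as $\pi$ is environment admissible and \eqref{equ:ass2} makes that objective realizable, \REFprop{prop:WeaklyWinning} renders the selected strategy winning against the admissible local environment, and \eqref{equ:WinStrat} shows that such a winning strategy for a finite-string objective cannot be met by an infinite compliant play. Hence the exit is reached in finite time, the layer-$L$ state changes, and the freeze is contradicted.

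The main obstacle is precisely this progress step — ruling out an infinite sojourn of the play inside a non-final higher-level context. The delicate point is that while layer $L$ is frozen the induced reachability target, and the layer-$L$ environment state $\plxd{L}$ on which it depends, may keep changing, so finite-time termination of the lower game cannot be read off a single fixed objective. I expect to resolve this by a downward induction on the layers: identify the highest layer $j$ with $|\plP{j}|=\infty$ while $|\plP{j+1}|<\infty$, and peel the frozen contexts one at a time, each time invoking \REFprop{prop:WeaklyWinning} and the finiteness of the reachability and local specifications to force termination, mirroring the iteration already used inside \REFlem{lem:GotStuck:Exist:A}.
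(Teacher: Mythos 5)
Your reductions (via \REFlem{lem:kappallp1} and the nesting $\Done{l}(k)\Rightarrow\Done{L}(k)$ from \eqref{equ:F:Done}) and your proof of the first equivalence are correct and essentially the paper's own first step: \REFthm{thm:GotStuck} together with \eqref{equ:ass2} gives one direction, and \REFlem{lem:GotStuck:Exist:A} plus the observation that $\Done{0}(k')$ forces $f^0(k')=h^0(k')$ to be undefined at the current history gives the other; your version of the latter is in fact more explicit than the paper's. The genuine gap is in your direction (C)$\Rightarrow$(B). What you must exclude is an infinite play whose layer-$(j{+}1)$ projection is finite, and you do this by a liveness claim: a standing exit request must be fulfilled in finite time. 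But you never prove that claim --- you yourself flag that the induced objective $\phisconc{l}{\nu}{\nu'^{l+1}(k)}$ and the higher-layer environment state it depends on may change infinitely often during the sojourn, and you defer the resolution to a planned \enquote{downward induction}. This is exactly the hard case: if the target flip-flops forever, the infinite tail of the local play is not compliant with any single strategy returned by $\Sol{}{\cdot}$; it is a concatenation of finite segments, each compliant with a different strategy for a different objective (cf.\ \eqref{equ:F:flk:DC}), so \REFprop{prop:WeaklyWinning} together with the finite-string argument from \eqref{equ:WinStrat} cannot be applied to the tail as a whole. Neither your sketch nor \REFlem{lem:GotStuck:Exist:A} (which concerns the $\Done{l}$ case, where the relevant objective is the fixed local $\varphi^l_\nu$) bridges this, so the proposal is incomplete at its crux.

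The paper's proof avoids any progress argument. For this direction it reads $|\plP{l}|<\infty$ as saying that the layer-$l$ strategy is blocked at the end of the projection, $(\plg{l}(k),\plxd{l}(k+1))\notin\dom{f^{l}(k)}$ with $k=\maxk{\plP{l}}$, and then derives the contradiction with $|\pi|=\infty$ purely from the blocking machinery: by \eqref{equ:ass2} and \REFlem{lem:GotStuck:1} the blocking yields $\GotStuck{l+1}(k)$, which cascades through \eqref{equ:F:flk} and \eqref{equ:F:GotStuck} down to $\GotStuck{0}(k)$, so the play ends at $k$ --- contradicting its infiniteness. In other words, where you try to show that a frozen context must eventually be exited, the paper argues that a frozen projection can only terminate in a blocked strategy, and a blocked strategy kills the whole play. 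To be fair, the paper's identification \enquote{$|\plP{l}|<\infty$, i.e., blocked} is itself asserted without proof, so you have put your finger on precisely the step the paper treats as immediate; but if you want to complete your argument in the paper's style, the contradiction should be extracted from blocked-versus-never-blocked, not from a termination claim that needs target-stabilization machinery which neither you nor the paper develops.
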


\begin{proof}
We show this proof in two steps.\\
\begin{inparaitem}[$\blacktriangleright$]
\item Show $\propAequ{\AllQ*{k\in\dom{\pi},l\in[0,L]}{\neg\Done{l}(k)}}{\BR{|\pi|=\infty}}$:\\
Using \eqref{equ:ass2} in \eqref{equ:thm:GotStuck} of \REFthm{thm:GotStuck} gives
\begin{align}\label{equ:proof:lem:GotStuck:Exist:1}	
 \propAequ{\ExQ{l\in[0,L]}{\AllQ{k\in\dom{\pi}}{\neg\Done{l}(k)}}}{|\pi|=\infty},
\end{align}
immediately implying the \enquote{$\Rightarrow$} part of the statement. Now we prove the \enquote{$\Leftarrow$} part by contradiction. 
Assume that there exists $l\in[0,L], k\in\dom{\pi}$ s.t. $\Done{l}(k)$. Then \REFlem{lem:GotStuck:Exist:A} implies $\Done{0}(k')$. Using (from \eqref{equ:F:Done}) this implies $\Done{l}(k')$ for all $l\in[0,L]$, which gives a contradiction as the left side of \eqref{equ:proof:lem:GotStuck:Exist:1} holds from $\BR{|\pi|=\infty}$.\\
%
%
 \item Show $\propAequ{\AllQ{l\in[0,L]}{|\plP{l}|=\infty}}{|\pi|=\infty}$:\\
 First observe that \enquote{$\Rightarrow$} trivially holds as $\plP{0}=\pi$. We prove \enquote{$\Leftarrow$} by contradiction.
 Assume there exists $l\in[0,L]$ s.t. $|\plP{l}|<\infty$, i.e., with $k=\maxk{\plP{l}}$ we have $(\plg{l}(k),\plxd{l}(k+1))\notin\dom{f^{l}(k)}$. Now recall from the first part of this proof that $|\pi|=\infty$ implies $\neg\Done{l'}(k)$ for all $l'\in[0,L]$ and \eqref{equ:ass2} implies $\neg\UnReal{l+1}(k)$.  
 Then it follows from \REFlem{lem:GotStuck:1} that $\GotStuck{l+1}(k)$. With this $\GotStuck{l}(k)$ (from \eqref{equ:F:flk}) and therefore eventually $\GotStuck{0}(k)$, which implies $|\pi|<\infty$ with $\maxk{\pi}=k$, which is a contradiction to the assumption.
\end{inparaitem}
\end{proof}
%
%
%
%
%
%



%


\end{document}